\newtheorem{thm}{Theorem}[section]
\newtheorem{conj}[thm]{Conjecture}
\newtheorem{lem}[thm]{Lemma}
\newtheorem{rem}[thm]{Remark}
\newtheorem{prop}[thm]{Proposition}
\newtheorem{defi}[thm]{Definition}
\def\smirnovcras{0985.60090}
\def\kestenwalks{0626.60067}
\def\cardyriva{MR2280251}
\def\verblunsky{0034.36303}
\def\lswust{MR2044671}
\def\ferrandbook{MR0069895}
\def\baxterbook{MR998375}
\def\mccoywubook{mccoywu-book}
\def\grimmettbookrc{grimmett-bookrc}
\def\yang{MR0051740}
\def\kaufmanonsageriv{kaufmanonsageriv}
\def\lswust{MR2044671}
\def\mccreawhipple{MR0002733}
\def\courantetal{54.0486.01}
\def\sle#1{$\mathrm{SLE}\left(#1\right)$}
\def\slee{$\mathrm{SLE}$}
\def\br#1{\left(#1\right)}
\def\Bbr#1{\Big(#1\Big)}
\def\brs#1{\left\{#1\right\}}
\def\abs#1{\left|#1\right|}
\def\norm#1{\left\|#1\right\|}
\def\dist#1{{{\mathrm{dist}}\br{#1}}}
\def\const{{\mathrm{const}}}
\newcommand{\DD}{{\mathbb D}}
\newcommand{\ZZ}{{\mathbb Z}}
\newcommand{\CC}{{\mathbb C}}
\newcommand{\EE}{{\mathbb E}}
\newcommand{\PP}{{\mathbb P}}
\newcommand{\RR}{{\mathbb R}}
\newcommand{\IM}{{\mathrm{Im}}}
\newcommand{\RE}{{\mathrm{Re}}}
\newcommand{\oo}{{o}}
\def\Exp#1{{\EE\left(#1\right)}}
\def\Exps#1#2{{\EE_{#1}\left(#2\right)}}
\def\Prob#1{{\PP\left(#1\right)}}
\def\Proj#1{{\mathrm{Proj}}\left(#1\right)}
\def\HH{{H}}
\def\spin{\sigma}
\def\Om{\Omega}
\def\ga{\gamma}
\def\si{\lambda}
\def\wi{{w}}
\def\num{{n}}
\def\we{{W}}
\def\mesh{\delta}
\def\cara{\stackrel{{Cara}}{\longrightarrow}}
\newcommand\unif{{\rightrightarrows}}
\def\rbvp{(\ref{eq:rbvp})}
\begin{document}

\title{Conformal invariance in random cluster models. \\
I. Holomorphic fermions in the Ising model.}
\author{Stanislav Smirnov}
\email{Stanislav.Smirnov@math.unige.ch}
\date{\today}

\maketitle

\pagestyle{myheadings}
\markboth{Stanislav Smirnov}{Conformal invariance in random cluster models. I.}

\section{Introduction}\label{sec:intro}

It is widely believed that many planar lattice models at the critical temperature
are conformally invariant in the scaling limit.
In particular, 
the Ising model is often cited as a classical example
of conformal invariance which is used in deriving many of its properties.

To the best of our knowledge no mathematical proof of this assertion
has ever been given.
Moreover,  
most of the physics arguments 
concern rectangular domains only (like a plane or a strip) 
or do not take boundary conditions into account.
Thus they give (often unrigorous) justification 
only of the {\it M\"obius invariance} of the scaling limit,
arguably a much weaker property
than full {\it conformal invariance}.
Of course, success of conformal field theory
methods in describing the Ising model provides
some evidence for the conformal invariance,
but it does not offer an explanation or a proof of the latter.

It seems that ours is the first paper, where 
actual {\it conformal invariance} of some observables
for the Ising model at criticality
(in domains with appropriate boundary conditions)
is established.
Our methods are different from those employed before,
and allow us to obtain sharper versions of some of the known results.
Moreover they allow the construction of
conformally invariant observables in domains
with complicated boundary conditions
and on Riemann surfaces.
Ultimately we will construct conformally invariant
scaling limits of interfaces (random cluster boundaries)
and identify them with Schramm's \slee~ curves
and related loop ensembles.
These extensions will be discussed in the sequels \cite{smirnov-fk2,smirnov-fk3}.
Though one can argue whether the scaling limits of interfaces in the Ising model
are of physical relevance,
their identification opens possibility for computation of
correlation functions and other objects of interest in physics.

We consider the Fortuin-Kasteleyn random cluster representation
of the Ising model on the square lattice $\mesh\ZZ^2$
at the critical temperature.
This representation, briefly reviewed below,
studies \emph{random clusters}, which are clusters of the critical percolation 
performed on the Ising \emph{spin clusters} at the critical temperature.
The spin correlations can be easily related to connectivity properties in the new model.
Every configuration can be described by a collection of interfaces
(between random clusters and dual random clusters)
which are disjoint loops that
fill all the edges of the medial lattice.

As a conformally invariant observable we
construct a ``discrete holomorphic fermion''. 
In a simply connected domain $\Om$ with two boundary points $a$ and $b$
we introduce Dobrushin boundary conditions, which
enforce the existence (besides many loops) of an interface running from $a$ to $b$,
see Figure~\ref{fig:loops}.
We show that the expectation that this interface passes through a point $z$
taken with fermionic weight
(i.e. a passage in the same direction but with a $2\pi$ twist has a relative
weight $-1$, whereas a passage in the opposite direction with a
counterclockwise $\pi$ twist has a relative weight $-i$ -- see Figure~\ref{fig:weight})
is a discrete holomorphic function of $z$.
Moreover, as the step of the lattice goes to zero,
this expectation, when appropriately normalized,
converge to a conformally covariant scaling limit, namely $\sqrt{\Phi'}$, where $\Phi$ is the conformal
map of $\Om$ to a horizontal strip.

The approach is set up for a random cluster model
with a general value of the parameter $q\in[0,4]$,
and a parafermion observable of spin $\spin=1-\frac2\pi\arccos({\sqrt{q}}/2)$,
conjecturally converging to $\br{\Phi'}^\spin$ in the general case.
The Ising case corresponds to $q=2$ and $\spin=1/2$.
Besides a priori estimates (which are well-known in the Ising case),
we make essential use of the Ising-specific properties in two places:
to establish discrete analyticity of an observable,
and to show that being a solution of the discrete Riemann boundary
value problem, it converges to its continuum counterpart.
For the latter we see possibilities for a proof in the general case,
albeit more difficult.
So it seems that the only essential obstacle to
proving conformal invariance of {\it all} random cluster models
lies in establishing discrete analyticity of the observable concerned.
For the Ising case this is done by proving discrete analogues of Cauchy-Riemann relations,
where partial results can be obtained for all random cluster models.

The two sequels  \cite{smirnov-fk2,smirnov-fk3}
are concerned with the construction (on the basis of one observable)
of conformally invariant scaling limits of one interface 
and full collection of interfaces respectively.
They are, more or less, applicable to all random cluster models for which
conclusions of this first part, in particular Theorem~\ref{thm:fermi},
can be established.
In the Ising case the law of one interface converges
to that of the Schramm-L\"owner Evolution with $\kappa=16/3$.

These results were announced and the proofs were sketched  in \cite{smirnov-icm},
where one can find some of the ideas leading to our approach.
Another notable case when this approach (or rather a parallel one)
works is the usual spin representation of the Ising model
at critical temperature, leading to a similar observable
(related to a conformal map to a halfplane), 
and to Schramm-L\"owner Evolution with $\kappa=3$.
On a rectangular lattice, exactly the same notion of discrete analyticity
arises. This will be discussed in a separate paper.

Similar observables were constructed before by Kenyon \cite{Kenyon-conformal} 
and by the author \cite{\smirnovcras,smirnov-perc}.
The work of Kenyon concerned dimers on the square lattice
(domino tilings) and so
by the Temperley bijection gave conformally invariant 
observables for the Uniform Spanning Tree 
(corresponding to the random cluster model with $q=0$) and Loop Erased Random Walk. 
Kenyon's considerations are close in spirit to ours,
in fact repeating his constructions for the Fisher lattice one is
led to similar observables. 
Since \slee~ was not available at the moment, the identification of interfaces
had to wait till the work \cite{\lswust} of Lawler, Schramm and Werner.
Nevertheless Kenyon was able to rigorously determinate several exponents and dimensions, 
and some of his results go beyond the reach of \slee~ machinery.
We constructed \cite{\smirnovcras,smirnov-perc} conformally
invariant observables for the critical site percolation on the triangular lattice
which also bear some similarity to ones in the current paper.
Unfortunately that proof is restricted to the triangular lattice,
so the question of conformal invariance 
remains open for the percolation on the square lattice
(which corresponds to the random cluster model with $q=1$).

The paper is organized as follows.
In  
Section~\ref{sec:results} we state our theorem.
We start the proof by introducing a new notion of discrete analyticity
in Section~\ref{sec:discrete}, and then show that it is satisfied
by an Ising model observable, 
which we construct in Section~\ref{sec:fermi}.
Finally in  
Section~\ref{sec:limit} we show that the discrete
observable has a conformally covariant scaling limit.
In the proof we use an a priori estimate for the Ising model,
which follows from (a weak form) of known magnetization estimates;
this is discussed in Appendix~\ref{sec:apriori}.
Some of the more technical results about discrete harmonic functions
are reviewed in Appendix~\ref{sec:dharm}.

\subsection*{Acknowledgments}
The author gratefully acknowledges support of the Swiss National Science Foundation.
Much of the work was completed while the author was a Royal Swedish Academy of Sciences Research Fellow 
supported by a grant from the Knut and Alice Wallenberg Foundation.

Existence of a discrete holomorphic function in the Ising spin model
which has potential to imply convergence of interfaces to \sle3
was first noticed by Rick Kenyon and the author in 2002 based on 
the dimer techniques applied to the Fisher lattice.
However at the moment the Riemann Boundary Value Problem 
(similar to the one arising in this paper)
seemed beyond reach.

When this manuscript was written, the author learned 
that John Cardy and Valentina Riva were preparing a preprint
discussing the physics consequences of the fact
that (the classical version) of discrete analyticity 
holds for the function (\ref{eq:fdef}) restricted to edges.
However, their work \cite{\cardyriva}
does not address the boundary conditions
and the convergence to the scaling limit.

The author is indebted to Lennart Carleson for introducing him to this area, as well
as for encouragement and advice.
The author is grateful to Michael Aizenman,
Ilia Binder, John Cardy, Lincoln Chayes,
Geoffrey Grimmett, David Kazhdan, Nikolai Makarov and Andrei Okounkov
for fruitful discussions of this manuscript.

\begin{figure}
{\def\bs{\circle*{3}}\def\ws{\circle{3}}
\def\ne{\qbezier(-0.,3)(0.5,0.5)(3,0)}
\def\nw{\qbezier(-0.,3)(-0.5,0.5)(-3,0)}
\def\se{\qbezier(-0.,-3)(0.5,-0.5)(3,0)}
\def\sw{\qbezier(-0.,-3)(-0.5,-0.5)(-3,0)}
\def\bne{\qbezier(-0.3,3)(0.2,0.5)(3,-0)}
\def\bnw{\qbezier(-0.3,3)(-0.8,0.5)(-3,0)}
\def\bse{\qbezier(-0.3,-3)(0.2,-0.5)(3,0)}
\def\bsw{\qbezier(-0.3,-3)(-0.8,-0.5)(-3,0)}
\def\uu{{\thicklines\line(1,1){10}}}
\def\dd{{\thicklines\line(1,-1){10}}}
\def\uw{\line(1,1){10}}
\def\dw{\line(1,-1){10}}
\unitlength=0.8mm
\begin{picture}(115,85)(-10,-10)
\put(-5,15)\uw\put(-5,35)\uw\put(-5,55)\uw\put(05,45)\uw\put(15,-5)\uw
\put(25,05)\uw\put(25,25)\uw\put(35,-5)\uw\put(35,15)\uw\put(35,35)\uw\put(55,-5)\uw
\put(55,15)\uw\put(55,35)\uw\put(65,25)\uw\put(75,-5)\uw\put(75,15)\uw\put(85,05)\uw
\put(-5,15)\dw\put(-5,35)\dw\put(-5,55)\dw\put(05,05)\dw\put(05,45)\dw\put(05,65)\dw
\put(15,35)\dw\put(25,05)\dw\put(25,25)\dw\put(35,15)\dw\put(45,05)\dw\put(45,25)\dw
\put(55,15)\dw\put(55,55)\dw\put(65,25)\dw\put(65,05)\dw\put(75,55)\dw\put(75,35)\dw
\put(15,45)\uu\put(05,15)\uu\put(15,05)\uu\put(25,35)\uu\put(25,55)\uu\put(35,45)\uu
\put(45,35)\uu\put(45,55)\uu\put(55,25)\uu\put(65,35)\uu\put(65,55)\uu\put(75,05)\uu
\put(85,15)\uu\put(85,35)\uu\put(35,25)\uu
\put(05,15)\dd\put(05,35)\dd\put(15,25)\dd\put(15,45)\dd\put(25,55)\dd\put(35,65)\dd
\put(45,15)\dd\put(45,35)\dd\put(45,55)\dd\put(55,65)\dd\put(65,15)\dd
\put(65,55)\dd\put(75,45)\dd\put(75,65)\dd\put(85,55)\dd\put(85,35)\dd\put(15,65)\dd
{\linethickness{1.5pt}
\put(10,65){\line(0,-1){2}}\put(10,47){\line(0,-1){4}}\put(20,37){\line(0,-1){4}}\put(20,57){\line(0,-1){4}}
\put(30,37){\line(0,-1){4}}\put(30,27){\line(0,-1){4}}\put(40,47){\line(0,-1){4}}\put(40,37){\line(0,-1){4}}
\put(40,27){\line(0,-1){4}}\put(50,47){\line(0,-1){4}}\put(50,27){\line(0,-1){4}}\put(60,27){\line(0,-1){4}}
\put(60,17){\line(0,-1){4}}\put(70,37){\line(0,-1){4}}\put(70,17){\line(0,-1){4}}\put(70,7){\line(0,-1){4}}
\put(80,37){\line(0,-1){4}}\put(80,17){\line(0,-1){4}}\put(80,7){\line(0,-1){4}}\put(90,27){\line(0,-1){4}}
\put(90,17){\line(0,-1){4}}\put(73,0){\line(1,0){4}}\put(63,10){\line(1,0){4}}\put(83,10){\line(1,0){4}}
\put(33,20){\line(1,0){4}}\put(53,20){\line(1,0){4}}\put(63,20){\line(1,0){4}}\put(83,20){\line(1,0){4}}
\put(93,20){\line(1,0){2}}\put(23,30){\line(1,0){4}}\put(33,30){\line(1,0){4}}\put(63,30){\line(1,0){4}}
\put(83,30){\line(1,0){4}}\put(13,40){\line(1,0){4}}\put(33,40){\line(1,0){4}}\put(43,40){\line(1,0){4}}
\put(73,40){\line(1,0){4}}\put(13,50){\line(1,0){4}}\put(43,50){\line(1,0){4}}\put(13,60){\line(1,0){4}}
\put(43,30){\line(1,0){4}}\put(73,10){\line(1,0){4}}
\put(10,60)\bne\put(10,40)\bne\put(20,30)\bne\put(30,20)\bne\put(50,20)\bne\put(60,10)\bne
\put(70,00)\bne\put(70,10)\bne\put(80,30)\bne
\put(20,50)\bnw\put(30,30)\bnw\put(40,20)\bnw\put(40,40)\bnw\put(50,40)\bnw\put(60,20)\bnw
\put(70,30)\bnw\put(80,00)\bnw\put(90,10)\bnw\put(90,20)\bnw\put(80,10)\bnw\put(40,30)\bnw
\put(20,60)\bsw\put(20,40)\bsw\put(50,30)\bsw\put(50,50)\bsw\put(70,10)\bsw\put(70,20)\bsw
\put(80,40)\bsw\put(90,30)\bsw
\put(10,50)\bse\put(30,30)\bse\put(30,40)\bse\put(40,30)\bse\put(40,40)\bse\put(40,50)\bse
\put(60,20)\bse\put(60,30)\bse\put(70,40)\bse\put(80,10)\bse\put(80,20)\bse\put(90,20)\bse
}
\put(10,00)\ne\put(30,00)\ne\put(50,00)\ne\put(00,10)\ne\put(40,10)\ne\put(50,10)\ne
\put(20,20)\ne\put(70,20)\ne\put(00,30)\ne\put(10,30)\ne\put(50,30)\ne\put(20,40)\ne
\put(80,40)\ne\put(00,50)\ne\put(30,50)\ne\put(50,50)\ne\put(60,50)\ne\put(70,50)\ne
\put(80,50)\ne\put(10,10)\ne
\put(20,00)\nw\put(40,00)\nw\put(60,00)\nw\put(20,10)\nw\put(30,10)\nw\put(10,20)\nw
\put(80,20)\nw\put(60,30)\nw\put(30,40)\nw\put(70,40)\nw\put(90,40)\nw\put(10,50)\nw
\put(40,50)\nw\put(60,40)\nw
\put(10,10)\sw\put(40,10)\sw\put(50,10)\sw\put(60,10)\sw\put(20,20)\sw\put(30,20)\sw
\put(50,20)\sw\put(10,30)\sw\put(20,30)\sw\put(80,30)\sw\put(10,40)\sw\put(60,50)\sw
\put(70,50)\sw\put(80,50)\sw\put(90,50)\sw\put(10,60)\sw\put(40,60)\sw\put(60,60)\sw
\put(80,60)\sw\put(30,50)\sw
\put(20,10)\se\put(30,10)\se\put(00,20)\se\put(10,20)\se\put(40,20)\se\put(70,30)\se
\put(00,40)\se\put(50,40)\se\put(20,50)\se\put(00,60)\se\put(30,60)\se\put(60,40)\se
\put(50,60)\se\put(70,60)\se
\put(00,17){\line(0,-1){4}}\put(00,37){\line(0,-1){4}}\put(00,57){\line(0,-1){4}}\put(10,07){\line(0,-1){4}}
\put(10,17){\line(0,-1){4}}\put(10,27){\line(0,-1){4}}\put(10,37){\line(0,-1){4}}\put(10,47){\line(0,-1){4}}
\put(20,07){\line(0,-1){4}}\put(20,17){\line(0,-1){4}}\put(20,27){\line(0,-1){4}}\put(20,47){\line(0,-1){4}}
\put(30,07){\line(0,-1){4}}\put(30,17){\line(0,-1){4}}\put(30,47){\line(0,-1){4}}\put(30,57){\line(0,-1){4}}
\put(40,07){\line(0,-1){4}}\put(40,17){\line(0,-1){4}}\put(40,57){\line(0,-1){4}}\put(50,07){\line(0,-1){4}}
\put(50,17){\line(0,-1){4}}\put(50,37){\line(0,-1){4}}\put(50,57){\line(0,-1){4}}\put(60,07){\line(0,-1){4}}
\put(60,37){\line(0,-1){4}}\put(60,57){\line(0,-1){4}}\put(60,47){\line(0,-1){4}}\put(70,27){\line(0,-1){4}}
\put(70,47){\line(0,-1){4}}\put(70,57){\line(0,-1){4}}\put(80,27){\line(0,-1){4}}\put(80,47){\line(0,-1){4}}
\put(80,57){\line(0,-1){4}}\put(90,47){\line(0,-1){4}}\put(10,57){\line(0,-1){4}}
\put(13,00){\line(1,0){4}}\put(33,00){\line(1,0){4}}\put(53,00){\line(1,0){4}}\put(03,10){\line(1,0){4}}
\put(13,10){\line(1,0){4}}\put(23,10){\line(1,0){4}}\put(33,10){\line(1,0){4}}\put(43,10){\line(1,0){4}}
\put(53,10){\line(1,0){4}}\put(03,20){\line(1,0){4}}\put(13,20){\line(1,0){4}}\put(23,20){\line(1,0){4}}
\put(43,20){\line(1,0){4}}\put(03,30){\line(1,0){4}}\put(13,30){\line(1,0){4}}\put(53,30){\line(1,0){4}}
\put(53,40){\line(1,0){4}}\put(63,40){\line(1,0){4}}\put(83,40){\line(1,0){4}}\put(03,50){\line(1,0){4}}
\put(03,40){\line(1,0){4}}\put(23,40){\line(1,0){4}}\put(73,30){\line(1,0){4}}\put(73,20){\line(1,0){4}}
\put(23,50){\line(1,0){4}}\put(33,50){\line(1,0){4}}\put(53,50){\line(1,0){4}}\put(63,50){\line(1,0){4}}
\put(73,50){\line(1,0){4}}\put(83,50){\line(1,0){4}}\put(03,60){\line(1,0){4}}\put(33,60){\line(1,0){4}}
\put(53,60){\line(1,0){4}}\put(73,60){\line(1,0){4}}
\put(09,68){$a$}\put(98,18){$b$}
\put(10,65){\circle*{2}}
\put(95,20){\circle*{2}}
\put(15,05)\bs\put(35,05)\bs\put(55,05)\bs\put(75,05)\bs
\put(15,25)\bs\put(35,25)\bs\put(55,25)\bs
\put(75,25)\bs\put(95,25)\bs
\put(15,45)\bs\put(35,45)\bs\put(55,45)\bs
\put(75,45)\bs\put(95,45)\bs
\put(15,65)\bs\put(35,65)\bs\put(55,65)\bs\put(75,65)\bs
\put(05,15)\bs\put(25,15)\bs\put(45,15)\bs\put(65,15)\bs\put(85,15)\bs
\put(05,35)\bs\put(25,35)\bs\put(45,35)\bs\put(65,35)\bs\put(85,35)\bs
\put(05,55)\bs\put(25,55)\bs\put(45,55)\bs\put(65,55)\bs\put(85,55)\bs
\put(05,05)\ws\put(25,05)\ws\put(45,05)\ws\put(65,05)\ws\put(85,05)\ws
\put(05,25)\ws\put(25,25)\ws\put(45,25)\ws\put(65,25)\ws\put(85,25)\ws
\put(05,45)\ws\put(25,45)\ws\put(45,45)\ws\put(65,45)\ws\put(85,45)\ws
\put(05,65)\ws
\put(-5,15)\ws\put(15,15)\ws\put(35,15)\ws\put(55,15)\ws\put(75,15)\ws
\put(-5,35)\ws\put(15,35)\ws\put(35,35)\ws\put(55,35)\ws\put(75,35)\ws
\put(-5,55)\ws\put(15,55)\ws\put(35,55)\ws\put(55,55)\ws\put(75,55)\ws
\put(95,15)\ws
\put(15,-5)\ws\put(35,-5)\ws\put(55,-5)\ws\put(75,-5)\ws
\end{picture}}
\caption{Loop representation
of the random cluster Ising model.
Weight of the configuration is proportional
to $(\sqrt{q})^{\mathrm{\#~loops}}$, with $q=2$.
The sites of the original Ising lattice are colored in black,
while the sites of the dual lattice are colored in white.
Loops separate clusters from dual clusters,
which are also pictured, the former in bold.
Under Dobrushin boundary conditions
besides a number of loops there is an interface running from $a$ to $b$, which is drawn in bold.
}
\label{fig:loops}
\end{figure}
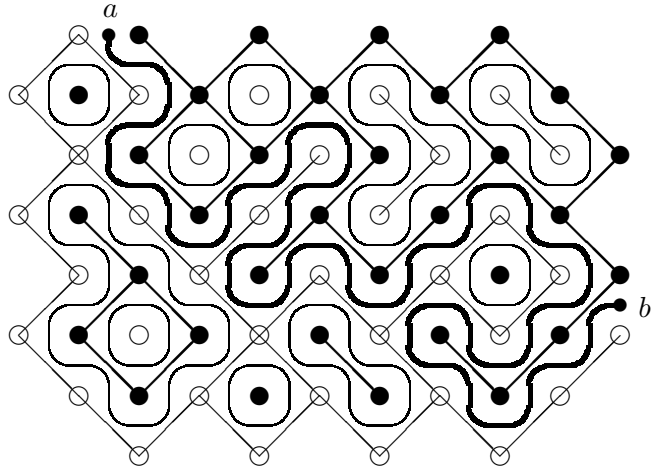

\section{Statement of results}\label{sec:results}

We work with the Fortuin-Kasteleyn random cluster model
with a particular emphasis on the critical Ising case,
corresponding to parameter values $q=2$ and $p=\sqrt{2}/(\sqrt{2}+1)$.
For a general introduction to the Ising and random cluster models
consult the books \cite{\baxterbook,\grimmettbookrc,\mccoywubook}.

The random cluster measure on a graph (a simply connected domain $\Om$ on the square lattice in our case)
is a probability measure on edge configurations (when each edge is declared either open or closed),
such that the probability of a configuration is proportional to
$$p^{\mathrm{\#~open~edges}}~(1-p)^{\mathrm{\#~closed~edges}}~q^{\mathrm{\#~clusters}}~,$$
where clusters are maximal subgraphs connected by open edges.
The two parameters are edge-weight $p\in[0,1]$ and cluster-weight $q\in[0,\infty)$,
with $q\in[0,4]$ being of interest to us.
For a square lattice (or in general any planar graph) to every configuration
one can prescribe a random cluster configuration on the dual graph, such that
every open edge is intersected by a dual closed edge and vice versa.
See Figure~\ref{fig:loops} for two dual configurations
with open edges pictured.
It turns out that the probability of a dual configuration becomes proportional
to
$$p_*^{\mathrm{\#~dual~open~edges}}~(1-p_*)^{\mathrm{\#~dual~closed~edges}}~q^{\mathrm{\#~dual~clusters}}~,$$
with the dual to $p$ value $p_*=p_*(p)$ satisfying $p_*/(1-p_*)=q(1-p)/p$.
For $p=p_{sd}:=\sqrt{q}/(\sqrt{q}+1)$ the dual value coincides with the original one:
one gets $p_{sd}=(p_{sd})_*$ and so the model is self-dual.
It is conjectured that this is also the critical value of $p$,
which was only proved for $q=1$ (percolation), $q=2$ (Ising) 
and $q>25.72$.
For these and other properties of the random cluster models
consult Grimmett's monograph \cite{\grimmettbookrc}.

We will work with the loop representation,
which is perhaps the easiest to visualize.
The cluster configurations can be represented as (Hamiltonian)
loop configurations on the medial lattice
(a square lattice which has as vertices edge centers of the original lattice),
with loops representing
interfaces between
cluster and dual clusters -- see Figure~\ref{fig:loops}.
It is well-known that for $p=p_{sd}$
the probability of a configuration
is proportional to 
$$\br{\sqrt{q}}^{\mathrm{\#~loops}}~,$$
with $q=2$ in the Ising case.

We introduce Dobrushin boundary conditions:
wired on the counterclockwise arc $ba$
(meaning that all edges along the arc are open)
and dual-wired on the counterclockwise arc $ab$
(meaning that all dual edges along the arc are open,
or equivalently all primal edges orthogonal to the arc are closed)
- see Figure~\ref{fig:loops}.
For the loop representation this reduces to introducing
two vertices with odd number of edges:
a source $a$ and a sink $b$.
Then besides a number of closed loop interfaces
there is a unique interface running from $a$ to $b$, 
which separates the cluster containing the arc $ab$ from the dual cluster containing the arc $ba$.
See Figures~\ref{fig:loops},~\ref{fig:perestroika1} 
for typical configurations.

Note that Dobrushin boundary conditions are usually formulated for
the spin Ising model and amount to setting plus and minus spin boundary conditions
on two arcs correspondingly, thus creating an interfaces between two spin clusters.
Since we need an interface between two random clusters, we formulated similar conditions in the random cluster setting.
Our version of Dobrushin boundary conditions is equivalent to setting plus boundary conditions
on one arc and free on the other in the spin setting.

The model makes sense for $q=0$ as well, and
is equivalent to the uniform spanning tree model.
Indeed, setting $q=0$  prohibits loops, so we consider configurations containing
only an interface from $a$ to $b$, which are weighted uniformly
since the number of open edges is always the same.
Those configurations are easily seen to be equivalent
to spanning trees on the original spin lattice, rooted on the arc $ab$. We are mostly interested in the Ising case of $q=2$ and $p=p_{sd}=\sqrt{2}/(\sqrt2+1)$.
Though it is known that this value of $p$ is critical, we won't use it in the proof.
For other values of $q$ our proof works to large extent,
also for the self dual  value  $p=p_{sd}$,
and in principle one can try to use this in establishing its criticality.

Define spin by $\spin:=1-\frac2\pi\arccos(\sqrt{q}/2)$.
Note that for the Ising case $\spin=\frac12$.
Let $F(z)$ be the expectation that the interface $\ga$ passes through a point $z$
taken with a complex weight:
$$
F(z)~:=~
\Exp{\chi_{z\in\ga(\omega)}\,\cdot
\,\exp\br{- i\spin\,\wi(\ga,b\to z)}}~.
$$
Here $\wi$ denotes the winding or (the total turn) of $\ga$ from $b$ to $z$,
measured in radians.
For the Ising case an additional $2\pi$ turn of the curve before reaching $z$ changes
the weight by a factor of $-1$, see Figure~\ref{fig:weight}.
The formula above gives $F$ at the edge centers (of the medial lattice, where the loop representation is defined), 
and we extend
it to all of $\Om$ in a standard piecewise constant way.
Exact definition can be found below.

We start by proving in Section~\ref{sec:fermi} the following
\begin{prop}\label{prop:prehol}
For the Ising model in a given lattice domain the function $F(z)$ is discrete holomorphic
and satisfies a discrete analogue of the Riemann Boundary Value Problem (with $\spin=\frac12$)
\begin{equation}\label{eq:rbvp}
\IM \br{F(z)\,{{\mathrm{tangent}}(z)}^\spin}~=~0~.
\end{equation}
\end{prop}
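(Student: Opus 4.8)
The plan is to establish the two assertions separately: the interior discrete holomorphicity, which is the substantial part, and the boundary identity \rbvp, which follows almost directly from the definition of $F$ together with the geometry of the Dobrushin boundary conditions.

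First I would fix the precise meaning of discrete holomorphicity from Section~\ref{sec:discrete}. I expect it to reduce to a local linear (Cauchy--Riemann type) relation among the values of $F$ on the medial edges incident to a common interior vertex $v$ of the medial lattice, equivalently to the vanishing of a discrete contour integral $\doint F\,dz$ around each face. The global statement then reduces to verifying one such identity at every interior vertex, which I would do configuration by configuration.

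The heart of the argument is a weight-preserving pairing of the configurations contributing to $F$ on the edges meeting at $v$. Grouping configurations by their restriction away from $v$ and pairing them by the local move that re-routes the strands at $v$ (toggling the open/closed state of the edge of $\mesh\ZZ^2$ centered at $v$), I would track two effects: the move changes the turning of a strand through $v$ by a fixed angle, multiplying the fermionic weight $\exp(-i\spin\,\wi)$ by a definite phase; and it changes the number of loops by $0$ or $\pm1$, multiplying the weight $(\sqrt{q})^{\mathrm{\#~loops}}$ by $1$ or $\sqrt{q}$. Summing the paired contributions, the local relation closes precisely when the loop weight balances the two admissible turning phases, i.e. when $\sqrt{q}=e^{\,i\frac{\pi}{2}(1-\spin)}+e^{-i\frac{\pi}{2}(1-\spin)}=2\cos\!\big(\tfrac{\pi}{2}(1-\spin)\big)$, which is exactly the defining relation $\spin=1-\frac{2}{\pi}\arccos(\sqrt{q}/2)$; for $q=2$ it reads $\sqrt{2}=2\cos(\pi/4)$ with $\spin=\tfrac12$.

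For the boundary identity \rbvp, I would observe that on $\partial\Om$ the orientation in which $\ga$ can cross a boundary medial edge $z$ is forced by the domain lying on one side of it, so the winding $\wi(\ga,b\to z)$ modulo $2\pi$ is a deterministic function of $z$, equal up to the global winding normalization to $\arg\mathrm{tangent}(z)$. The phase $\exp(-i\spin\,\wi)$ therefore factors out of $F(z)$ as a fixed real multiple of $\mathrm{tangent}(z)^{-\spin}$ (windings of distinct configurations differ only by multiples of $2\pi$, contributing signs $(-1)^{k}$ since $\spin=\tfrac12$, and so keeping the residual sum real), whence $F(z)\,\mathrm{tangent}(z)^{\spin}\in\RR$, which is \rbvp.

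The main obstacle is the local identity above, and specifically the case analysis in the pairing: one must verify that the local move is a genuine involution and correctly account for whether it merges or splits a loop, creates or destroys a contractible loop, or acts on the interface $\ga$ itself (where the source and sink and the winding base point require extra care). It is exactly here that the Ising value $\sqrt{q}=\sqrt{2}$ is essential; for general $q$ the same pairing yields only one of the two real Cauchy--Riemann relations, and upgrading it to the full complex relation is the step that uses the Ising-specific structure flagged in the introduction.
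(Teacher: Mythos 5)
Your interior argument is, at its core, the paper's own proof: the same involution on configurations (toggling the primal edge at $v$), the same bookkeeping of a $\pm\pi/2$ turning phase against a factor of $\sqrt q$ from the change in loop count, and the same closing relation $\sqrt q = 2\cos\br{\tfrac\pi2(1-\spin)}$, which for $q=2$ is exactly the paper's identity $\si^2+\bar\si^2=\sqrt2$ with $\si=e^{-i\pi/8}$. The boundary observation that the winding at a boundary edge is configuration-independent is also the paper's.

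There is, however, a genuine gap in your setup. You propose to take discrete holomorphicity to mean ``the vanishing of a discrete contour integral $\doint F\,dz$ around each face.'' That is the \emph{classical} discrete Cauchy--Riemann relation, and Remark~\ref{rem:usualhol} states explicitly that the paper's notion of preholomorphicity (the orthogonal projections of $F$ at the two endpoints of each edge onto the line $\ell(e)$ coincide) implies the classical one \emph{but does not follow from it}. The classical relation is one complex identity per face, whereas Lemma~\ref{lem:proj} proves the stronger statement that all four antipodal sums $F(NW)+F(SE)=F(NE)+F(SW)=F(N)+F(S)=F(W)+F(E)$ equal $F(v)$, and combines this with Lemma~\ref{lem:phase} --- the fact that the complex weight, and hence $F(z)$, lies on the fixed line $\ell(z)$ at \emph{every} corner and edge center, not only on the boundary, because the direction of passage through $z$ is forced and the winding is therefore determined mod $2\pi$ --- to conclude that the eight neighboring values are genuine orthogonal projections of $F(v)$. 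Your pairing, carried out pair by pair and including the corner points (where the function is primarily defined), does yield the antipodal-sum identities, but you must state the interior phase lemma and aim at the stronger target: the contour-integral identity alone would not support the construction of $\HH$ and its sub/superharmonicity later in the paper. A second, smaller omission concerns the boundary: the discrete condition actually used (Definition~\ref{def:rbvp}, feeding into Lemma~\ref{lem:const}) is the equality of moduli $\abs{F(e)}=\abs{F(e')}$ of the two edge projections at a boundary vertex, which comes from the fact that the interface passes through $e$ if and only if it passes through $e'$, so that $\abs{F(e)}=\Prob{e\in\ga}=\Prob{e'\in\ga}=\abs{F(e')}$; the deterministic phase by itself places each edge value on its line but imposes no constraint relating the two, hence does not by itself force $F(v)$ to be parallel to the reciprocal square root of the discrete tangent.
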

The continuum problem is solved by $f=\br{\Phi'}^\sigma$,
where $\Phi$ is the conformal map of $\Om$ to a horizontal strip,
with $a$ and $b$ mapped to the ends.
In our case normalization will produce a trip of width $2$.
After some technicalities in Section~\ref{sec:limit}
we show that $F$ converges to its continuum counterpart:
\begin{thm}\label{thm:fermi}
Suppose that as the lattice mesh $\mesh_j$ goes to zero,
the discrete domains $\Om_j$  on the lattices $\mesh_j\ZZ^2$
(with points $a_j$, $b_j$ on the boundary)
converge to the domain 
$\Om$ (with points $a$, $b$ on the boundary)
in the Carath\'eodory sense.
Then for the Ising model 
the corresponding functions $F_j=F(z,\Om_j,a_j,b_j,\mesh_j\ZZ^2)$ converge
uniformly away from the boundary:
\begin{equation}
{\mesh_j}^{-\spin}\,{F_j}~\unif~ f = \br{\Phi'}^\spin~,\label{eq:convf}
\end{equation}
where $\spin=1/2$.
\end{thm}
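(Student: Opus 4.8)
The plan is to avoid working with the discrete holomorphic function $F$ directly---discrete holomorphicity is not preserved under the operations (squaring, integration) one would like to use---and instead to pass to an auxiliary real-valued function built from $F$ that is genuinely (almost) discrete harmonic. Following the continuum heuristic that for a holomorphic $f$ the imaginary part of $\int f^2\,dz$ is harmonic, I would define the discrete primitive
\[
H(z):=\IM\int^z F^2\,dz,
\]
realized as a function on the faces of the medial lattice (equivalently on the vertices of $\Om$ and of its dual). The first task is to show, using only the discrete Cauchy--Riemann relations established in Proposition~\ref{prop:prehol}, that the imaginary part of the discrete contour integral of $F^2$ around each elementary face vanishes, so that $H$ is well defined and path-independent, and that $H$ is discrete (sub/super)harmonic. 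This is precisely the step where the Ising value $\spin=\frac12$ is indispensable: the local identity that turns $\IM\oint F^2\,dz$ into a telescoping, sign-definite expression is special to $q=2$.

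The boundary conditions enter next. Since \rbvp~ says that $F\,{{\mathrm{tangent}}(z)}^{\spin}$ is real along $\partial\Om$, the form $F^2\,dz$ is real on the boundary, so $H$ is constant along each of the two Dobrushin arcs. I would normalize $H$ to be $0$ on the arc $ab$; the value on the complementary arc $ba$ is then a single constant, which a combinatorial (partition-function) identity pins down, and the normalization $\mesh_j^{-2\spin}=\mesh_j^{-1}$ is chosen so that in the limit this constant becomes the width of the target strip. Thus the normalized $H_j$ solve a discrete Dirichlet problem whose boundary data converge, away from $a$ and $b$, to the two constants taken by $\IM\Phi$ on the two sides of the strip $\Phi(\Om)$.

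The analytic input is an a priori estimate. Using the (weak form of the) magnetization bound recalled in Appendix~\ref{sec:apriori}, I would establish uniform boundedness and equicontinuity of $\mesh_j^{-\spin}F_j$, and hence of the normalized $H_j$, on every compact subset of $\Om$. Arzel\`a--Ascoli then yields subsequential limits; discrete sub- and superharmonicity on the two colour classes pinches to ordinary harmonicity of a single limit $h$, which is bounded and---here Carath\'eodory convergence of $\Om_j\to\Om$ suffices, since the boundary data are locally constant away from the two marked points---attains the prescribed boundary values. By uniqueness for the Dirichlet problem, $h=\IM\Phi$.

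Finally I would recover $F$ itself. Convergence of the discrete harmonic functions $H_j\to\IM\Phi$ gives, via discrete harmonic conjugation together with the interior gradient estimates for discrete harmonic functions collected in Appendix~\ref{sec:dharm}, convergence of $\int F_j^2\,dz\to\Phi$ uniformly on compact subsets, and hence of the discrete derivatives $F_j^2\to\Phi'$. Since $\Phi$ is conformal, $\Phi'$ does not vanish in the interior, so the branch of the square root is unambiguous (fixed by the winding normalization built into $F$), and taking square roots yields $\mesh_j^{-\spin}F_j\unif\sqrt{\Phi'}=\br{\Phi'}^{\spin}$, which is (\ref{eq:convf}); as the limit is independent of the subsequence, the full sequence converges. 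The main obstacle is the combination of the first two paragraphs: proving genuine (sub/super)harmonicity of $H$ from the discrete Cauchy--Riemann relations---the heart of the $q=2$ miracle---and simultaneously controlling its boundary values under the weak Carath\'eodory convergence near the singular points $a$ and $b$.
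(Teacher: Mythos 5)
Your overall architecture is the same as the paper's: pass to the real-valued primitive $\HH\sim\IM\int F^2\,dz$ defined on the two colour classes of faces, prove sub/superharmonicity from the discrete Cauchy--Riemann relations (the $q=2$ miracle), show $\HH$ is constant on the two Dobrushin arcs with jump $|F(b)|^2=1$ (no partition-function identity is needed --- the interface passes through $b$ with probability one and with deterministic weight), sandwich between discrete harmonic functions to get $\HH_j\unif\IM\Phi$, and finally recover $F$. The genuine gap is in the last step, the passage from convergence of $\HH_j$ back to convergence of $\mesh_j^{-1/2}F_j$, and you assert it twice in ways that both fail. First, the a priori magnetization estimate of Appendix~\ref{sec:apriori} gives only $|F_j(e)|\le\Prob{e\in\gamma}\le\delta_r(\mesh_j)\to0$, and the classical rate is $\mesh^{1/8}$, which is far larger than $\mesh^{1/2}$; so it cannot give uniform boundedness of $\mesh_j^{-1/2}F_j$, let alone equicontinuity, and Arzel\`a--Ascoli does not get off the ground this way. (In the paper that estimate is used only to show $|\HH(B)-\HH(W)|\to0$, i.e.\ to merge the black and white halves of $\HH_j$.) Second, deducing $F_j^2\to\Phi'$ from $\HH_j\to\IM\Phi$ by ``harmonic conjugation together with interior gradient estimates'' is exactly the step the paper flags as invalid: $\HH_j$ is only subharmonic on one sublattice and superharmonic on the other, not harmonic, and uniform convergence of approximately harmonic functions does not imply convergence of their discrete gradients.

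What is actually needed, and what the paper supplies, is a compactness argument run in the opposite direction: from the already-established uniform convergence $\HH_j\to\IM\Phi$ one derives a uniform $L^2$-type bound $\mesh_j\sum_{U_j}|F_j|^2\le\const$ on compact subsets (Lemma~\ref{lem:fl2}), by writing the subharmonic defect $\HH_j-\HH_j^*$ (with $\HH_j^*$ the harmonic function with the same boundary values) as a Green's-function convolution of its nonnegative Laplacian and invoking the gradient estimate $\sum_x|\nabla_xG(x,y)|\le\const L^{-1}\sum_x|G(x,y)|$ of Lemma~\ref{lem:green}. Precompactness of $L^2$-bounded preholomorphic families (Lelong-Ferrand) then gives uniformly convergent subsequences of $\mesh_j^{-1/2}F_j$, and the limit $g$ of any subsequence is identified by integrating forward again: $\IM\int g^2=\lim\HH_j=\IM\Phi$ up to a constant, whence $g^2=\Phi'$. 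Without some substitute for this $L^2$ compactness step your proof does not close.
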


\begin{rem}\label{rem:cara}
Carath\'eodory convergence is defined as
convergence of normalized Riemann uniformization maps
on compact subsets.

Namely we fix a point $w\in\Omega$, and let $\phi$ (or $\phi_j$) be conformal maps
from the unit disk $\DD$ to $\Omega$ (or $\Omega_j$) such that
points $0,1,\zeta$ (or $0,1,\zeta_j$)
are mapped to $w,a,b$ (or $w,a_j,b_j$) or corresponding prime ends.
We say that $\Omega_j$ converge to $\Omega$ if
$\phi_j$ converge to $\phi$ inside $\DD$ and $\zeta_j$ tends to $\zeta$.

It is easy to see that Hausdorff convergence of the boundaries implies Carath\'eodory convergence and
that solution to the Riemann boundary value problem (\ref{eq:rbvp}) for $z$ inside $\Om$, 
being defined in the terms of Riemann maps,
is uniformly continuous as a function of $\Om$ in the topology of Carath\'eodory convergence.
\end{rem}

A variation of our proof seems to work for $q=0$ as well, and
most of it can be worked out for other values of $q$, 
though sometimes in a different way.
Hopefully the missing part of the discrete analyticity statement
will be worked out someday:
\begin{conj}\label{conj}
Proposition~\ref{prop:prehol}
(with an appropriate, possibly approximate, discrete analyticity)
and Theorem~\ref{thm:fermi}
hold for all values of $q\in[0,4]$.
\end{conj}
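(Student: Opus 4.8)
To prove Conjecture~\ref{conj} the plan is to treat its two ingredients separately, since their robustness in $q$ is quite different. The parafermionic weight $\exp(-i\spin\,\wi)$ and the self-dual value $p=p_{sd}$ are the only data entering the definition of $F$, and neither is tied to $q=2$; what is tied to $q=2$ is the strength of the discrete analyticity one can extract. I would first isolate exactly how much of the discrete Cauchy--Riemann relations survives for general $q$, then check that the boundary value problem (\ref{eq:rbvp}) and its continuum solution are insensitive to $q$, and finally confront the convergence.

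For the discrete analyticity I would revisit the vertex computation of Section~\ref{sec:fermi}. Around each vertex of the medial lattice the loop configurations split into pairs related by a local rearrangement of the two strands passing through; the parafermionic weight changes by a fixed phase under such a rearrangement, while the loop weight changes by a factor $\sqrt q$ according to whether a loop is created or destroyed. The exponent $\spin=1-\frac2\pi\arccos(\sqrt q/2)$ is precisely the value for which one real linear combination of the contributions cancels for every $q$, and this yields one of the two discrete Cauchy--Riemann identities -- vanishing of the discrete contour sum of $F\,dz$ around faces of one of the two types -- exactly, for all $q\in[0,4]$. At $q=2$ the special value $\spin=\frac12$ makes the phases on both types of faces coincide, so the second identity holds as well and $F$ is genuinely discrete holomorphic; for $q\neq2$ the second relation fails pointwise. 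The step I expect to be the main obstacle is to control this failure: one must show that the defect in the missing relation is of lower order in the mesh $\mesh$ in a weak, averaged sense, so that $F$ is \emph{approximately} discrete holomorphic with an error that does not survive the scaling limit. This is the content of the parenthetical ``possibly approximate'' in the conjecture, and establishing it seems to require a new a priori regularity input for $F$ together with a cancellation argument for the defects; there is at present no purely combinatorial identity playing the role that full discrete holomorphicity plays at $q=2$.

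The boundary value problem itself should carry over verbatim. The relation (\ref{eq:rbvp}) is a statement about the winding $\wi$ of $\ga$ along the boundary under Dobrushin boundary conditions, which forces the tangent direction and hence the phase of $F$; this geometric argument does not use $q=2$, so $\IM\br{F(z)\,{\mathrm{tangent}}(z)^\spin}=0$ holds for all $q$ with the same proof. Likewise the continuum problem is still solved by $f=\br{\Phi'}^\spin$: on $\partial\Om$ the condition reads $\arg f=-\spin\arg\br{{\mathrm{tangent}}}$, which together with the prescribed behaviour at the ends $a,b$ determines $f$ uniquely through the conformal map $\Phi$ to the strip, for every $\spin\in[0,1]$.

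For the convergence in Theorem~\ref{thm:fermi} I would follow the scheme of Section~\ref{sec:limit}: use a priori crossing estimates of RSW type and magnetization-type bounds (the analogues for $q\in[1,4]$ of the Ising estimates of Appendix~\ref{sec:apriori}) to get equicontinuity and extract subsequential limits of $\mesh^{-\spin}F_j$; argue that approximate discrete analyticity forces every subsequential limit $f$ to be holomorphic; pass (\ref{eq:rbvp}) to the boundary; and conclude by the uniqueness above that $f=\br{\Phi'}^\spin$. The genuine new difficulty here is that the Ising proof converts discrete holomorphicity of $F$ into \emph{exact} discrete harmonicity of a primitive of $\IM\br{F^2\,dz}$ and then solves a Dirichlet problem, which uses $1/\spin=2$. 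For general $\spin$ the natural primitive would be built from $F^{1/\spin}\to\Phi'$, but $1/\spin$ is not an integer except at $q\in\{4,2,1\}$, so there is no local combinatorial object to play this role and no exactly harmonic discrete function to converge; one must instead establish precompactness and identify the limit working directly with $f$ and the approximate analyticity, which is why this step is more difficult. The case $q=0$ ($\spin=0$) is degenerate and is best handled separately through the spanning-tree / loop-erased-walk observable, where an exactly harmonic primitive is again available.
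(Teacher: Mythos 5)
You are being asked to prove Conjecture~\ref{conj}, and the first thing to note is that the paper itself contains no proof of this statement: it is explicitly left open, and the introduction identifies ``establishing discrete analyticity of the observable'' as the one essential obstacle. So there is no paper argument to match your proposal against; the question is whether your text constitutes a proof on its own, and it does not. At both decisive points you substitute a declaration of what would need to be proven for an actual argument. First, for $q\neq 2$ your vertex-rearrangement computation yields only one family of linear relations (one half of what is packaged, at $q=2$, into the projection property of Lemma~\ref{lem:proj}); you then state that the defect in the missing relations ``must be shown'' to be of lower order in $\mesh$ in a weak averaged sense, and you concede yourself that no regularity input or cancellation mechanism for this is known. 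That concession is an accurate description of the open problem --- the parenthetical ``possibly approximate'' in the conjecture is a placeholder for exactly this missing notion --- but your proposal supplies no candidate definition of approximate discrete analyticity, no error estimate, and no argument, so this step is a genuine gap, not a proof step.

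The same happens at the convergence stage. The paper's entire Section~\ref{sec:limit} is routed through the function $\HH$ of Lemma~\ref{lem:phi}: its existence (closedness of the increments $\abs{F(e)}^2$, proved by Pythagoras from exact orthogonal projections), its sub/superharmonicity (Lemma~\ref{lem:harm}), and the compactness estimate of Lemma~\ref{lem:fl2} all rely on the exact discrete holomorphicity available only at $\spin=\tfrac12$. You correctly observe that for general $\spin$ the natural primitive built from $F^{1/\spin}$ has no local lattice analogue, but you propose no replacement for $\HH$, no alternative precompactness argument, and no way of identifying subsequential limits from approximate analyticity alone; ``one must instead establish precompactness and identify the limit working directly with $f$'' is again a restatement of the difficulty rather than its resolution. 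In addition, the a priori input of Appendix~\ref{sec:apriori} is Ising-specific (Kaufman--Onsager--Yang magnetization), and the RSW/magnetization analogues you invoke for general $q\in[1,4]$ are assumed, not proved. To your credit, the parts of your proposal that are genuinely argued are correct and $q$-independent, in agreement with the paper: the definition of $F$ with weight $\exp(-i\spin\,\wi)$ at $p=p_{sd}$ makes sense for all $q$, the boundary argument of Lemma~\ref{lem:rbvp} uses only the deterministic winding on the boundary and so gives \rbvp~ for every $q$, and $\br{\Phi'}^\spin$ solves the continuum problem for every $\spin\in[0,1]$. But these are the easy parts; on the two steps that make the statement a conjecture, your text is a well-informed research program --- essentially reproducing the paper's own assessment of where the obstruction lies --- and not a proof.
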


\section{Discrete analyticity revisited}\label{sec:discrete}
 We will identify lines through the origin with unit vectors
(complex numbers) belonging to them.
For a line $\ell$ or equivalently a  vector $\alpha\in\ell$
we denote by $\Proj{F,\ell}=\Proj{F,\alpha}$
the orthogonal projection of a complex number $F$ on the line $\ell$.
Note that for a unit vector $\alpha$
\begin{equation}\label{eq:proj}
\Proj{F,\alpha}~=~\alpha\,\RE\br{\bar\alpha F}~=~\br{F+\alpha^2\bar F}/2~.
\end{equation}
Consider the square lattice $\mesh\ZZ^2$ (possibly rotated).
By a lattice domain $\Om$ we mean some collection of vertices joint by edges
such that all vertices have even number of edges.
In our application we will also allow half-edges
(usually two, their middle ends will become  
 the source $a$ and the sink $b$).

By distance between two points (when speaking of moduli of continuity of functions, etc.) inside $\Om$ we will mean the distance in the inner metric.

If for some vertex all four edges are present, we call it
an {\em interior vertex},
while vertices with two edges we call {\em boundary vertices}.
If for a square all four vertices are interior,
we call it an {\em interior square}.
To the lattice domain $\Om$
we associate a planar domain $\tilde\Om$ 
which is the union of all interior squares.
We will assume that those domains are connected and simply connected.

Color the lattice squares in a chessboard fashion.
We orient every edge $e$, turning
it into a unit vector (or a complex number) $\vec e$ with the orientation chosen so that 
the white square is on the left and the black one on the right.
Then to the edge $e$ prescribe a line $\ell(e)$ in the complex plane
which passes through the origin and the square root of the complex conjugate
of the vector $\vec e$, considered as a complex number
(note that the choice of the square root is not important).

Without loss of generality we can assume that the lattice
edges are parallel to the coordinate axis 
(otherwise all lines are rotated by a fixed angle). 
Then horizontal edges correspond to the lines 
with argument (defined up to $\pi$) $0$ or $\pi/2$
(in the chessboard order),
whereas vertical edges correspond to $\pi/4$ or $3\pi/4$.
See Figure~\ref{fig:int}.

A given vertex $v$ has 4 neighboring edges.
If we go around $v$ counterclockwise, the line corresponding to the neighbor 
with each step is rotated counterclockwise by $\pi/4$
(so the full turn corresponds to a rotation by $\pi$,
which preserves the line but reverses directions).

\begin{defi}
We say that a function $F$ defined on vertices is {\em preholomorphic} or
{\em discrete analytic} in a domain
$\Om$ if for every edge $e\in\Om$ orthogonal projections of the values of $F$
at its endpoints on the line $\ell(e)$ coincide.
We will denote this common projection by $F(e)$.
\end{defi}

\begin{rem}
In the complex plane {\em holomorphic} (i.e. having a complex derivative)
and {\em analytic} (i.e. admitting a power series expansion) functions are the same,
so the terms are often interchanged.
Though the term {\em discrete analytic} is in wide use,
in discrete setting there are no power expansions, so
it would be more appropriate to speak of
{\em discrete holomorphic} (or {\em discrete regular}) functions.
We prefer the term {\em preholomorphic}, which was common at one point, but seems to 
have gone out of use.
\end{rem}

\begin{rem}\label{rem:usualhol}
The more commonly used discrete analyticity condition asks for
the discrete version
of the Cauchy-Riemann equations 
$$\partial_{i\alpha}F=i\partial_{\alpha}F$$
to be satisfied.
Namely, for a lattice square with 
corner vertices $NW$, $NE$, $SE$, $SW$ 
(starting from the upper left and going clockwise)
one has 
\begin{equation}
F(NW)-F(SE)~=~i\,\br{F(NE)-F(SW)}~.
\end{equation}
It is easy to check
that our property implies the more common one on vertices, but does not follow from it.
Moreover, our property is equivalent to the more common property for the 
function restricted to horizontal edges.
\end{rem}




\begin{defi}\label{def:rbvp}
We will say that a preholomorphic function $F$ solves
the Riemann Boundary Value Problem $\rbvp$
in the domain $\Om$,
if for every boundary vertex $v$ with two edges,
projections of $F(v)$ on the lines corresponding
to these edges have the same modulus.
\end{defi}

\begin{rem}
Indeed, since projections of $F(v)$ on lines corresponding to two edges
coincide, $F(v)$ belongs to the bisector of those lines.
Equivalently the value of $F$ at every boundary vertex $v$
is parallel to the reciprocal of the square root of the tangent vector $\tau(v)$
(or rather a discrete approximation -- the vector orthogonal
to the bisector of the angle between two edges from $v$).
This is a discrete analogue of the 
Riemann Boundary Value Problem (\ref{eq:rbvp})
and our main goal will be to show that solutions to the discrete
problem in the limit solve the continuum one.
\end{rem}

We will solve the problem \rbvp~ by ``integrating'' the square of $F$,
which is not so easy since $F^2$ is no longer preholomorphic.

\begin{lem}\label{lem:phi}
Let $F$ be preholomorphic in a domain $\Om$.
Then up to a constant there is a unique function $\HH=\HH_F$
defined on the lattice squares in $\Om$ or adjacent to $\Om$
and such that for any two adjacent squares,
say black $B$ and white $W$ separated by the edge $e$, one has
\begin{equation}\label{eq:phi}
\HH(B)-\HH(W)~=~|F(e)|^2~.
\end{equation}
\end{lem}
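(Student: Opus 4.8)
The plan is to treat the prescribed increments as a discrete $1$-form on the graph whose vertices are the lattice squares (two squares being joined when they share an edge $e$ of $\Om$) and to produce $\HH$ as its primitive. Fix one square and assign $\HH$ there arbitrarily; for any other square define $\HH$ by summing the increments $\pm\abs{F(e)}^2$ along a path of adjacent squares reaching it, the sign being $+$ when the path crosses an edge from the white to the black side and $-$ in the opposite direction. Since the region occupied by the squares in or adjacent to $\Om$ is connected, every square is reached by such a path, so once well-definedness is known the resulting $\HH$ is unique up to the initial additive constant. Well-definedness amounts to path-independence, i.e. to the statement that the signed sum of increments vanishes along every closed path of adjacent squares.

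Because $\tilde\Om$ is simply connected, it suffices to verify this vanishing around the elementary loops that generate all cycles. In the square-graph these elementary loops are precisely the $4$-cycles surrounding an interior vertex $v$: the four squares meeting at $v$ alternate in colour, and the four edges $e_1,e_2,e_3,e_4$ issuing from $v$ (listed counterclockwise) separate consecutive squares. Boundary vertices carry only two edges and enclose no such loop, so no condition is imposed there. Traversing the loop around $v$ counterclockwise and recording the signs dictated by the chessboard colouring, path-independence around $v$ becomes exactly the identity
\[
\abs{F(e_1)}^2-\abs{F(e_2)}^2+\abs{F(e_3)}^2-\abs{F(e_4)}^2~=~0~.
\]

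To establish this identity I would use preholomorphicity in the form that $F(e_k)$ is the common projection of the two endpoint values, and in particular $F(e_k)=\Proj{F(v),\ell(e_k)}$ for each of the four edges at $v$. As one goes around $v$ the associated lines rotate by $\pi/4$ at each step, so $\ell(e_1)\perp\ell(e_3)$ and $\ell(e_2)\perp\ell(e_4)$. For orthogonal lines the Pythagorean relation $\abs{\Proj{G,\ell(e_1)}}^2+\abs{\Proj{G,\ell(e_3)}}^2=\abs{G}^2$ (immediate from~(\ref{eq:proj})) gives $\abs{F(e_1)}^2+\abs{F(e_3)}^2=\abs{F(v)}^2$ and likewise $\abs{F(e_2)}^2+\abs{F(e_4)}^2=\abs{F(v)}^2$; subtracting yields the required vanishing. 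Thus the $1$-form is closed, and simple connectivity delivers the primitive $\HH$. The mathematical content is entirely in this local identity, which is transparent once the four projections are grouped into two orthogonal pairs; the only point demanding genuine care is the topological bookkeeping, namely checking that the graph on the squares in or adjacent to $\Om$ inherits simple connectivity from $\tilde\Om$ and that its generating cycles are exactly the interior-vertex loops, so that no closedness condition is secretly required at the boundary.
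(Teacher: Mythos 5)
Your proposal is correct and follows essentially the same route as the paper: the content is reduced to checking that the signed increments vanish around each interior vertex, and this is verified by grouping the four edge values into two pairs of orthogonal projections of $F(v)$ and applying the Pythagorean identity $\abs{F(e_1)}^2+\abs{F(e_3)}^2=\abs{F(v)}^2=\abs{F(e_2)}^2+\abs{F(e_4)}^2$. The only difference is that you spell out the topological bookkeeping (closed $1$-form, path-independence, simple connectivity) which the paper leaves implicit.
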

In applications we will chose the constant 
so that $\HH$ is zero on a square immediately below $b$.
Note that $\HH$ is defined on the dual lattice to $\mesh \ZZ^2$,
which is the usual case for discrete derivatives or primitives.

\begin{rem}
Values of the argument of $F$ on edges are such
that the function $2 \mesh \HH$ is a discrete analogue
of the indefinite integral $\IM \int F^2 dz$,
which we will establish
in Appendix~\ref{sec:subharm}, equation (\ref{eq:hintf}):
if $u$ and $v$ are centers of two squares with a common corner $z$,
then
$$2\,\mesh\,(H(v)-H(u))=\IM \br{F(z)^2\,(v-u)}.$$
\end{rem}

\begin{proof}
It is sufficient to check that when one goes around an interior vertex,
increments of $\HH$ add up to zero.
Suppose that the edge neighbors of the vertex $v$ 
are $E$, $S$, $W$, $N$ in clockwise order starting from the right.
Then the sum of increments when we go around $v$ is
\begin{equation}
\pm|F(E)|^2\mp|F(N)|^2\pm|F(W)|^2\mp|F(S)|^2~,
\label{eq:incr}\end{equation}
with signs depending on the choice of chessboard coloring.
By construction $\ell(E)\perp\ell(W)$, $\ell(N)\perp\ell(S)$.
Since $F(E)$, $F(W)$, $F(N)$, $F(S)$ are orthogonal
projections of $F(z)$ on the corresponding lines,
by Pythagoras theorem
$$|F(v)|^2=|F(E)|^2+|F(W)|^2=|F(N)|^2+|F(S)|^2~.$$
Thus the sum of the increments is equal to
$$\pm(|F(E)|^2+|F(W)|^2-|F(N)|^2-|F(S)|^2)=\pm(|F(v)|^2-|F(v)|^2)=0~,$$
and indeed vanishes.
\end{proof}
 
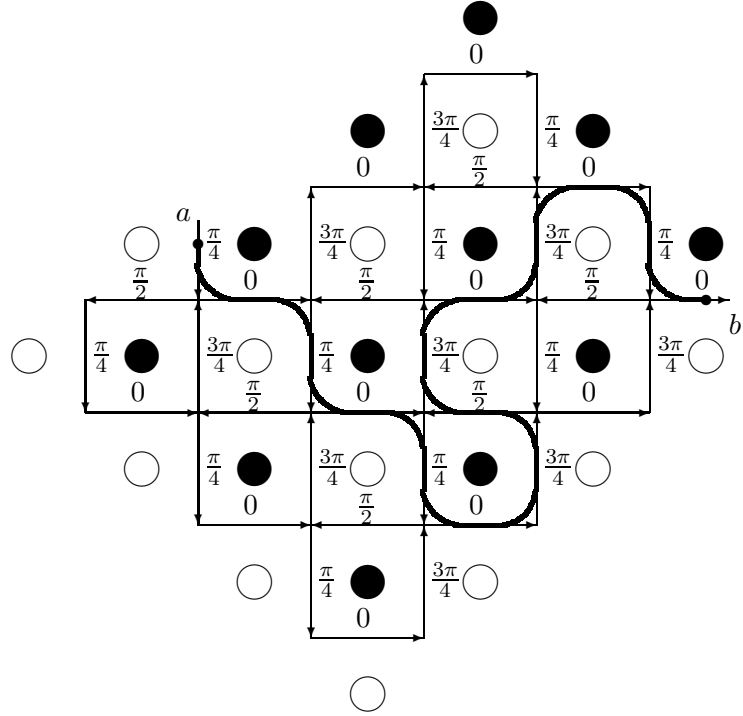
\begin{figure}
{\def\bs{\circle*{3}}\def\ws{\circle{3}}
\def\pa{$0$}\def\pb{$\frac\pi2$}\def\pc{$\frac{3\pi}4$}\def\pd{$\frac\pi4$}
\def\ne{\qbezier(-0.15,3)(0.5,0.5)(3,0)}
\def\nw{\qbezier(-0.15,3)(-0.5,0.5)(-3,0)}
\def\se{\qbezier(-0.15,-3)(0.5,-0.5)(3,0)}
\def\sw{\qbezier(-0.15,-3)(-0.5,-0.5)(-3,0)}
\unitlength=1.5mm
\begin{picture}(70,70)(-10,-10)
\put(20,0){\vector(1,0){10}}
\put(10,10){\vector(1,0){10}}\put(30,10){\vector(-1,0){10}}\put(30,10){\vector(1,0){10}}
\put(0,20){\vector(1,0){10}}\put(20,20){\vector(-1,0){10}}\put(20,20){\vector(1,0){10}}
 \put(40,20){\vector(-1,0){10}}\put(40,20){\vector(1,0){10}}
\put(10,30){\vector(-1,0){10}}\put(10,30){\vector(1,0){10}}\put(30,30){\vector(-1,0){10}}
 \put(30,30){\vector(1,0){10}}\put(50,30){\vector(-1,0){10}}\put(50,30){\vector(1,0){7}}
\put(20,40){\vector(1,0){10}}\put(40,40){\vector(-1,0){10}}\put(40,40){\vector(1,0){10}}
\put(30,50){\vector(1,0){10}}
\put(0,30){\vector(0,-1){10}}
\put(10,20){\vector(0,-1){10}}\put(10,20){\vector(0,1){10}}\put(10,37){\vector(0,-1){7}}
\put(20,10){\vector(0,-1){10}}\put(20,10){\vector(0,1){10}}\put(20,30){\vector(0,-1){10}}
 \put(20,30){\vector(0,1){10}}
\put(30,0){\vector(0,1){10}}\put(30,20){\vector(0,-1){10}}\put(30,20){\vector(0,1){10}}
 \put(30,40){\vector(0,-1){10}}\put(30,40){\vector(0,1){10}}
\put(40,10){\vector(0,1){10}}\put(40,30){\vector(0,-1){10}}\put(40,30){\vector(0,1){10}}
 \put(40,50){\vector(0,-1){10}}
\put(50,20){\vector(0,1){10}}\put(50,40){\vector(0,-1){10}}
{\linethickness{1.5pt}
\put(10,35){\line(0,-1){2}}
\put(13,30){\line(1,0){4}}
\put(20,27){\line(0,-1){4}}
\put(23,20){\line(1,0){4}}
\put(30,17){\line(0,-1){4}}
\put(33,10){\line(1,0){4}}
\put(40,13){\line(0,1){4}}
\put(37,20){\line(-1,0){4}}
\put(30,23){\line(0,1){4}}
\put(33,30){\line(1,0){4}}
\put(40,33){\line(0,1){4}}
\put(43,40){\line(1,0){4}}
\put(50,37){\line(0,-1){4}}
\put(53,30){\line(1,0){2}}
\put(10,30)\ne\put(20,30)\sw\put(20,20)\ne\put(30,20)\sw\put(30,10)\ne
\put(40,10)\nw\put(40,20)\sw\put(30,20)\ne\put(30,30)\se\put(40,30)\nw
\put(40,40)\se\put(50,40)\sw\put(50,30)\ne
}
\put(5,25)\bs\put(15,15)\bs\put(15,35)\bs\put(25,5)\bs\put(25,25)\bs\put(25,45)\bs
\put(35,15)\bs\put(35,35)\bs\put(45,25)\bs\put(45,45)\bs\put(55,35)\bs
\put(5,15)\ws\put(15,5)\ws\put(15,25)\ws\put(25,15)\ws\put(25,35)\ws\put(35,5)\ws
\put(35,25)\ws\put(35,45)\ws\put(45,15)\ws\put(45,35)\ws\put(55,25)\ws
\put(-5,25)\ws\put(5,35)\ws\put(25,-5)\ws\put(35,55)\bs
\put(4,21){\pa}
\put(14,11){\pa}\put(14,31){\pa}
\put(24,1){\pa}\put(24,21){\pa}\put(24,41){\pa}
\put(34,11){\pa}\put(34,31){\pa}\put(34,51){\pa}
\put(44,21){\pa}\put(44,41){\pa}
\put(54,31){\pa}
\put(4,31){\pb}
\put(14,21){\pb}
\put(24,11){\pb}\put(24,31){\pb}
\put(34,21){\pb}\put(34,41){\pb}
\put(44,31){\pb}
\put(0.5,24.5){\pd}
\put(10.5,14.5){\pd}\put(10.5,24.5){\pc}\put(10.5,34.5){\pd}
\put(20.5,4.5){\pd}\put(20.5,14.5){\pc}\put(20.5,24.5){\pd}\put(20.5,34.5){\pc}
\put(30.5,4.5){\pc}\put(30.5,14.5){\pd}\put(30.5,24.5){\pc}\put(30.5,34.5){\pd}\put(30.5,44.5){\pc}
\put(40.5,14.5){\pc}\put(40.5,24.5){\pd}\put(40.5,34.5){\pc}\put(40.5,44.5){\pd}
\put(50.5,24.5){\pc}\put(50.5,34.5){\pd}
\put(8,37){$a$}\put(57,27){$b$}
\put(10,35){\circle*{1}}
\put(55,30){\circle*{1}}
\end{picture}}
\caption{An example of a lattice domain $\Om$ with boundary conditions
creating an interface from $a$ to $b$, which is drawn in bold.
The lattice squares are colored in the chessboard fashion,
with black corresponding
to the sites of the original Ising lattice and white to the sites of its dual.
Near  the edges we write the arguments of the corresponding lines.
Note that running from $a$ to $b$ the interface always
follows the arrows and has black squares on the left.
}\label{fig:int}
\end{figure}

Denote $\HH$ restricted to black squares by $\HH^b$.
We define the {\em discrete Laplacian} by
$$ \Delta\HH^b(B):=\sum_j\br{\HH(B_j)-\HH(B)}~,$$
where the sum is taken over four black squares $B_j$ 
-- neighbors of the black square $B$, touching it at vertices. 
Similarly we define the Laplacian for the restriction $\HH^w$ to white squares.
We say that a function is {\em discrete (sub/super) harmonic} if its Laplacian vanishes 
(is non-negative/non-positive).

\begin{lem}\label{lem:harm}
If $F$ is preholomorphic, 
then for an interior white square $W$
with corner vertices $NW$, $NE$, $SE$, $SW$
(starting from the upper left and going clockwise) we have
\begin{equation}\label{eq:superharm}
\Delta\HH^w(W)~=~-\abs{F(NE)-F(SW)}^2~=-\abs{F(NW)-F(SE)}^2~\le0~,
\end{equation}
so $\HH^w$ is superharmonic.
Similarly, for an interior black square $B$
\begin{equation}\label{eq:subharm}
\Delta\HH^b(B)~=~\abs{F(NE)-F(SW)}^2~=\abs{F(NW)-F(SE)}^2~\ge0~,
\end{equation}
so $\HH^b$ is subharmonic.
\end{lem}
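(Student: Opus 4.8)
The plan is to compute $\Delta\HH^w(W)$ directly, by expanding each of the four increments $\HH(W_j)-\HH(W)$, where $W_j$ ranges over the four white squares sharing a corner with $W$, in terms of the edge values $\abs{F(e)}^2$, and then collapsing the resulting sum with the vertex Pythagoras identity from the proof of Lemma~\ref{lem:phi}. First I would record the increment formula. Since $W$ and a diagonal neighbour $W_j$ meet only at a corner $v_j$, they are joined through either of the two black squares adjacent to both; by Lemma~\ref{lem:phi} the two resulting values agree, so I may average them. Using $\HH(B)-\HH(W')=\abs{F(e)}^2$ for each black/white adjacency, each increment becomes $\HH(W_j)-\HH(W)=\tfrac12\big(\abs{F(e_1^{in})}^2+\abs{F(e_2^{in})}^2-\abs{F(e_1^{out})}^2-\abs{F(e_2^{out})}^2\big)$, where at the corner $v_j$ the two ``inner'' edges are the two edges of $W$ meeting at $v_j$ and the two ``outer'' edges point away from $W$ toward $W_j$.

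Next I would sum over the four corners. Each edge of $W$ is shared by two corners while each of the eight outer edges belongs to a single corner, so the sum telescopes to $\Delta\HH^w(W)=\sum_{e\in\partial W}\abs{F(e)}^2-\tfrac12\sum_{\mathrm{outer}}\abs{F(e)}^2$. At every corner the inner and outer edges come in orthogonal pairs, since opposite edges through a vertex have perpendicular lines; hence the identity established in the proof of Lemma~\ref{lem:phi} (that $\abs{F(v)}^2$ equals the sum of the squared projections on either pair of opposite edges at $v$) gives, after adding the two expressions at each corner and summing, $\sum_{\mathrm{outer}}\abs{F(e)}^2=2\sum_{v}\abs{F(v)}^2-2\sum_{e\in\partial W}\abs{F(e)}^2$. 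Substituting, the outer edges cancel and I obtain the clean intermediate formula
\[
\Delta\HH^w(W)\;=\;2\sum_{e\in\partial W}\abs{F(e)}^2-\sum_{v}\abs{F(v)}^2,
\]
the last sum being over the four corners $NW,NE,SE,SW$.

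It then remains to identify the right-hand side with $-\abs{F(NE)-F(SW)}^2$. Here I would fix coordinates with the lattice axis-parallel, so that the four edge lines of a white square carry arguments $0,\tfrac\pi4,\tfrac\pi2,\tfrac{3\pi}4$ and preholomorphicity becomes four real linear relations among the corner values: for example $\IM F(NW)=\IM F(NE)$ on the top edge and $\RE F(SW)=\RE F(SE)$ on the bottom edge, together with two analogous relations on the slanted lines. Expressing each $\abs{F(e)}^2$ as a squared projection through \eqref{eq:proj} and substituting these relations, the identity $2\sum_{e\in\partial W}\abs{F(e)}^2-\sum_v\abs{F(v)}^2=-\abs{F(NE)-F(SW)}^2=-\abs{F(NW)-F(SE)}^2$ follows by a direct computation; note that this simultaneously produces the asserted equality of the two squared differences. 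Finally, for an interior black square $B$ the convention $\HH(\text{black})-\HH(\text{white})=+\abs{F(e)}^2$ reverses every sign, so the same argument yields $\Delta\HH^b(B)=\sum_v\abs{F(v)}^2-2\sum_{e\in\partial B}\abs{F(e)}^2=+\abs{F(NE)-F(SW)}^2\ge0$.

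I expect the main obstacle to be this last step: turning the energy-type quantity $2\sum_e\abs{F(e)}^2-\sum_v\abs{F(v)}^2$ into a single squared difference. The structural steps, path-averaging and the Pythagoras cancellation of the outer edges, are short and essentially coordinate-free; the work lies in keeping careful track of the chessboard-dependent signs and of which edge line carries which argument. It is precisely the $\pi/4$ rotation pattern of the lines that forces the two diagonal differences $\abs{F(NE)-F(SW)}$ and $\abs{F(NW)-F(SE)}$ to coincide and makes the whole expression collapse to $\pm\abs{F(NE)-F(SW)}^2$.
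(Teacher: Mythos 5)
Your argument is correct, and it takes a genuinely different route through the algebra than the paper does. The paper evaluates each diagonal increment $\partial_v\HH$ along a single black-square path, rewrites it via (\ref{eq:proj}) purely in terms of $F(v)^2$ and $\bar F(v)^2$ (which is what identifies $2\mesh\HH$ with the discrete $\IM\int F^2\,dz$, equation (\ref{eq:hintf})), and then eliminates $F(NW)$, $F(SE)$ by preholomorphicity and grinds through the $\si$-phase computation of Appendix~\ref{sec:subharm}. You instead average each increment over the two black-square paths and invoke the Pythagoras identity from the proof of Lemma~\ref{lem:phi} to cancel the eight outer edges, which produces the symmetric intermediate identity $\Delta\HH^w(W)=2\sum_{e\in\partial W}\abs{F(e)}^2-\sum_{v}\abs{F(v)}^2$ --- a formula that does not appear in the paper. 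Your deferred final step does work: solving the four real linear preholomorphicity relations for $F(NW)$ and $F(SE)$ in terms of $F(NE)$ and $F(SW)$ (which along the way reproduces $F(NW)-F(SE)=i\br{F(NE)-F(SW)}$, hence the equality of the two squared differences), one finds that the energy expression collapses to $-\abs{F(NE)-F(SW)}^2$ by a computation noticeably shorter than the paper's, since the phases $\si^k$ have already been absorbed into the linear relations. For the black case your sign reversal is right, and the remaining point --- that the same energy identity holds on a black square --- follows because the black square's edge-line pattern is the white one rotated by $\pi$, which merely exchanges $NE\leftrightarrow SW$ and $NW\leftrightarrow SE$ and so leaves the right-hand side invariant. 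The only thing your route gives up relative to the paper's is the byproduct interpretation (\ref{eq:hintf}) of $\HH$ as a discrete primitive of $\IM F^2$, which the paper extracts from its single-path increment formula and uses elsewhere.
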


\begin{rem}
It is clear that subharmonicity on black squares
is equivalent to superharmonicity on white ones.
Indeed, the definition of $\HH$ is such that it is always increased when we pass
from white squares to black. If we reverse the colors, we will arrive at the function $-\HH$
and subharmonicity will become superharmonicity.
\end{rem}

\begin{proof}
A computation leading to equation (\ref{eq:superharm},\ref{eq:subharm}) is possible, since increment 
of $\HH$ across some vertex, say $NW$, can be written in terms of the
projections of $F(NW)$ 
on various lines, and so ultimately in terms of $F(NW)$.
So we can express the Laplacian $\Delta\HH(B)$
in terms of the values of $F$ at four neighboring vertices.
But projections of $F(NW)$ on two lines
corresponding to upper and left edges of the square
coincide with those of $F(NE)$ and $F(SW)$ correspondingly.
So we can express $F(NW)$ and similarly $F(SE)$
in terms of $F(NE)$ and $F(SW)$.
The resulting formula for the Laplacian is quite simple.

The computation is rather lengthy, so we present it in the Appendix~\ref{sec:subharm}.
However there are several arguments why we should arrive at a simple result.
Since we deal with squares of absolute values of projections,
we arrive at some real quadratic form in
$F(NE),\bar F(NE),F(SW),\bar F(SW)$.
Symmetries of our setup imply
that this form is invariant under the
rotation by $\pi$ which yields the exchange $F(NW)\leftrightarrow -F(SE)$,
and under change of $F$ by an additive constant
(this follows e.g. from the equations (\ref{eq:nw},\ref{eq:sw},\ref{eq:se},\ref{eq:sw})).
Such a form is necessarily given by
$\const\cdot\abs{F(NE)-F(SW)}^2$.
Note that any value of the constant would do, leading to a sub- or super- or harmonic
function.
\end{proof}

By {\em boundary arcs} we mean parts of $\partial\Om$
which are not separated by
``distinguished'' points
(i.e. ends of half-edges).
In our usual setup there are two boundary arcs,
$ab$ and $ba$
 (with points given counterclockwise).
By values of $\HH_F$ on the boundary
we mean its values on the outside squares adjacent to $\Om$.

\begin{lem}\label{lem:const}
If a preholomorphic function
$F$ solves the problem $\rbvp$,
then $\HH_F$ is constant
on the boundary arcs.
\end{lem}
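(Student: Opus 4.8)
The plan is to traverse a boundary arc and track the value of $\HH_F$ on the outside squares adjacent to it, showing that this value never changes. I would parametrize the arc by the sequence of boundary edges it crosses: each such edge $e\in\partial\Om$ separates a square lying inside $\tilde\Om$ from one outside square, and it is on the latter that the boundary value of $\HH_F$ is recorded. Two consecutive boundary edges meet at a vertex, and there are exactly two possibilities. Either the shared vertex has two edges (a convex corner, i.e. a boundary vertex in the sense of Definition~\ref{def:rbvp}), or it has four edges (a concave corner, which is in fact an interior vertex). At a concave corner the two consecutive boundary edges flank a \emph{common} outside square while the inside square changes; since the outside square is unchanged there, nothing needs to be proved.

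The entire content of the lemma therefore sits at the convex corners. At such a boundary vertex $v$ the two edges $e_1,e_2$ share a common inside square $I$, and the two outside squares $u_1,u_2$ lie on the far sides of $e_1$ and $e_2$. The key observation is a coloring one: $u_1$ and $u_2$ are the two squares diagonally opposite across $v$, hence carry the \emph{same} color in the chessboard coloring, both opposite to $I$. Consequently the defining relation~(\ref{eq:phi}) produces the two increments with matching sign,
\[
\HH(u_1)=\HH(I)\mp\abs{F(e_1)}^2,\qquad \HH(u_2)=\HH(I)\mp\abs{F(e_2)}^2,
\]
so that $\HH(u_1)-\HH(u_2)=\mp\br{\abs{F(e_1)}^2-\abs{F(e_2)}^2}$. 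Here I would invoke the hypothesis that $F$ solves $\rbvp$: since $v$ is an endpoint of both $e_1$ and $e_2$, the projections $F(e_1)$ and $F(e_2)$ are exactly the orthogonal projections of $F(v)$ onto $\ell(e_1)$ and $\ell(e_2)$, and Definition~\ref{def:rbvp} asserts these have equal modulus. Thus $\abs{F(e_1)}=\abs{F(e_2)}$, the increment vanishes, and $\HH_F$ takes the same value on $u_1$ and $u_2$. Stepping along the arc — the outside square unchanged across concave corners, its value preserved across convex ones — shows $\HH_F$ is constant on each boundary arc.

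The point demanding the most care is the coloring bookkeeping: I must check that at a degree-two vertex the two outside squares really are equichromatic, so that the two applications of~(\ref{eq:phi}) add with matching signs and cancel rather than reinforce, and that~(\ref{eq:phi}) is applied only across edges genuinely present in $\Om$. This is precisely where the degree-two hypothesis of $\rbvp$ is indispensable: at the degree-four concave corners no modulus identity $\abs{F(e_1)}=\abs{F(e_2)}$ is available, but there the outside square does not change, so none is needed. Once this dichotomy between the two vertex types is set up cleanly, the computation itself is immediate.
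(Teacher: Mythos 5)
Your proof is correct and follows essentially the same route as the paper's: the paper likewise walks along the arc comparing consecutive outside squares that meet diagonally at a degree-two boundary vertex, where the cancellation $|F(e)|^2=|F(e')|^2$ is exactly the Riemann boundary condition of Definition~\ref{def:rbvp}. Your explicit handling of the concave (degree-four) corners and of the colour bookkeeping merely spells out details the paper's terser argument leaves implicit.
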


\begin{proof}
Go along a boundary arc over the squares
adjacent to the domain.
Let $B$ and $B'$ be the centers of two consecutive ones 
(say of black color),
they touch at a vertex $v$, and are separated from $\Om$ by the
edges $e$ and $e'$ emanating from $v$.
Then
$$\HH(B)-\HH(B')=|F(e)|^2-|F(e')|^2=0~,$$
So $\HH$ is indeed constant along the arc.
\end{proof}

\section{Discrete holomorphic spin structure}
\label{sec:fermi}

We consider loop representation of
the random cluster Ising model at critical temperature.
The discrete domain $\Om$ on the lattice $\mesh\ZZ^2$
is as discussed above,
with a ``source'' $a$ and a ``sink'' $b$.
Thus for every configuration $\omega$ besides loops 
there is a curve $\ga=\ga(\omega)$ joining $a$ and $b$,
which we will call the {\em interface},
see Figure~\ref{fig:loops}.
Rotate the lattice in such a way so that an only edge incoming into $b$
from $\Om$ points to the right.

We round the corners of the loops
so that there are no sharp turns, see Figure~\ref{fig:int}.
The loops can be connected at a vertex in two different ways,
like near the vertex $v$ in Figure~\ref{fig:perestroika},
which are more clearly distinguished if we draw rounded loops.
Note that the interface can pass through a vertex twice,
utilizing two rounded corners -- see the left part of Figure~\ref{fig:perestroika}.
Color the lattice squares in a chessboard way,
so that standing at $b$ and facing the domain $\Om$,
we have a black square on the right and a white one on the left.
The black squares correspond to the sites of the original Ising lattice,
while the white squares correspond to the sites of the dual one.
An interface going from $a$ to $b$ always has 
black squares on the left and white on the right,
so it can arrive to a point $z$ only from one direction
(and not from the opposite one).
Thus we can prescribe to every point a vector
which is tangent to all interfaces passing through it
from $a$ to $b$.
Since it has black square on the right,
for points on the edge $e$ it coincides with the vector $\vec e$
discussed above.

Recall that to every point $z$ 
(centers of edges and rounded corners are important)
we prescribe a line $\ell(z)$ in the complex plane
which passes through the origin and the square root of the corresponding vector
(the choice of the square root is irrelevant).
For edges this agrees with the scheme discussed above,
see Figure~\ref{fig:int}.

A given vertex $v$ has 8 neighboring corner or edge centers.
If we go around $v$ counterclockwise, the line corresponding to the neighbor 
with each step is rotated clockwise by $\pi/8$ 
(so the full turn corresponds to a rotation by $\pi$,
which preserves the line but reverses directions).

For two points $z,z'$ on an interface $\ga$ we will denote by $\wi(z\to z')=\wi(\ga,z\to z')$
the winding (i.e. the total turn) of the curve $\ga$ as it goes from
$z$ to $z'$, measured in radians.

For an interface $\ga$ we define the {\em complex weight} $\we$
at point $z\in\ga$ by 
$$
\we(\ga,z)~:=~\exp\br{-\frac i4\br{\wi(\ga,a\to z)+\wi(\ga,b\to z)-\wi(\ga,a\to b)}}~.
$$
Note that $\wi(\ga,a\to z)-\wi(\ga,b\to z)=\wi(\ga,a\to b)$ and so
\begin{equation}\label{eq:defmarkov}
\we(\ga,z)~=~\exp\br{-\frac i2\,\wi(\ga,b\to z)}~.
\end{equation}
The values of the complex weight are illustrated in Figure~\ref{fig:weight}

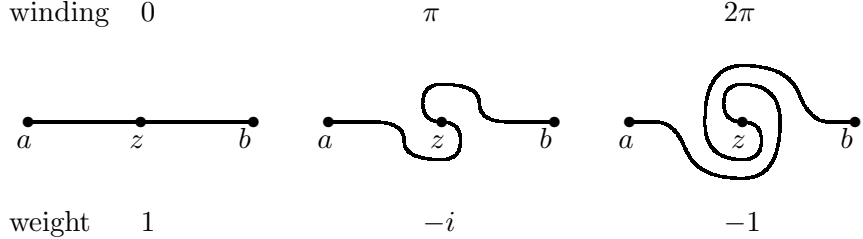
\begin{figure}\centering
\def\vtx{\circle*{3}}
\unitlength=0.5mm
\begin{picture}(220,70)
\def\green{}
\def\red{}
\def\blue{}
\def\black{}
\thicklines
\put(30,30){
\green\qbezier(-30,0)(0,0)(0,0)
\red\qbezier(30,0)(0,0)(0,0)
\black
\put(-30,0)\vtx
\put(30,0)\vtx
\put(0,0)\vtx
\put(-33,-7){$a$}
\put(26,-7){$b$}
\put(-3,-7){$z$}
\put(-35,27){\red winding}
\put(-0,27){\red $0$}
\put(-0,-29){\blue $1$}
\put(-35,-29){\blue  weight}
}
\put(110,30){
\red\qbezier(18,0)(30,0)(30,0)
\qbezier(10,5)(10,0)(18,0)
\qbezier(10,5)(10,10)(0,10)
\qbezier(0,10)(-5,10)(-5,5)
\qbezier(-5,5)(-5,0)(0,0)
\green\qbezier(-18,0)(-30,0)(-30,0)
\qbezier(-10,-5)(-10,0)(-18,0)
\qbezier(-10,-5)(-10,-10)(0,-10)
\qbezier(0,-10)(5,-10)(5,-5)
\qbezier(5,-5)(5,0)(0,0)
\black
\put(-30,0)\vtx
\put(30,0)\vtx
\put(0,0)\vtx
\put(-33,-7){$a$}
\put(26,-7){$b$}
\put(-3,-7){$z$}
\put(-5,27){\red $\pi$}
\put(-5,-29){\blue $-i$}
}
\put(190,30){
\green
\qbezier(-30,0)(-30,0)(-23,0)
\qbezier(-23,0)(-18,0)(-15,-7.5)
\qbezier(-15,-7.5)(-12,-15)(0,-15)
\qbezier(0,-15)(10,-15)(10,0)
\qbezier(10,0)(10,10)(0,10)
\qbezier(0,10)(-5,10)(-5,5)
\qbezier(-5,5)(-5,0)(0,0)
\red
\qbezier(30,0)(30,0)(23,0)
\qbezier(23,0)(18,0)(15,7.5)
\qbezier(15,7.5)(12,15)(0,15)
\qbezier(0,15)(-10,15)(-10,0)
\qbezier(-10,0)(-10,-10)(0,-10)
\qbezier(0,-10)(5,-10)(5,-5)
\qbezier(5,-5)(5,0)(0,0)
\black
\put(-30,0)\vtx
\put(30,0)\vtx
\put(0,0)\vtx
\put(-33,-7){$a$}
\put(26,-7){$b$}
\put(-3,-7){$z$}
\put(-5,27){\red $2\pi$}
\put(-5,-29){\blue $-1$}
}
\end{picture}
\caption{Values of the complex weight $\we\br{\ga,z}$
for different passages of the interface $\ga$ through $z$.
}\label{fig:weight}
\end{figure}

\begin{lem}\label{lem:phase}
For a  point $z$ and any realization of the interface $\ga$
the complex weight $\we(\ga,z)$ belongs to the line $\ell(z)$.
\end{lem}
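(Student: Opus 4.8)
The plan is to read off the argument of $\we(\ga,z)$ and match it, modulo $\pi$, to the direction of the line $\ell(z)$. Since $\wi(\ga,b\to z)$ is real, the weight $\we(\ga,z)=\exp\br{-\frac i2\wi(\ga,b\to z)}$ of \eqref{eq:defmarkov} has unit modulus, so membership in the line $\ell(z)$ (which passes through the origin) is equivalent to saying that $\arg\we(\ga,z)$ agrees, modulo $\pi$, with the argument of a square root of $\overline{\tang(z)}$, where $\tang(z)$ is the interface tangent at $z$ (equal to $\vec e$ for $z$ on an edge $e$). Both quantities carry a factor $\frac12$ — the winding enters $\we$ with coefficient $\frac12$, and $\ell(z)$ is built from a square root — and the whole point is that these two halvings are the same halving.

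First I would make the tangent bookkeeping precise. Parametrize $\ga$ from $a$ to $b$ and write $\tang(p)=e^{i\theta(p)}$ for the forward tangent, with $\theta$ a continuous real lift of its argument along the curve. By the definition of the total turn, walking along $\ga$ from $b$ back to $z$ runs against this parametrization, so the direction of motion at $p$ is $-\tang(p)$; the net rotation of the direction of motion is then
$$\wi(\ga,b\to z)~=~\br{\theta(z)+\pi}-\br{\theta(b)+\pi}~=~\theta(z)-\theta(b)~,$$
the two shifts by $\pi$ cancelling. Now I invoke the normalization fixed just above the lemma: the lattice is rotated so that the unique edge entering $b$ points to the right, i.e. $\tang(b)=1$, whence one may take $\theta(b)=0$. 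Therefore $\wi(\ga,b\to z)=\theta(z)$ and
$$\we(\ga,z)~=~\exp\br{-\tfrac i2\,\theta(z)}~.$$

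Finally I would compare with $\ell(z)$. By construction (agreeing with the edge convention of Section~\ref{sec:discrete}) the line $\ell(z)$ passes through the origin and a square root of $\overline{\tang(z)}=e^{-i\theta(z)}$, that is, $\ell(z)=\RR\cdot e^{-i\theta(z)/2}$, the sign ambiguity of the square root being immaterial. Hence $\we(\ga,z)=e^{-i\theta(z)/2}$ lies on $\ell(z)$, as claimed.

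The step I expect to carry the real content is the identity $\wi(\ga,b\to z)=\theta(z)-\theta(b)$ together with the observation that makes the statement hold for \emph{every} realization at once: the tangent $\tang(z)$, and hence the line $\ell(z)$, depends on $\theta(z)$ only modulo $2\pi$, whereas distinct interfaces through $z$ can record windings $\theta(z)$ differing by integer multiples of $2\pi$ (extra full loops about $z$). Such a shift multiplies $\we(\ga,z)$ by $e^{-i\pi}=-1$, exactly the ambiguity already absorbed into the line $\ell(z)$. This matching of the half-winding against the square-root sign is the fermionic feature of the weight; the only delicate points are pure bookkeeping — the choice of continuous lift, the direction in which the turn is measured, and the fact that at a rounded corner the interface still passes through $z$ in a single, well-defined direction.
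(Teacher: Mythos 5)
Your proof is correct and is essentially the paper's argument in telescoped form: the paper verifies the same identity by induction along the curve traced from $b$ (base case: at the last edge the weight is $1$ and $\ell$ is the line through $1$; inductive step: a turn by $\pm\theta$ multiplies $\we$ by $e^{\mp i\theta/2}$ and rotates $\ell$ by $\mp\theta/2$), which is exactly your closed-form statement $\wi(\ga,b\to z)=\theta(z)$ with $\theta(b)=0$. Your explicit remark that the $2\pi$ ambiguity of the lift only flips the sign of $e^{-i\theta(z)/2}$, hence preserves the line, is the right way to see why the claim holds for every realization at once.
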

\begin{proof}
When the interface is traced starting from $b$ 
the property is easily checked by induction.
At the center of the first edge the complex weight is equal to $1$,
and so belongs to the line through $1$.
When the interface turns by $\pm\theta$,
the winding $\wi(\ga,b\to z)$ is increased by $\pm\theta$,
whereas  $\wi(\ga,a\to z)$ is decreased by $\mp\theta$.
So the complex weight changes by a factor of 
$\exp\br{-i/4\cdot\br{-\mp\theta\pm\theta}}=\exp\br{\mp i\theta/2}$.
On the other hand, the line $\ell(e)$ is also
rotated by $\mp \theta/2$ since it passes through the complex conjugate of the square root
of the corresponding tangent vector which is rotated by $\pm \theta$. 
Here we use that the interface when traced from $b$  always has black squares on the right
and so goes in the opposite direction to the tangent vector.
\end{proof}

We will work with points $z$ which are either ``centers of corner turns''
(near every vertex there are 4 such points
-- see Figure~\ref{fig:perestroika}) or centers of edges.
For corner and edge points we can write
\begin{equation}\label{eq:wsi}
\we(\ga,z)~:=~\si^{\num(\ga,a\to z)+\num(\ga,b\to z)-\num(\ga,a\to b)}~
=~\si^{2\num(\ga,b\to z)}~.
\end{equation}
Here $\num(z\to z')=\num(\ga,z\to z')$ is
the number of $\pm\frac\pi2$ turns with sign the curve $\ga$
makes going from $z$ to $z'$ and
$\si=\exp(-i\pi/8)$.
Note that for corner points $\wi(a\to z)$ differs from $\num(a\to z)\cdot\pi/2$ by
$\pm\pi/4$ (the last half-turn before reaching $z$) but the difference enters
$\wi(a\to z)$ and $\wi(b\to z)$ with opposite signs and so cancels out.
 \begin{rem}
As was mentioned before,
the choice of weight is such that a relative weight of the interface
with an additional $2\pi$ twist around $z$ is $-1$.
Indeed, such a twist forces each of the two halves $a\to z$ and $b\to z$
to make four $\pi/2$-turns, so the weight
for one $\pi/2$-turn is $\si=\exp(-i\pi8)$,
which satisfies $\si^8=-1$.
\end{rem}
 \begin{rem}
Taking $\si=\exp(i\pi/8)$ instead,
one arrives at discrete anti-analytic functions.
\end{rem}

Define a function $F$ at all ``centers of corner turns''
(near every vertex there are 4 such points
-- see picture) by
$$F(c)~:=~
\Exp{\chi_{c\in\ga(\omega)}\cdot\we(\ga(\omega),c)}\cdot 2 \cos\frac\pi8~.$$ 
Similarly define $F$ for all centers of edges by
$$F(e)~:=~
\Exp{\chi_{e\in\ga(\omega)}\cdot\we(\ga(\omega),e)}~.$$ 
Different normalization arises because there are more corners than edges per vertex.
 \begin{rem}
The given definition of $F$ for edge centers works well only
for the square lattice at criticality 
(which is perhaps the most interesting case).
As an alternative one can start with our definition for corner centers,
and use the equation (\ref{eq:edgedef}) 
to define $F$ for edge centers.
This approach gives the same function in our setting,
but also generalizes to non-critical values of $p$ and to other lattices.
\end{rem}

With corners rounded, the interface
can go through a vertex $v$ in 4 different ways,
passing through one of the 4 neighboring corners $c_j$.
For an interior vertex $v$ we define $F$ as 
$$F(v)~:=~\sum_jF(c_j)/2~.$$
One can rephrase this as saying that
\begin{equation}\label{eq:fdef}
F(v)~=~
\Exp{\chi_{v\in\ga(\omega)}\cdot\we(\ga(\omega),v)}\cdot \cos\frac\pi8~,
\end{equation} 
where all passages of the interface through $v$ (there might be up to two)
are counted separately.

\begin{lem}
\label{lem:proj}
For an interior vertex $v$ the values of $F$
at its $8$ neighbors are orthogonal projections  of $F(v)$
on $8$ corresponding lines.
\end{lem}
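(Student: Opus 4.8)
The plan is to prove, for each of the eight neighbours $z$ of $v$, the pointwise identity $F(z)=\Proj{F(v),\ell(z)}$. First I would record that every neighbour value already lies on its own line: by Lemma~\ref{lem:phase} each realization contributes a weight $\we(\ga,z)\in\ell(z)$, and $F(z)$ is a positive‑real combination of such weights (the Boltzmann weights $(\sqrt q)^{\#\,\mathrm{loops}}$ and the normalizations $2\cos\frac\pi8$ are positive), so $F(z)\in\ell(z)$ for every edge centre and corner centre. As both sides of $F(z)=\Proj{F(v),\ell(z)}$ then lie on the real line $\ell(z)$, formula (\ref{eq:proj}) turns the claim for a neighbour $z$ into the single complex identity $2F(z)=F(v)+\alpha_z^{2}\,\overline{F(v)}$, where $\alpha_z$ is a unit vector along $\ell(z)$.

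I would treat the four corner centres as the core case and recover the edges afterwards. The definition $F(v)=\tfrac12\sum_j F(c_j)$ together with $F(c_j)\in\ell(c_j)$ is not enough by itself: the four corner lines are successive $\pi/4$ rotations, so the squared unit vectors $\alpha_{c_j}^2$ run through $\beta,-i\beta,-\beta,i\beta$ and sum to zero, which only yields $\sum_j\Proj{F(v),\ell(c_j)}=2F(v)=\sum_jF(c_j)$ --- a consistency check, not the four individual equalities. The missing input is a local identity at $v$.

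Because the loops fill every medial edge, the four medial edges at $v$ pair into two non‑crossing rounded arcs in exactly the two ways of Figure~\ref{fig:perestroika}, using the corner pair $\{NE,SW\}$ or $\{NW,SE\}$. I would fix the configuration outside a neighbourhood of $v$ and sum the contributions over these two pairings, tracking two effects: the winding, which changes the complex weight by $\si^{\pm2}$ between corners reached by opposite turns (recall $\si=\exp(-i\pi/8)$ and (\ref{eq:wsi})), and the number of loops, which changes by one under the switch and so multiplies $(\sqrt q)^{\#\,\mathrm{loops}}$ by $(\sqrt2)^{\pm1}$. The whole sum should then collapse onto the coincidence
\begin{equation*}
\si^{2}+\si^{-2}=2\cos\tfrac\pi4=\sqrt2=\sqrt q,
\end{equation*}
which converts the two‑pairing sum into exactly $2F(c_j)=F(v)+\alpha_{c_j}^2\overline{F(v)}$. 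This is the sole place where the Ising value $q=2$ (equivalently $\spin=\tfrac12$) enters essentially. Granting the corner identities, the four edge identities follow by writing each edge value through its two flanking corner values via (\ref{eq:edgedef}), again using the $\si^{\pm1}$ winding factors and the normalization $\cos\frac\pi8$.

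The hard part will be the bookkeeping of the local switch, not the algebra. One must enumerate the finitely many ways the four half‑edges at $v$ are joined outside the neighbourhood and, in each, decide whether switching the pairing merges two loops, splits one, or reconnects a loop with the interface --- this fixes the sign of the loop‑number change and hence which power of $\sqrt2$ appears. One must simultaneously handle configurations in which the interface passes through $v$ twice (using two corners), respect the ``passages counted separately'' convention of (\ref{eq:fdef}), and get the signs of the $\pm\frac\pi2$ turns feeding $\num(\ga,b\to z)$ right. Checking that in every case the winding increments and the loop‑count sign align so that $\si^2+\si^{-2}=\sqrt2$ applies uniformly is the delicate step; the remainder is routine.
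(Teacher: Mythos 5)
Your proposal is correct and follows essentially the same route as the paper: the heart of the argument in both is the local rearrangement (switching the two non-crossing pairings of the four medial edges at $v$), with the winding factors $\si^{\pm2}$ and the loop-count factor $\sqrt{q}^{\,\pm1}$ combining via $\si^2+\bar\si^2=2\cos\frac{\pi}{4}=\sqrt{2}=\sqrt{q}$, and with the edge values then recovered from the flanking corners exactly as in equation (\ref{eq:edgedef}). The paper organizes the algebra slightly differently --- proving equality of the four antipodal sums $F(NW)+F(SE)=F(NE)+F(SW)=\dots=F(v)$ and invoking orthogonality of antipodal lines --- but this is equivalent to your reduction to $2F(z)=F(v)+\alpha_z^2\overline{F(v)}$.
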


\begin{rem}
A boundary vertex $v$ has only two neighboring edges,
say $e$ and $e'$.
We define F(v) as a unique complex number which has orthogonal
projections $F(e)$ and $F(e')$ on the corresponding
lines $\ell(e)$ and $\ell(e')$.
It follows that $F$ is {\em preholomorphic} in $\Om$.
\end{rem}

\begin{rem}
The proof uses that we have a square lattice at $v$,
but the only global information needed is that the graph is planar.
So if we define $F$ on some planar graph which
has square lattice pieces, it will be preholomorphic there.
In a sequel we will discuss  generalizations
of preholomorphic functions to general planar graphs.
\end{rem}

\begin{proof}
When going around $v$ clockwise the line is rotated by $\pi/8$ with each step,
thus lines corresponding to antipodal neighbors are
at angle of $4\pi/8=\pi/2$, and so are orthogonal.
Hence the values of $F$ at two antipodes are orthogonal,
and are orthogonal projections of their sum on the corresponding lines.
So we will be proving in fact a stronger property, namely that
\begin{equation}\label{eq:sums}
F(NW)+F(SE)=F(NE)+F(SW)=F(W)+F(E)=F(N)+F(S)=F(v)~,
\end{equation}
where in each of the pairs two terms are orthogonal.
Here starting from the right and going clockwise
we denote $8$ neighbors of $v$ by 
$E$, $SE$, $S$, $SW$, $W$, $NW$, $N$, $NE$.

Recall that by definition
$$F(v)~=~\br{F(NE)+F(NW)+F(SW)+F(SE)}/2~,$$
so to establish the identity (\ref{eq:sums}) and the Lemma it is sufficient to show
that the sum of values of $F$ at two antipodal neighbors
is the same for 4 such pairs of antipodes.

Define an involution $\omega\mapsto\omega'$ on loop configurations,
which results from the rearrangement of connections at the point $v$.
For the random cluster formulation it corresponds to opening/closing 
the edge going through $v$.
To prove the linear identity it is sufficient to show that each
pair $\omega,\omega'$ of configurations
makes identical contributions to all 4 ``antipodal'' sums.
 
Consider some pair of configurations, say $\omega$ and $\omega'$. 
If the curve $\ga(\omega)$ does not pass through $v$, neither
does $\ga(\omega')$, and all contributions are zeroes.

Otherwise both curves pass through $v$.
Trace either of the curves from $a$ until it first arrives to the neighborhood of $v$.
Since it has black squares on the left it can arrive from 2 possible directions,
similarly when traced from $b$ can arrive from 2 other directions.

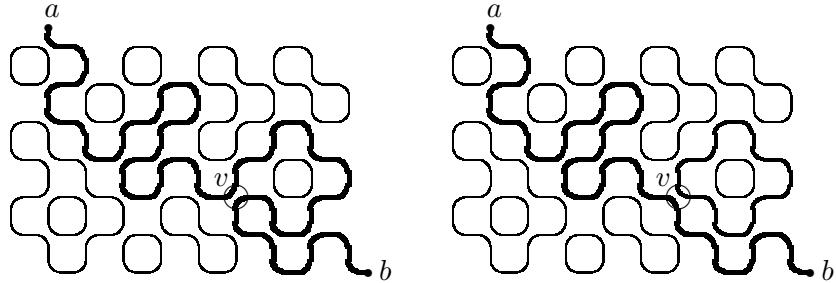
\begin{figure}\centering
{\def\bs{\circle*{3}}\def\ws{\circle{3}}
\def\ne{\qbezier(-0.,3)(0.5,0.5)(3,0)}
\def\nw{\qbezier(-0.,3)(-0.5,0.5)(-3,0)}
\def\se{\qbezier(-0.,-3)(0.5,-0.5)(3,0)}
\def\sw{\qbezier(-0.,-3)(-0.5,-0.5)(-3,0)}
\def\bne{\qbezier(-0.5,3)(0.,0.5)(3,-0)}
\def\bnw{\qbezier(-0.5,3)(-1,0.5)(-3,0)}
\def\bse{\qbezier(-0.5,-3)(0.,-0.5)(3,0)}
\def\bsw{\qbezier(-0.5,-3)(-1,-0.5)(-3,0)}
\def\uu{{\thicklines\line(1,1){10}}}
\def\dd{{\thicklines\line(1,-1){10}}}
\def\uw{\line(1,1){10}}
\def\dw{\line(1,-1){10}}
\unitlength=0.5mm
\centerline{
\begin{picture}(115,85)(-10,-10)
{\linethickness{1.5pt}
\put(10,65){\line(0,-1){2}}\put(10,47){\line(0,-1){4}}\put(20,37){\line(0,-1){4}}\put(20,57){\line(0,-1){4}}
\put(30,37){\line(0,-1){4}}\put(30,27){\line(0,-1){4}}\put(40,47){\line(0,-1){4}}\put(40,37){\line(0,-1){4}}
\put(40,27){\line(0,-1){4}}\put(50,47){\line(0,-1){4}}\put(50,27){\line(0,-1){4}}\put(60,27){\line(0,-1){4}}
\put(60,17){\line(0,-1){4}}\put(70,37){\line(0,-1){4}}\put(70,17){\line(0,-1){4}}\put(70,7){\line(0,-1){4}}
\put(80,37){\line(0,-1){4}}\put(80,17){\line(0,-1){4}}\put(80,7){\line(0,-1){4}}\put(90,27){\line(0,-1){4}}
\put(90,07){\line(0,-1){4}}\put(73,0){\line(1,0){4}}\put(63,10){\line(1,0){4}}\put(83,10){\line(1,0){4}}
\put(33,20){\line(1,0){4}}\put(53,20){\line(1,0){4}}\put(63,20){\line(1,0){4}}\put(83,20){\line(1,0){4}}
\put(93,00){\line(1,0){2}}\put(23,30){\line(1,0){4}}\put(33,30){\line(1,0){4}}\put(63,30){\line(1,0){4}}
\put(83,30){\line(1,0){4}}\put(13,40){\line(1,0){4}}\put(33,40){\line(1,0){4}}\put(43,40){\line(1,0){4}}
\put(73,40){\line(1,0){4}}\put(13,50){\line(1,0){4}}\put(43,50){\line(1,0){4}}\put(13,60){\line(1,0){4}}
\put(43,30){\line(1,0){4}}\put(73,10){\line(1,0){4}}
\put(10,60)\bne\put(10,40)\bne\put(20,30)\bne\put(30,20)\bne\put(50,20)\bne\put(60,10)\bne
\put(70,00)\bne\put(70,10)\bne\put(80,30)\bne
\put(20,50)\bnw\put(30,30)\bnw\put(40,20)\bnw\put(40,40)\bnw\put(50,40)\bnw\put(60,20)\bnw
\put(70,30)\bnw\put(80,00)\bnw\put(90,10)\bsw\put(90,20)\bnw\put(80,10)\bnw\put(40,30)\bnw
\put(20,60)\bsw\put(20,40)\bsw\put(50,30)\bsw\put(50,50)\bsw\put(70,10)\bsw\put(70,20)\bsw
\put(80,40)\bsw\put(90,30)\bsw
\put(10,50)\bse\put(30,30)\bse\put(30,40)\bse\put(40,30)\bse\put(40,40)\bse\put(40,50)\bse
\put(60,20)\bse\put(60,30)\bse\put(70,40)\bse\put(80,10)\bse\put(80,20)\bse\put(90,0)\bne
}
\put(10,00)\ne\put(30,00)\ne\put(50,00)\ne\put(00,10)\ne\put(40,10)\ne\put(50,10)\ne
\put(20,20)\ne\put(70,20)\ne\put(00,30)\ne\put(10,30)\ne\put(50,30)\ne\put(20,40)\ne
\put(80,40)\ne\put(00,50)\ne\put(30,50)\ne\put(50,50)\ne\put(60,50)\ne\put(70,50)\ne
\put(80,50)\ne\put(10,10)\ne
\put(20,00)\nw\put(40,00)\nw\put(60,00)\nw\put(20,10)\nw\put(30,10)\nw\put(10,20)\nw
\put(80,20)\nw\put(60,30)\nw\put(30,40)\nw\put(70,40)\nw\put(90,40)\nw\put(10,50)\nw
\put(40,50)\nw\put(60,40)\nw
\put(10,10)\sw\put(40,10)\sw\put(50,10)\sw\put(60,10)\sw\put(20,20)\sw\put(30,20)\sw
\put(50,20)\sw\put(10,30)\sw\put(20,30)\sw\put(80,30)\sw\put(10,40)\sw\put(60,50)\sw
\put(70,50)\sw\put(80,50)\sw\put(90,50)\sw\put(10,60)\sw\put(40,60)\sw\put(60,60)\sw
\put(80,60)\sw\put(30,50)\sw
\put(20,10)\se\put(30,10)\se\put(00,20)\se\put(10,20)\se\put(40,20)\se\put(70,30)\se
\put(00,40)\se\put(50,40)\se\put(20,50)\se\put(00,60)\se\put(30,60)\se\put(60,40)\se
\put(50,60)\se\put(70,60)\se
\put(00,17){\line(0,-1){4}}\put(00,37){\line(0,-1){4}}\put(00,57){\line(0,-1){4}}\put(10,07){\line(0,-1){4}}
\put(10,17){\line(0,-1){4}}\put(10,27){\line(0,-1){4}}\put(10,37){\line(0,-1){4}}\put(10,47){\line(0,-1){4}}
\put(20,07){\line(0,-1){4}}\put(20,17){\line(0,-1){4}}\put(20,27){\line(0,-1){4}}\put(20,47){\line(0,-1){4}}
\put(30,07){\line(0,-1){4}}\put(30,17){\line(0,-1){4}}\put(30,47){\line(0,-1){4}}\put(30,57){\line(0,-1){4}}
\put(40,07){\line(0,-1){4}}\put(40,17){\line(0,-1){4}}\put(40,57){\line(0,-1){4}}\put(50,07){\line(0,-1){4}}
\put(50,17){\line(0,-1){4}}\put(50,37){\line(0,-1){4}}\put(50,57){\line(0,-1){4}}\put(60,07){\line(0,-1){4}}
\put(60,37){\line(0,-1){4}}\put(60,57){\line(0,-1){4}}\put(60,47){\line(0,-1){4}}\put(70,27){\line(0,-1){4}}
\put(70,47){\line(0,-1){4}}\put(70,57){\line(0,-1){4}}\put(80,27){\line(0,-1){4}}\put(80,47){\line(0,-1){4}}
\put(80,57){\line(0,-1){4}}\put(90,47){\line(0,-1){4}}\put(10,57){\line(0,-1){4}}
\put(13,00){\line(1,0){4}}\put(33,00){\line(1,0){4}}\put(53,00){\line(1,0){4}}\put(03,10){\line(1,0){4}}
\put(13,10){\line(1,0){4}}\put(23,10){\line(1,0){4}}\put(33,10){\line(1,0){4}}\put(43,10){\line(1,0){4}}
\put(53,10){\line(1,0){4}}\put(03,20){\line(1,0){4}}\put(13,20){\line(1,0){4}}\put(23,20){\line(1,0){4}}
\put(43,20){\line(1,0){4}}\put(03,30){\line(1,0){4}}\put(13,30){\line(1,0){4}}\put(53,30){\line(1,0){4}}
\put(53,40){\line(1,0){4}}\put(63,40){\line(1,0){4}}\put(83,40){\line(1,0){4}}\put(03,50){\line(1,0){4}}
\put(03,40){\line(1,0){4}}\put(23,40){\line(1,0){4}}\put(73,30){\line(1,0){4}}\put(73,20){\line(1,0){4}}
\put(23,50){\line(1,0){4}}\put(33,50){\line(1,0){4}}\put(53,50){\line(1,0){4}}\put(63,50){\line(1,0){4}}
\put(73,50){\line(1,0){4}}\put(83,50){\line(1,0){4}}\put(03,60){\line(1,0){4}}\put(33,60){\line(1,0){4}}
\put(53,60){\line(1,0){4}}\put(73,60){\line(1,0){4}}
\put(09,68){$a$}\put(98,-2){$b$}\put(54,23){$v$}
\put(10,65){\circle*{2}}
\put(60,20){\circle{6}}
\put(95,00){\circle*{2}}
\end{picture}
\begin{picture}(115,85)(-10,-10)
{\linethickness{1.pt}
\put(70,37){\line(0,-1){4}}\put(60,27){\line(0,-1){4}}
\put(70,17){\line(0,-1){4}}\put(80,37){\line(0,-1){4}}
\put(80,17){\line(0,-1){4}}
\put(90,27){\line(0,-1){4}}
\put(73,40){\line(1,0){4}}
\put(63,20){\line(1,0){4}}\put(83,20){\line(1,0){4}}\put(63,30){\line(1,0){4}}
\put(83,30){\line(1,0){4}}\put(73,10){\line(1,0){4}}
\put(60,20)\bne\put(70,10)\bne\put(80,30)\bne
\put(70,30)\bnw\put(90,20)\bnw\put(80,10)\bnw
\put(70,20)\bsw\put(80,40)\bsw
\put(90,30)\bsw\put(80,20)\bse
\put(60,30)\bse\put(70,40)\bse}
{\linethickness{1.5pt}
\put(10,65){\line(0,-1){2}}\put(10,47){\line(0,-1){4}}\put(20,37){\line(0,-1){4}}\put(20,57){\line(0,-1){4}}
\put(30,37){\line(0,-1){4}}\put(30,27){\line(0,-1){4}}\put(40,47){\line(0,-1){4}}\put(40,37){\line(0,-1){4}}
\put(40,27){\line(0,-1){4}}\put(50,47){\line(0,-1){4}}\put(50,27){\line(0,-1){4}}\put(60,17){\line(0,-1){4}}
\put(70,7){\line(0,-1){4}}\put(80,7){\line(0,-1){4}}
\put(90,07){\line(0,-1){4}}\put(73,0){\line(1,0){4}}\put(63,10){\line(1,0){4}}\put(83,10){\line(1,0){4}}
\put(33,20){\line(1,0){4}}\put(53,20){\line(1,0){4}}\put(93,00){\line(1,0){2}}\put(23,30){\line(1,0){4}}
\put(33,30){\line(1,0){4}}\put(13,40){\line(1,0){4}}\put(33,40){\line(1,0){4}}\put(43,40){\line(1,0){4}}
\put(13,50){\line(1,0){4}}\put(43,50){\line(1,0){4}}\put(13,60){\line(1,0){4}}\put(43,30){\line(1,0){4}}
\put(10,60)\bne\put(10,40)\bne\put(20,30)\bne\put(30,20)\bne\put(50,20)\bne\put(60,10)\bne
\put(70,00)\bne\put(20,50)\bnw\put(30,30)\bnw\put(40,20)\bnw\put(40,40)\bnw\put(50,40)\bnw
\put(80,00)\bnw\put(90,10)\bsw\put(40,30)\bnw\put(20,60)\bsw\put(20,40)\bsw\put(50,30)\bsw
\put(50,50)\bsw\put(70,10)\bsw\put(10,50)\bse\put(30,30)\bse\put(30,40)\bse\put(40,30)\bse
\put(40,40)\bse\put(40,50)\bse
\put(60,20)\bsw\put(80,10)\bse\put(90,0)\bne
}
\put(10,00)\ne\put(30,00)\ne\put(50,00)\ne\put(00,10)\ne\put(40,10)\ne\put(50,10)\ne
\put(20,20)\ne\put(70,20)\ne\put(00,30)\ne\put(10,30)\ne\put(50,30)\ne\put(20,40)\ne
\put(80,40)\ne\put(00,50)\ne\put(30,50)\ne\put(50,50)\ne\put(60,50)\ne\put(70,50)\ne
\put(80,50)\ne\put(10,10)\ne
\put(20,00)\nw\put(40,00)\nw\put(60,00)\nw\put(20,10)\nw\put(30,10)\nw\put(10,20)\nw
\put(80,20)\nw\put(60,30)\nw\put(30,40)\nw\put(70,40)\nw\put(90,40)\nw\put(10,50)\nw
\put(40,50)\nw\put(60,40)\nw
\put(10,10)\sw\put(40,10)\sw\put(50,10)\sw\put(60,10)\sw\put(20,20)\sw\put(30,20)\sw
\put(50,20)\sw\put(10,30)\sw\put(20,30)\sw\put(80,30)\sw\put(10,40)\sw\put(60,50)\sw
\put(70,50)\sw\put(80,50)\sw\put(90,50)\sw\put(10,60)\sw\put(40,60)\sw\put(60,60)\sw
\put(80,60)\sw\put(30,50)\sw
\put(20,10)\se\put(30,10)\se\put(00,20)\se\put(10,20)\se\put(40,20)\se\put(70,30)\se
\put(00,40)\se\put(50,40)\se\put(20,50)\se\put(00,60)\se\put(30,60)\se\put(60,40)\se
\put(50,60)\se\put(70,60)\se
\put(00,17){\line(0,-1){4}}\put(00,37){\line(0,-1){4}}\put(00,57){\line(0,-1){4}}\put(10,07){\line(0,-1){4}}
\put(10,17){\line(0,-1){4}}\put(10,27){\line(0,-1){4}}\put(10,37){\line(0,-1){4}}\put(10,47){\line(0,-1){4}}
\put(20,07){\line(0,-1){4}}\put(20,17){\line(0,-1){4}}\put(20,27){\line(0,-1){4}}\put(20,47){\line(0,-1){4}}
\put(30,07){\line(0,-1){4}}\put(30,17){\line(0,-1){4}}\put(30,47){\line(0,-1){4}}\put(30,57){\line(0,-1){4}}
\put(40,07){\line(0,-1){4}}\put(40,17){\line(0,-1){4}}\put(40,57){\line(0,-1){4}}\put(50,07){\line(0,-1){4}}
\put(50,17){\line(0,-1){4}}\put(50,37){\line(0,-1){4}}\put(50,57){\line(0,-1){4}}\put(60,07){\line(0,-1){4}}
\put(60,37){\line(0,-1){4}}\put(60,57){\line(0,-1){4}}\put(60,47){\line(0,-1){4}}\put(70,27){\line(0,-1){4}}
\put(70,47){\line(0,-1){4}}\put(70,57){\line(0,-1){4}}\put(80,27){\line(0,-1){4}}\put(80,47){\line(0,-1){4}}
\put(80,57){\line(0,-1){4}}\put(90,47){\line(0,-1){4}}\put(10,57){\line(0,-1){4}}
\put(13,00){\line(1,0){4}}\put(33,00){\line(1,0){4}}\put(53,00){\line(1,0){4}}\put(03,10){\line(1,0){4}}
\put(13,10){\line(1,0){4}}\put(23,10){\line(1,0){4}}\put(33,10){\line(1,0){4}}\put(43,10){\line(1,0){4}}
\put(53,10){\line(1,0){4}}\put(03,20){\line(1,0){4}}\put(13,20){\line(1,0){4}}\put(23,20){\line(1,0){4}}
\put(43,20){\line(1,0){4}}\put(03,30){\line(1,0){4}}\put(13,30){\line(1,0){4}}\put(53,30){\line(1,0){4}}
\put(53,40){\line(1,0){4}}\put(63,40){\line(1,0){4}}\put(83,40){\line(1,0){4}}\put(03,50){\line(1,0){4}}
\put(03,40){\line(1,0){4}}\put(23,40){\line(1,0){4}}\put(73,30){\line(1,0){4}}\put(73,20){\line(1,0){4}}
\put(23,50){\line(1,0){4}}\put(33,50){\line(1,0){4}}\put(53,50){\line(1,0){4}}\put(63,50){\line(1,0){4}}
\put(73,50){\line(1,0){4}}\put(83,50){\line(1,0){4}}\put(03,60){\line(1,0){4}}\put(33,60){\line(1,0){4}}
\put(53,60){\line(1,0){4}}\put(73,60){\line(1,0){4}}
\put(09,68){$a$}\put(98,-2){$b$}\put(54,23){$v$}
\put(10,65){\circle*{2}}
\put(60,20){\circle{6}}
\put(95,00){\circle*{2}}
\end{picture}}
}
\caption{Rearrangement at a vertex $v$: we only change connections inside
 a small circle marking $v$.
Either interface does not pass through $v$ in both configurations,
or it passes in ways similar to the pictured above.
On the left the interface (in bold) passes through $v$ twice,
on the right (after the rearrangement) it passes once, but a new loop through $v$
appears.
The loops not passing through $v$ remain the same,
so the weights of configurations
differ by a factor of $\sqrt{q}=\sqrt{2}$ because of 
the additional loop on the right.
To get some linear relation on values of $F$,
it is enough to check that any pair of such configurations
makes equal contributions to two sides of the relation.}
\label{fig:perestroika1}
\end{figure}

\begin{figure}{
\def\pa{$0$}\def\pb{$\frac\pi2$}\def\pc{$\frac{3\pi}4$}\def\pd{$\frac\pi4$}
\def\ne{\qbezier(0,3)(0.5,0.5)(3,0)}
\def\nw{\qbezier(0,3)(-0.5,0.5)(-3,0)}
\def\se{\qbezier(0,-3)(0.5,-0.5)(3,0)}
\def\sw{\qbezier(0,-3)(-0.5,-0.5)(-3,0)}
\unitlength=1.5mm
\begin{picture}(100,50)
\put(5,43){configuration~$\omega'$}
\put(17,20){\vector(-1,0){17}}
\put(20,17){\vector(0,-1){17}}
\put(23,20){\line(1,0){14}}
\put(20,23){\line(0,1){14}}
\put(23,40){\line(1,0){14}}
\put(40,23){\line(0,1){14}}
\put(20,40)\se\put(40,40)\sw\put(40,20)\nw

{\put(20,20)\se\put(20,20)\nw}
\put(21.1,21.1){\circle*{1}}\put(21.5,22){$NE$}
\put(21.1,18.9){\circle*{1}}\put(21.5,17){$SE$}
\put(18.9,21.1){\circle*{1}}\put(13.5,22){$NW$}
\put(18.9,18.9){\circle*{1}}\put(13.5,17){$SW$}
\put(20,30){\circle*{1}}\put(21,30){$N$}
\put(30,20){\circle*{1}}\put(30,21){$E$}
\put(10,20){\circle*{1}}\put(7,21){$W$}
\put(20,10){\circle*{1}}\put(21,10){$S$}
\put(0,21){to~$a$}
\put(21,0){to~$b$}
\put(50,0){\put(5,43){configuration~$\omega$}
\put(17,20){\vector(-1,0){17}}
\put(20,17){\vector(0,-1){17}}
\put(23,20){\line(1,0){14}}
\put(20,23){\line(0,1){14}}
\put(23,40){\line(1,0){14}}
\put(40,23){\line(0,1){14}}
\put(20,40)\se\put(40,40)\sw\put(40,20)\nw
{\put(20,20)\sw\put(20,20)\ne}
\put(21.1,21.1){\circle*{1}}\put(21.5,22){$NE$}
\put(21.1,18.9){\circle*{1}}\put(21.5,17){$SE$}
\put(18.9,21.1){\circle*{1}}\put(13.5,22){$NW$}
\put(18.9,18.9){\circle*{1}}\put(13.5,17){$SW$}
\put(20,30){\circle*{1}}\put(21,30){$N$}
\put(30,20){\circle*{1}}\put(30,21){$E$}
\put(10,20){\circle*{1}}\put(7,21){$W$}
\put(20,10){\circle*{1}}\put(21,10){$S$}
\put(0,21){to~$a$}
\put(21,0){to~$b$}
}
\end{picture}}
\caption{Schematic drawing, representing the 
change in the interface after the rearrangement at a vertex $v$. 
Going from $a$ and $b$ to $v$ the interface might make some number of turns,
which won't influence our reasoning, since it changes the weight
of both configurations by the same factor. 
Note that since $a$ and $b$ are on the boundary, for topological
reasons the interface can go from $N$ to $E$
only on one side of $v$.
}\label{fig:perestroika}
\end{figure}
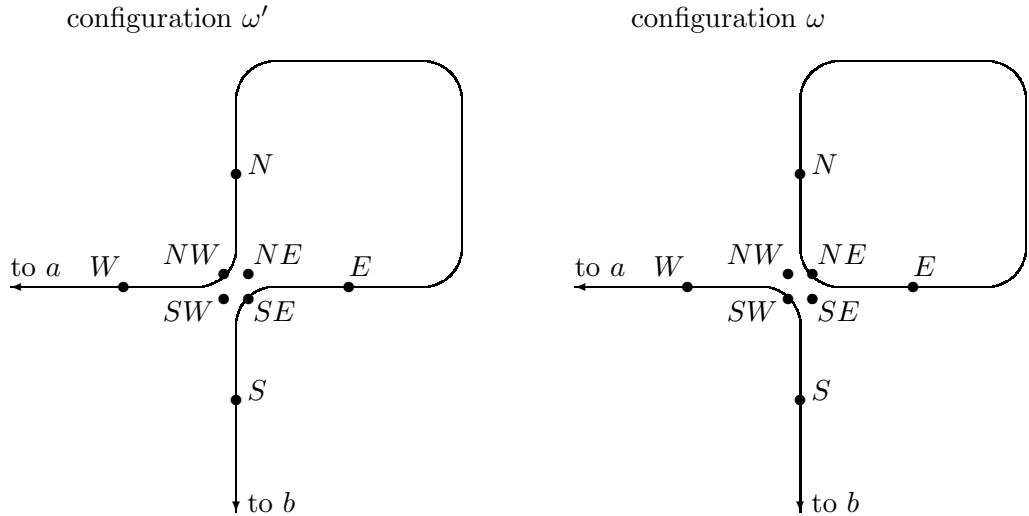

There are 4 possible topological pictures for the arrivals, but they are all analogous,
so we will work out one of them.
Assume that the half starting from $a$ arrives from the west,
while the half starting from $b$ arrives from the south
(such picture is possible for a half of the vertices $v$,
for others the curve traced from $a$ would arrive from a vertical direction).
In one of the curves, say $\ga(\omega)$, the two traced halves are immediately joined 
(and there is also a cycle passing near $v$), whereas in the other, $\ga(\omega')$,
this cycle is included into the curve.
See Figures~\ref{fig:perestroika1} and \ref{fig:perestroika}.
Then out of corner points $\ga(\omega)$ contributes only to $F(SW)$,
say a term $X$ (weight of all cycles times the complex weight).
On the other hand the curve $\ga(\omega')$ out of the corner points contributes to
$F(NW)$ and $F(SE)$ only.
The contributions differ from $X$ by a factor of $1/\sqrt{2}$,
since the number of cycles decreased by one. 
Moreover, the phase changes, since compared to $\ga(\omega)$ reaching $SW$
the curve $\ga(\omega')$ winds by  additional $+\pi/2$ when reaching $NW$
(coming from the half originating in $b$)
and by additional $-\pi/2$ when reaching $SE$
(coming from the half originating in $a$).
Correspondingly the complex weights change
by factors of $\si^2$ and $\bar\si^2$, see equation (\ref{eq:wsi}).
So $\ga(\omega')$ contributes to
$F(NW)+F(SE)$ a term
$$X\cdot(\si^2+\bar\si^2)/\sqrt{2}
~=~X\cdot\br{e^{-i\frac\pi8 2}+e^{i\frac\pi8 2}}/\sqrt{2}
~=~X\cdot2\cos\br{\frac\pi4}/\sqrt{2}~=~X~.$$
So $\ga(\omega)$
contributes $X$ to the second sum in (\ref{eq:sums}),
while not contributing to the first,
whereas $\ga(\omega')$
contributes $X$ to the first sum in (\ref{eq:sums}),
while not contributing to the second.
We conclude that the first two sums coincide:
$$F(NW)+F(SE)~=~F(NE)+F(SW)~.$$

The same (but messier) reasoning shows that the two remaining sums share the same value.
Perhaps this is best summarized by the following table which shows
contributions of two configurations to the values of $F$ at 
various neighbors of $v$:

\medskip
\noindent
\begin{tabular}{||c||c|c||c|c||c|c||c|c||}
\hline &NW&SE&NE&SW&N&S&W&E\\ \hline\hline
$\displaystyle\omega$& 0& 0& 0& $\displaystyle X$ & 0&
$\displaystyle\frac{X\bar\si}{2\cos(\pi/8)}$& $\displaystyle\frac{X\si}{2\cos(\pi/8)}$& 0\\ \hline
$\displaystyle \omega'$& $\displaystyle\frac{X\si^2}{\sqrt2}$& 
$\displaystyle\frac{X\bar\si^2}{\sqrt2}$ & 0& 0& 
$\displaystyle\frac{X\si^3}{2\cos(\pi/8)\sqrt{2}}$& 
$\displaystyle\frac{X\bar\si}{2\cos(\pi/8)\sqrt{2}}$& $\displaystyle\frac{X\si}{2\cos(\pi/8)\sqrt{2}}$&
$\displaystyle\frac{X\bar\si^3}{2\cos(\pi/8)\sqrt{2}}$\\
\hline  
\end{tabular}

\bigskip

Using that $\si=\exp\br{-i\pi/8}$, an exercise in trigonometry
one checks that numbers in $2\times2$ squares bordered by double lines
always sum up to $X$. Thus taken together $\omega$ and $\omega'$ make
identical contributions to all 4 antipodal sums in (\ref{eq:sums}).
 
Alternatively we can finish the proof 
by deducing that values of $F$ 
on the neighboring edges are also projections of $F(v)$.
To that effect we write the value of $F$ at the northern edge 
in terms of northwest and northeast corners.
Consider some edge $e$ emanating from $v$ with the corresponding line $\ell(e)$
passing through a vector $\alpha$.
Let $c$ and $c'$ be the two adjacent corner points. 
The corresponding lines are rotations of $\ell(e)$ by $\pm\pi/8$,
passing through vectors $\si\alpha$ and $\bar\si\alpha$ correspondingly.

Note that the interface passes through $e$
if and only if it passes through exactly one of
the points $c$ and $c'$.
Taking into account the difference in complex weight and normalization,
and using the formula (\ref{eq:proj})
for projections, we write
\begin{align}
F(e)&=\br{\bar\si F(c)+\si F(c')}/\br{2\cos(\pi/8)}\label{eq:edgedef}\\
&=\br{\bar\si\Proj{F(v),\si\alpha}+\si \Proj{F(v),\bar\si\alpha}}/\br{2\cos(\pi/8)}\nonumber\\
&=\br{\bar\si\br{F(v)+(\si\alpha)^2\bar F(v)}+\si \br{F(v)+(\bar\si\alpha)^2\bar F(v)}}/\br{4\cos(\pi/8)}\nonumber\\
&=\br{\bar\si F(v)+\si \alpha^2 \bar F(v) +\si  F(v)+\bar\si \alpha^2 \bar F(v)}/\br{4\cos(\pi/8)}\nonumber\\
&=\br{F(v)+\alpha^2 \bar F(v)}\br{\bar\si+\si}/\br{2(\bar\si+\si)}=\Proj{F(v),\alpha}~,\nonumber
\end{align}
thus finishing the proof.
\end{proof}

\begin{lem}\label{lem:cont}
For every positive $r$ there is a function $\delta_r(x):\,\RR_+\to\RR_+$ 
such that $\lim_{x\to0}\delta_r(x)=0$ and if two neighboring squares $B$ and $W$
are $r$ away from at least one of the boundary arcs $ab$ or $ba$, then
$$\abs{\HH(B)-\HH(W)}~\le~\delta_r(\mesh)~.$$
\end{lem}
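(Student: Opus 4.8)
The plan is to reduce everything to a bound on the probability that the interface visits the edge $e$ separating $B$ and $W$, and then to feed in the a priori estimate of Appendix~\ref{sec:apriori}. By Lemma~\ref{lem:phi} the increment across $e$ is exactly $\abs{\HH(B)-\HH(W)}=\abs{F(e)}^2$, so it is enough to show that $\abs{F(e)}\to0$ uniformly over edges whose centre $z$ lies at inner distance at least $r$ from one of the arcs. The complex weight has modulus one for every realization, since $\we(\ga,z)=\exp\br{-\frac i2\wi(\ga,b\to z)}$ with real winding; hence the defining expectation obeys
$$\abs{F(e)}~\le~\Exp{\chi_{e\in\ga}\,\abs{\we(\ga,e)}}~=~\Prob{e\in\ga}~\le~1~,$$
and therefore $\abs{\HH(B)-\HH(W)}=\abs{F(e)}^2\le\Prob{e\in\ga}$. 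The whole statement is thus reduced to bounding the probability that the interface passes through a fixed edge far from a boundary arc.

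Next I would translate the event $\brs{e\in\ga}$ into a macroscopic connection. Every point of the interface is simultaneously adjacent to the primal cluster attached to the wired arc $ba$ and to the dual cluster attached to the dual-wired arc $ab$. Consequently, on $\brs{z\in\ga}$ there is an open path from a neighbour of $z$ to $ba$ and a dual-open path from a neighbour of $z$ to $ab$. If $z$ is at inner distance at least $r$ from $ab$, the second path has diameter of order $r$; if instead $z$ is at distance at least $r$ from $ba$, the first one does. Since the hypothesis guarantees that $z$ is far from at least one of the two arcs, in either case $\brs{e\in\ga}$ forces an open arm (primal or dual) emanating from $z$ and reaching inner distance of order $r$. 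At the self-dual point the two arm probabilities are interchanged by swapping the chessboard colours, so it suffices to treat one of them.

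Finally I would invoke the a priori estimate. In the weak form furnished by the known magnetization estimates for the critical Ising model, the probability of an open arm from a point to inner distance $\rho$ is at most a quantity $g(\rho/\mesh)$, uniform over the admissible domains and tending to $0$ as its argument grows. Applying this with $\rho=r/2$, which the arm attains once the mesh is small, gives $\Prob{e\in\ga}\le g\br{r/(2\mesh)}$, so the lemma holds with
$$\delta_r(\mesh)~:=~g\br{r/(2\mesh)}~,\qquad\lim_{\mesh\to0}\delta_r(\mesh)=0~.$$
The genuine difficulty is the a priori input itself: one needs a one-arm (equivalently boundary-magnetization) bound that is uniform over all discrete domains under consideration and robust under replacing the Euclidean by the inner metric, so that spanning a macroscopic distance really does cost a vanishing probability as the mesh refines. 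Granting this estimate, the reduction above is routine.
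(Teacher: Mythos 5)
Your proposal is correct and follows essentially the same route as the paper: the paper's proof of this lemma is exactly the one-line reduction $\abs{\HH(B)-\HH(W)}=\abs{F(e)}^2\le\Prob{e\in\gamma}^2$ (using that the complex weight is unimodular), followed by an appeal to the a priori estimate of Appendix~\ref{sec:apriori} (Lemma~\ref{lem:notspacefilling}), whose proof is precisely your arm argument — monotonicity in boundary conditions reduces the connection event to the magnetization in a wired box of side comparable to $r$, which tends to zero with the mesh by the Kaufman--Onsager/Yang computation. You have also correctly identified that the only nontrivial input is that magnetization estimate, which the paper likewise treats as the external ingredient.
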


\begin{rem}
Here $\mesh$ is the lattice step.
Note that the only way the shape of $\Om$
enters into the estimate is via $r$.
The Lemma essentially means that the restrictions of the function $\HH$
to black and white squares are uniformly close to each other when we are away from $a$ and $b$.
\end{rem}
 \begin{rem}\label{rem:cont}
The Lemma is derived from the fact that $F\to0$ away from $a$ and $b$ as $\mesh\to0$.
Since preholomorphic $F$ is uniquely determined by its boundary conditions,
there should be an Ising-independent proof,
using only discrete analyticity and boundary conditions. 
Unfortunately, we were not able to find a simple one.
\end{rem}

\begin{proof}
If an edge $e$ separates the squares $B$ and $W$,
by definition
\begin{equation}\label{eq:2arm}
\abs{\HH(B)-\HH(W)}=\abs{F(e)}^2\le\Prob{e\in\gamma}^2,
\end{equation}
so we can take as our function $\delta$ the square of the 
similar function in Lemma~\ref{lem:notspacefilling}.
\end{proof}.

\begin{lem}\label{lem:rbvp}
The function $F$ satisfies the Riemann Boundary Value Problem $\rbvp$.
Moreover $\HH=0$ on the (counterclockwise) boundary arc $ab$ 
and $\HH=1$ on the (counterclockwise) boundary arc $ba$. 
\end{lem}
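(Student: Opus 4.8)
The plan is to deduce both assertions from the behaviour of the interface near the boundary, using the preholomorphicity of $F$ already established in Lemma~\ref{lem:proj} together with Lemma~\ref{lem:const}. First I would verify $\rbvp$. Fix a boundary vertex $v$; by hypothesis it carries exactly two edges $e$ and $e'$, and near $v$ a single strand of the loop configuration enters through one of them and leaves through the other. Since $v$ is neither the source nor the sink, this strand belongs to the interface exactly when the interface passes through $v$, and then it occupies both $e$ and $e'$; hence $\chi_{e\in\ga}=\chi_{e'\in\ga}=\chi_{v\in\ga}$ for every configuration. The essential point is that, because the interface keeps black squares on its left, its direction of travel at each edge center is fixed by the chessboard coloring alone, independently of the configuration. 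Consequently the order in which $e$ and $e'$ are visited, and the signed turn made at $v$, are the same for every $\ga$ through $v$. Writing $\kappa:=\exp\br{-\frac i2\,\wi(\ga,e\to e')}$ for this common turn, equation~(\ref{eq:defmarkov}) gives $\we(\ga,e')=\kappa\,\we(\ga,e)$ with $\abs{\kappa}=1$, and therefore
\begin{equation*}
F(e')=\Exp{\chi_{v\in\ga}\,\we(\ga,e')}=\kappa\,\Exp{\chi_{v\in\ga}\,\we(\ga,e)}=\kappa\,F(e).
\end{equation*}
Thus $\abs{F(e)}=\abs{F(e')}$; since $F(e)$ and $F(e')$ are by definition the projections of $F(v)$ on $\ell(e)$ and $\ell(e')$, these projections share the same modulus, which is precisely the requirement of Definition~\ref{def:rbvp}. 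Hence $F$ solves $\rbvp$.

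With $\rbvp$ in hand, Lemma~\ref{lem:const} applies and shows that $\HH$ is constant along each of the two boundary arcs $ab$ and $ba$, so it remains only to identify the two constants. For this I would examine the edge $e_b$ running into the sink $b$. The interface terminates at $b$ in every configuration, so it always occupies $e_b$, giving $\chi_{e_b\in\ga}\equiv1$; moreover $e_b$ is the first edge counted from $b$, so $\wi(\ga,b\to e_b)=0$ and $\we(\ga,e_b)\equiv1$ by~(\ref{eq:defmarkov}). Hence $F(e_b)=1$ and $\abs{F(e_b)}^2=1$. The edge $e_b$ separates the black square and the white square adjacent to $b$, which lie on the arcs $ba$ and $ab$ respectively; by~(\ref{eq:phi}) the jump of $\HH$ across $e_b$ equals $\abs{F(e_b)}^2=1$, with the black side higher. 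Normalizing (as agreed after Lemma~\ref{lem:phi}) so that $\HH=0$ on the square immediately below $b$, which is the white square on the arc $ab$, we obtain $\HH=1$ on the adjacent black square on $ba$. Combined with the constancy along each arc, this yields $\HH\equiv0$ on $ab$ and $\HH\equiv1$ on $ba$.

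The main obstacle is the configuration-independence of the turn at a boundary vertex that powers the proof of $\rbvp$: one must argue carefully that the black-squares-on-the-left convention pins down the traversal direction at both $e$ and $e'$, hence both the order of the visit and the sign of the turn, so that the single phase $\kappa$ works uniformly over all realizations of $\ga$. The remaining points -- that a degree-two boundary vertex forces $\chi_{e\in\ga}=\chi_{e'\in\ga}$, and the bookkeeping matching square colors to the arcs $ab$, $ba$ and to the chosen normalization at $b$ -- are routine once the geometry near the boundary is set up.
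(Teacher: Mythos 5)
Your proof is correct and follows essentially the same route as the paper: determinism of the winding at a boundary vertex gives $\abs{F(e)}=\abs{F(e')}$, and then Lemma~\ref{lem:const} plus the unit jump $\abs{F(e_b)}^2=1$ across the edge at $b$ fixes the two constants. The only (harmless) difference is that you use only the configuration-independence of the \emph{relative} phase $\kappa$ between $e$ and $e'$, whereas the paper notes that the winding from $b$ to each boundary edge is itself deterministic, so that $\abs{F(e)}=\Prob{e\in\ga}=\Prob{e'\in\ga}=\abs{F(e')}$.
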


\begin{proof}
Let $v$ be a boundary vertex with incoming edges $e$ and $e'$.
Then all possible interfaces $\gamma$ arrive at $e$ from $a$
(or from $b$) with the same winding, so
$\we(\gamma(\omega),e)$ is independent of $\omega$.
Therefore 
$$\abs{F(e)}=\abs{\Exp{\chi_{e\in\ga(\omega)}\cdot\we(\ga(\omega),e)}}
=\abs{\we(\ga(\omega),e)\,\Exp{\chi_{e\in\ga(\omega)}}}=\Prob{e\in\ga(\omega)}~.$$
Similarly
$\abs{F(e')}=\Prob{e'\in\ga(\omega)}$.
But since there are only two edges out of $v$,
an interface passes through $e$ if and only if it passes through $e'$, so
$$\abs{F(e)}=\Prob{e\in\ga(\omega)}=\Prob{e'\in\ga(\omega)}=\abs{F(e')}~,$$
and $F$ satisfies the Riemann Boundary Value Problem $\rbvp$,
proving the first statement of the Lemma.

By Lemma~\ref{lem:const} it follows that the function $\HH$ is 
constant on the boundary arcs.
Let $u$ and $v$ be the centers of squares
immediately below and above $b$.
Recall that we chose $\HH$ (which is determined up to an additive constant)
so that $\HH(u)=0$.
Thus  $\HH=0$ on the (counterclockwise) boundary arc $ab$.
Every interface passes through $b$, and furthermore has the same complex weight at $b$.
So 
$$\HH(v)=\HH(u)+\abs{F(b)}^2=0+\abs{\Exp{\chi_{b\in\ga}\cdot\we(\ga,b)}}^2
=\abs{\we(\ga,b)\,\Exp{\chi_{b\in\ga}}}^2=\Prob{b\in\ga(\omega)}^2=1~.$$
Therefore $\HH=1$ on the (counterclockwise) boundary arc $ba$. 
\end{proof}

When establishing the \slee~ connection in the sequel \cite{smirnov-fk2}, we will need the ``martingale property''
of $F$ with respect to the interface.
Consider the interface $\gamma$ as a random curve 
drawn from $a$ to $b$ with some parameterization.
Let $t<s$ be two stopping times
(actually we do not need a full parameterization at this point).
For the time $t$ (and similarly for $s$)
 we denote by $\gamma(t)$ the corresponding curve point,
and by $\gamma[0,t]$ the part of the curve from $\gamma(0):=a$ to $\gamma(t)$.
When speaking of domain $\Om\setminus\gamma[0,t]$, we
will actually mean its component of connectivity containing $b$.

Then the following holds
\begin{lem}\label{lem:discretets}
Let $z$ be a lattice vertex such that for every realization of the interface 
$z\in\Om\setminus\gamma[0,s]$.
Then for every realization of $\gamma[0,t]$
\begin{equation}
F(z,\Omega\setminus\gamma[0,t],\gamma(t),b)=
\Exps{\gamma[t,s]}{
F(z,\Omega\setminus\gamma[0,s],\gamma(s),b)|\gamma[0,t]}.
\nonumber 
\end{equation} 
The conditional expectation above is taken over all possible
continuations $\gamma[t,s]$ of the interface until the time $s$
assuming the part $\gamma[0,t]$ is given.
\end{lem}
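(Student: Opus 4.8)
The plan is to deduce the martingale property from two ingredients: the spatial (domain) Markov property of the random cluster interface, together with the fact that the complex weight $\we(\gamma,z)$ is anchored at the \emph{sink} $b$ rather than the source $a$. Since the exploration reveals the interface starting from $a$, the only portion of $\gamma$ that enters $\we(\gamma,z)$, namely the arc running from $b$ back to $z$, lies in the as yet unexplored region and is therefore untouched by conditioning on $\gamma[0,t]$. This is exactly why $F$ should behave like a martingale under the growing interface, and why no compensating winding phase will appear.

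First I would invoke the domain Markov property. Conditioned on the explored arc $\gamma[0,t]$, the law of the remaining configuration inside the component of $\Omega\setminus\gamma[0,t]$ containing $b$ is again that of the critical random cluster model, now taken in the smaller domain $\Omega\setminus\gamma[0,t]$ with Dobrushin boundary conditions whose source is the tip $\gamma(t)$ and whose sink is $b$. Indeed, the explored arc splits the domain, carrying the wired arc along one of its sides and the dual-wired arc along the other, so that the induced boundary data on $\Omega\setminus\gamma[0,t]$ are precisely of Dobrushin type with endpoints $\gamma(t)$ and $b$, and the loop weight $\br{\sqrt q}^{\mathrm{\#~loops}}$ factors between the explored and unexplored parts.

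Next I would check the compatibility of the complex weights. By hypothesis $z\in\Omega\setminus\gamma[0,s]\subseteq\Omega\setminus\gamma[0,t]$ for every realization, so the arc of $\gamma$ from $b$ to $z$ never meets $\gamma[0,t]$. Consequently the winding $\wi(\gamma,b\to z)$, and hence the weight $\we(\gamma,z)=\exp\br{-\frac i2\wi(\gamma,b\to z)}$, computed in the full domain $\Omega$ agrees with the one computed in the subdomain $\Omega\setminus\gamma[0,t]$. Combining this with the domain Markov step and the definition (\ref{eq:fdef}) of $F$ at a vertex yields
$$
F(z,\Omega\setminus\gamma[0,t],\gamma(t),b)
=\Exp{\chi_{z\in\gamma}\,\we(\gamma,z)\,\cos\frac\pi8\,\big|\,\gamma[0,t]}\,,
$$
and the analogous identity at time $s$.

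Finally the statement follows from the tower property of conditional expectation. Writing $G:=\chi_{z\in\gamma}\,\we(\gamma,z)\cos\frac\pi8$ for the single global random variable, the two identities above read $F(z,\Omega\setminus\gamma[0,t],\gamma(t),b)=\Exp{G\,\big|\,\gamma[0,t]}$ and $F(z,\Omega\setminus\gamma[0,s],\gamma(s),b)=\Exp{G\,\big|\,\gamma[0,s]}$. Since $t<s$, the $\sigma$-algebra generated by $\gamma[0,t]$ is coarser than that generated by $\gamma[0,s]$, so $\Exp{G\,\big|\,\gamma[0,t]}=\Exp{\Exp{G\,\big|\,\gamma[0,s]}\,\big|\,\gamma[0,t]}$, which is exactly the claimed equality. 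I expect the main obstacle to be the careful bookkeeping in the domain Markov step: one must verify that revealing $\gamma[0,t]$ induces precisely the Dobrushin conditions with tip $\gamma(t)$, including the correct treatment of vertices the interface visits twice and the selection of the component containing $b$, and that the loop weights split with no leftover factor. Everything else is either the elementary locality of the winding or the abstract tower property.
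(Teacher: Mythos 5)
Your proposal is correct and follows essentially the same route as the paper: the paper's proof likewise rests on the identity $\we(\ga,z)=\exp\br{-\frac i2\,\wi(\ga,b\to z)}$ (so the weight depends only on the arc from $b$ to $z$, which lies in the unexplored region), the implicit use of the domain Markov property to identify $F(z,\Om\setminus\gamma[0,t],\gamma(t),b)$ with a conditional expectation of a single global random variable, and the law of total expectation. Your write-up merely makes the domain Markov step more explicit than the paper does; there is no substantive difference.
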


\begin{proof}
With a fixed $\gamma[0,t]$ consider the remaining part $\gamma'$ of the curve $\gamma$
in the domain $\Om\setminus\gamma[0,t]$.
Plugging into the definition (\ref{eq:fdef}) of $F$ in this domain 
the formula (\ref{eq:defmarkov}) for the complex weight we obtain
$$F(z,\Om\setminus\gamma[0,t],\gamma(t),b)~=~
\Exp{X}~,$$
where the random variable $X$ is given by
$$X~:=~{\chi_{z\in\ga'}\cdot\exp\br{-\frac i2\,\wi(\ga',b\to z)}}\cdot \cos\frac\pi8~.
$$
Now the desired formula is the law of total expectation applied to $X$ and
the curve $\gamma[t,s]$:
$$\Exp{X}=\Exps{\gamma[t,s]}{\Exp{X|\gamma[t,s]}}~.$$
\end{proof}

\section{Passing to a limit}\label{sec:limit}

In this Section we prove the Theorem~\ref{thm:fermi}.
To derive convergence we will use only discrete analyticity 
and boundary values of $F$, and appeal directly to the Ising connection only
in quoting Lemma~\ref{lem:cont}.
As discussed in Remark~\ref{rem:cont},
the latter should have an Ising-independent proof.
So essentially the Theorem~\ref{thm:fermi} can be restated as a Theorem about preholomorphic functions
solving the Riemann boundary value problem (\ref{eq:rbvp}).

We work with a sequence of lattice domains,
which approximate 
a given domain $\Om$.
Consider a sequence of lattice domains $\Om_j$ 
with distinguished points $a_j$, $b_j$ and
with lattice steps $\mesh_j$.
Let  $F_j=F(z,\Om_j,a_j,b_j,\mesh_j\ZZ^2)$ be the expectation
as defined above and
denote $\HH_j:=\HH_{F_j}$.

Assume that
$\mesh_j\to0$ and $\tilde{\Om}_j,a_j,b_j\cara\Om,a,b$
as $j\to\infty$.
We use Carath\'eodory convergence of domains,
which is the convergence of normalized Riemann uniformization maps on compact subsets.

Recall that $\Phi$ is a mapping of $\Om$ to a horizontal strip $\RR\times[0,2]$,
such that $a$ and $b$ are mapped to $\mp\infty$.
Note that since $\Phi$ is uniquely defined up to a real additive constant,
its derivative, and hence the right hand side in (\ref{eq:convf}) are uniquely determined.
Recall Remark~\ref{rem:cara} that solution to the continuum Riemann boundary value problem
\rbvp, the function $\sqrt{\Phi'}$, is Carath\'eodory-stable.

\begin{rem}
The convergence (\ref{eq:convf}) in the Theorem~\ref{thm:fermi} holds on the boundary of $\Omega$ 
wherever it is a horizontal or vertical segment.
Since complex weight on the boundary is independent of configuration, we conclude that
for such $z\in\partial\Omega$
$$\frac1{\sqrt{\mesh_j}}\,{\Prob{z\in\gamma}}~\unif~ f(z):= \sqrt{\Phi'(z)}~,$$
from which one deduces that random cluster intersects the (smooth) boundary
on a set of dimension $1/2$ and that for the spin Ising model at criticality
 magnetization on the boundary is proportional to $\sqrt{\mesh}$
(with a specific factor).
\end{rem}

We start by establishing convergence of $\HH$'s:
\begin{lem}\label{lem:convh}
Away from $a$ and $b$ there is a uniform convergence:
\begin{equation}
\HH_{F_j}~\unif~h:=\IM \Phi\label{eq:convh}~.
\end{equation}
\end{lem}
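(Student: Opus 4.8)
The plan is to realise $\HH_j$ as squeezed between two families of genuinely discrete--harmonic functions that converge to the same continuum limit. By Lemma~\ref{lem:harm} the black restriction $\HH^b_j$ is subharmonic and the white restriction $\HH^w_j$ is superharmonic for the diagonal Laplacians, and by Lemma~\ref{lem:rbvp} (together with Lemma~\ref{lem:const}) both take the boundary value $0$ along the arc $ab$ and $1$ along the arc $ba$. First I would introduce the discrete--harmonic functions $S_j$ on the black squares and $T_j$ on the white squares, each carrying these same boundary values. The comparison principle then gives $\HH^b_j\le S_j$, because $\HH^b_j-S_j$ is subharmonic with vanishing boundary values, and $\HH^w_j\ge T_j$, because $\HH^w_j-T_j$ is superharmonic with vanishing boundary values. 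Moreover the defining relation (\ref{eq:phi}) reads $\HH(B)-\HH(W)=\abs{F(e)}^2\ge0$, so across every edge the black value dominates the adjacent white one.

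Next I would invoke the convergence theory for discrete harmonic functions (Appendix~\ref{sec:dharm}): under the Carath\'eodory convergence $\tilde\Om_j,a_j,b_j\cara\Om,a,b$ both $S_j$ and $T_j$ tend, uniformly on compact subsets of $\Om$ away from $a$ and $b$, to the continuum harmonic function $h_0$ with boundary values $0$ on $ab$ and $1$ on $ba$. For a neighbouring white square $W$ and black square $B$ this yields $T_j(W)\le\HH^w_j(W)\le\HH^b_j(B)\le S_j(B)$, and since $S_j(B)-T_j(W)\to0$ on such compacts the sandwich pins both restrictions to $h_0$. Equivalently, by Lemma~\ref{lem:cont} the two restrictions differ by at most $\delta_r(\mesh_j)\to0$ away from $a,b$, so any subsequential limit of $\HH_j$ is at once a limit of subharmonic functions and of superharmonic functions, hence both subharmonic and superharmonic, hence harmonic, and therefore equal to $h_0$.

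It then remains to identify $h_0$ with $\IM\Phi$. The map $\Phi$ sends $\Om$ conformally to the strip, carrying $ab$ to one horizontal side, $ba$ to the other, and $a,b$ to the two ends; hence $\IM\Phi$ is harmonic in $\Om$ and constant on each arc, with the same two constant boundary values prescribed for $h_0$. A bounded harmonic function is determined by its boundary values off the two exceptional points $a,b$, which carry no harmonic measure, so $h_0=\IM\Phi$ and (\ref{eq:convh}) follows. By Remark~\ref{rem:cara} this limit is Carath\'eodory--stable, so the convergence holds along the entire sequence rather than merely subsequentially.

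The main obstacle is the behaviour near $a$ and $b$. The boundary data of $h_0$ jumps there, so $S_j,T_j\to h_0$ can be uniform only on compacts avoiding $a,b$, and one must rule out harmonic mass concentrating at these points in the limit; this is precisely where the Ising a priori estimate enters, through Lemma~\ref{lem:cont} and the bound $F\to0$ away from $a,b$ that underlies it. A secondary technical point is to verify that the two--colour structure causes no harm: that $S_j$ and $T_j$, harmonic for the black and white diagonal Laplacians, approximate the \emph{same} continuum Dirichlet problem, and that the maximum principle is applied correctly up to the boundary squares adjacent to the half--edges at $a$ and $b$.
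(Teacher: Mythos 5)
Your proof is correct, and it follows the same overall strategy as the paper --- sandwich $\HH_j$ between discrete harmonic functions with boundary values $0$ on $ab$ and $1$ on $ba$ (your $S_j$, $T_j$ are the paper's $\tilde\HH_j^b$, $\tilde\HH_j^w$) and invoke the Carath\'eodory-robust convergence of Lemma~\ref{lem:cfl1} --- but the way you glue the two colours together is genuinely different, and in one respect cleaner. The paper bounds $\sup\abs{\HH(B_j)-\HH(W_j)}$ by $\delta_j\to0$ via Lemma~\ref{lem:cont}, i.e.\ via the Ising a priori estimate that $\Prob{e\in\gamma}\to0$, and then runs a four-term chain of inequalities. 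You instead observe that the defining relation $\HH(B)-\HH(W)=\abs{F(e)}^2\ge0$ already orders the black restriction above the adjacent white one, so that $T_j(W)\le\HH^w_j(W)\le\HH^b_j(B)\le S_j(B)$ with both ends converging to the same harmonic limit; the Ising-specific input (Lemma~\ref{lem:cont}) is then not needed at all for this lemma, only the one-sided comparison principles and the continuity of $h_0$. This is exactly the kind of ``Ising-independent'' argument the author asks for in Remark~\ref{rem:cont}, so your route buys something real. Two small caveats: your closing paragraph misattributes the control near $a$ and $b$ --- in the paper that is handled by the discrete Beurling estimate (Lemma~\ref{lem:beurling}) inside the proof of Lemma~\ref{lem:cfl1}, not by Lemma~\ref{lem:cont}; and, like the paper, you apply sub/superharmonicity up to squares adjacent to the boundary although Lemma~\ref{lem:harm} is only stated for interior squares, a gap common to both arguments.
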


\begin{proof}
Remove the union $V$ of some neighborhoods of $a$ and $b$, then there is a positive $r$
such that remaining parts of the boundary arcs $ab$ and $ba$
are at least $4r$-apart.
Then all points in $\Om\setminus V$ are at least $2r$ away
from at least one of the arcs $ab$ and $ba$.
We conclude that because of the 
Carath\'eodory convergence,
for sufficiently large $j$ all points in
in $\Om_j\setminus V$ are at least $r$ away
from at least one of the (discrete) arcs $ab$ and $ba$.

Hence by Lemma~\ref{lem:cont}
we have uniform convergence
\begin{equation}\label{eq:bwnclose}
\sup\abs{\HH(B_j)-\HH(W_j)}~=:~\delta_j~
\stackrel{j\to\infty}{\longrightarrow}0~,
\end{equation}
for neighboring squares $B_j,W_j\in\Om_j\setminus V$.

Considering $\HH_j$ restricted to black and white squares
we obtain functions 
$\HH_j^b$ and $\HH_j^w$
(superharmonic and subharmonic correspondingly by Lemma~\ref{lem:harm}).
If we extend $\HH_j^b$  from black ($\HH_j^w$ from white) 
lattice vertices to whole of $\Om$
in any reasonable way (e.g. making constant on lattice squares),
then  (\ref{eq:bwnclose}) can be rewritten
as convergence in the uniform norm on $\Om\setminus V$:
\begin{equation}
\norm{\HH_j^b-\HH_j^w}_{\Om\setminus V,\infty}~=:~\delta_j~
\stackrel{j\to\infty}{\longrightarrow}0~.\label{eq:bwclose}
\end{equation}

Let $\tilde\HH_j^b$ be a discrete harmonic function
on black squares with boundary values given by $h$,
define $\tilde\HH_j^w$ similarly.
Then
\begin{equation}
\HH_j^b+2\delta_j>\tilde\HH_j^b+\delta_j>\tilde\HH_j^w-\delta_j>\HH_j^w-2\delta_j
\end{equation}
on the boundary,
and hence inside domain $\Om$ since the four functions involved
are superharmonic, harmonic, harmonic and subharmonic correspondingly.
Together with (\ref{eq:bwclose}) this means that 
$$
\norm{\HH_j^b-\tilde\HH_j^b}_{\Om\setminus V,\infty}~<~5\,\delta_j~,
$$
and since $\tilde H_j$'s converge to $h$ by the Lemma~\ref{lem:cfl1},
so do $H_j$'s.
\end{proof}

If $H_j$'s are harmonic, the Theorem immediately follows.
Indeed, derivatives of $H_j$ admit an integral representation (in terms of $H_j$ itself),
so uniform convergence of $H_j$ implies uniform convergence inside
$\Omega$ of $\nabla H_j$ and hence its square root, i.e. $F_j$.
For a general approximately harmonic $H_j$ this doesn't work,
but in our case we can use that appropriate restriction of $H_j$ is subharmonic
and that  $F_j$ is exactly preholomorphic.

First we will need the following compactness estimate:

\begin{lem}\label{lem:fl2}
Let $U$ be a subdomain compactly contained in $\Omega$,
and denote by $U_j$ its discretizations with mesh $\mesh_j$.
Then 
 $${\mesh_j}\sum_{U_j}\abs{F_j}^2$$
is uniformly bounded.
\end{lem}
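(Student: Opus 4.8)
The plan is to reduce the $L^2$ sum first to a first-order quantity for $\HH_{F_j}$, then to a discrete Dirichlet energy, which is controlled by the sub/superharmonicity of Lemma~\ref{lem:harm}. The key point is a clean pointwise bound for $\abs{F_j}^2$ in terms of differences of $\HH_j$. Applying the identity $2\mesh(\HH(u')-\HH(u))=\IM\br{F(z)^2\,(u'-u)}$ of Appendix~\ref{sec:subharm} (equation (\ref{eq:hintf})) at a vertex $z=v$ to the two diagonal pairs of squares meeting at $v$ — one black pair $B,B'$ and one white pair $W,W'$, whose connecting vectors $B-B'$ and $W-W'$ are orthogonal and of equal length — one solves for $F(v)^2$. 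Because the two directions are perpendicular, the common factor $\mesh_j$ cancels and one obtains
$$\abs{F_j(v)}^2\le\sqrt2\,\br{\abs{\HH_j(B)-\HH_j(B')}+\abs{\HH_j(W)-\HH_j(W')}}.$$
Each interior vertex carries exactly one black and one white diagonal pair, and each such pair meets at a single vertex, so summing over $v\in U_j$ and regrouping gives
$$\sum_{U_j}\abs{F_j}^2\le C\br{\sum_{B\sim B'}\abs{\HH_j^b(B)-\HH_j^b(B')}+\sum_{W\sim W'}\abs{\HH_j^w(W)-\HH_j^w(W')}},$$
the inner sums running over neighbouring black (resp.\ white) squares in a fixed neighbourhood of $U$.

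Next I would pass to the Dirichlet energies by Cauchy--Schwarz. Since the number of such bonds inside a compact set is $O(\mesh_j^{-2})$,
$$\sum_{B\sim B'}\abs{\HH_j^b(B)-\HH_j^b(B')}\le C\,\mesh_j^{-1}\br{\sum_{B\sim B'}\br{\HH_j^b(B)-\HH_j^b(B')}^2}^{1/2},$$
and likewise for $\HH_j^w$. Multiplying by $\mesh_j$, the asserted bound on $\mesh_j\sum_{U_j}\abs{F_j}^2$ follows once the discrete Dirichlet energies of $\HH_j^b$ and $\HH_j^w$ over a neighbourhood $U'\Subset\Om$ of $U$ are bounded uniformly in $j$; here the two factors $\mesh_j$ and $\mesh_j^{-1}$ cancel and the energy contributes the final constant.

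The remaining task, and the heart of the matter, is the energy bound, which I would obtain by a discrete Caccioppoli inequality. By Lemma~\ref{lem:convh} the functions $\HH_j$ are uniformly bounded on $U'$ (they converge to the bounded $h=\IM\Phi$ away from $a$ and $b$), and by Lemma~\ref{lem:harm} the function $\HH_j^b$ is subharmonic. Testing $\Delta\HH_j^b$ against $\eta^2\HH_j^b$, where $\eta=1$ on $U$ and is supported in $U'$, and summing by parts yields
$$\sum_{U}\br{\HH_j^b(B)-\HH_j^b(B')}^2\le C\br{\norm{\HH_j^b}_{\infty,U'}^2\,\textstyle\sum\abs{\nabla\eta}^2+\norm{\HH_j^b}_{\infty,U'}\,\textstyle\sum_{U'}\Delta\HH_j^b}.$$
The gradient-of-cutoff term is $O(1)$, and the Laplacian term is controlled because the total mass of the nonnegative measure $\Delta\HH_j^b$ over $U'$ is bounded: testing $\Delta\HH_j^b\ge0$ against a second cutoff equal to $1$ on $U'$ and summing by parts bounds $\sum_{U'}\Delta\HH_j^b$ by $C\norm{\HH_j^b}_{\infty,U''}$. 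The same reasoning applies to the superharmonic $\HH_j^w$ (equivalently $-\HH_j^w$ subharmonic).

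The main obstacle is precisely this Caccioppoli step — absorbing the Laplacian term — but the nonnegativity of $\Delta\HH^b$ furnished by Lemma~\ref{lem:harm}, together with the uniform bound on $\HH_j$ from Lemma~\ref{lem:convh}, reduces it to the standard discrete-harmonic estimates collected in Appendix~\ref{sec:dharm}. Everything else is bookkeeping of the $\mesh_j$ powers, which balance exactly because $\spin=\tfrac12$.
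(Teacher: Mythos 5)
Your argument is correct, and after its first step it takes a genuinely different route from the paper's. The opening reduction is the same in both: applying (\ref{eq:hintf}) at a vertex in the two orthogonal diagonal directions gives $\abs{F_j(v)}^2\le\sqrt2\br{\abs{\HH_j(B)-\HH_j(B')}+\abs{\HH_j(W)-\HH_j(W')}}$, so that $\mesh_j\sum_{U_j}\abs{F_j}^2$ is controlled by $\mesh_j\sum\abs{\nabla\HH_j}$ over the black and white sublattices. From there the paper stays in $\ell^1$: on a square $Q$ with $9Q\Subset\Om$ it compares $\HH_j$ with the discrete harmonic function $\HH_j^*$ having the same boundary values on $9Q$ (whose gradient sum is bounded via Lemma~\ref{lem:cfl} and the convergence of Lemma~\ref{lem:convh}), and controls $\nabla\br{\HH_j-\HH_j^*}$ through the Green's function representation of the subharmonic defect together with the gradient-versus-value estimate of Lemma~\ref{lem:green}; this uses the full strength of $\sup\abs{\HH_j-\HH_j^*}=o(1)$ and the Appendix~\ref{sec:dharm} machinery. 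You instead pass to $\ell^2$ by Cauchy--Schwarz (the factors $\mesh_j$ and $\mesh_j^{-1}$ cancelling exactly as you say) and bound the resulting Dirichlet energy by a discrete Caccioppoli inequality, using only the sign of the Laplacian from Lemma~\ref{lem:harm} and a uniform $L^\infty$ bound on $\HH_j$ — which is in fact free, since the boundary values of Lemma~\ref{lem:rbvp} together with sub/superharmonicity and $\HH^b\ge\HH^w$ give $0\le\HH_j\le1$, so even Lemma~\ref{lem:convh} is not strictly needed at this point. Your route is more elementary and self-contained: no Green's functions, no harmonic comparison function, and no use of the convergence $\HH_j\to h$; the paper's argument yields the slightly stronger conclusion that the non-harmonic contribution to $\mesh_j\sum\abs{\nabla\HH_j}$ is $o(1)$ rather than $O(1)$, a refinement not needed for the statement of this Lemma. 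The only points to write out carefully in your version are the discrete summation by parts with the cutoff $\eta^2$ (the cross term must be absorbed into the energy, with the usual symmetrization over bonds) and the mass bound $\sum_{U'}\Delta\HH_j^b\le\norm{\HH_j^b}_\infty\sum\abs{\Delta\eta'}=O(1)$; both are routine.
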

\begin{rem}
Note that the expression above is essentially $L^2$ norm of $F_j/{\sqrt{\mesh_j}}$.
\end{rem}

\begin{proof}
Note that when we jump diagonally over a vertex $z$,
the function $H_j$ changes by $\RE F_j^2$ or $\IM F_j^2$
depending on the direction.
It follows that it is enough to prove uniform boundedness of
 $${\mesh_j}\sum_{U_j}\abs{\nabla H_j}$$
where $\nabla$ denotes discrete difference gradient of $H_j$
restricted to black or white vertices.

From now on, we will work with $H_j$ on the ``black'' sublattice,
having centers of black squares as vertices.
In particular, $\Delta$ will denote the corresponding Laplacian.
Recall that restriction of $H_j$ to this lattice is subharmonic, i.e. $\Delta H_j\ge0$.

Fix a  square $Q$ of side length $l$  such that a nine times bigger square $9Q$
is contained in  $\Omega$.
We will denote by the same letter the
lattice approximation $Q=Q_j$ of $Q$,
whose side will have $L=L_j:=l/\mesh_j$ edges.
It is enough to find for every such $Q$ a uniform in $j$
bound on
\begin{equation}\label{eq:qbound}
{\mesh_j}\sum_{Q_j}\abs{\nabla\HH_j}.
\end{equation}

\def\HHH{\HH^*}
Recall that on $9Q$ one has
uniform convergence $\HH_j\unif \HH$.
Denote by $\HHH_j$ the discrete harmonic function on $9Q_j$
having the same boundary values as $\HH_j$.
Then $\HHH_j\unif\HH$ on $\partial 9Q$,
and by the Lemma~\ref{lem:cfl}
$\frac1{\mesh_j}\nabla\HHH_j\unif\nabla\HH$ on $Q$.
Summing over $Q$ we infer that
$${\mesh_j}\sum_{Q_j}\abs{\nabla\HHH_j}$$
converges to the area integral of $\abs{\nabla\HH}$ and in particular is bounded.
Thus to bound (\ref{eq:qbound}) it is enough to bound 
$${\mesh_j}\sum_{Q_j}\abs{\nabla\br{\HH_j-\HHH_j}}.$$

Since $\HH_j$ and $\HHH_j$ have the same limit, their difference converges
uniformly to zero:
\begin{equation}\label{eq:convdif}
\sup_{9Q}\abs{\HHH_j-\HH_j}=o_j\to0~~~\text{when}~~j\to\infty.
\end{equation}

Denote by $G_j(x,y)=G_{9Q_j}(x,y)$ the discrete Green's function for the square $9Q_j$,
with $\Delta_x G(x,y)=\Delta_y G(x,y)=\delta_{x-y}$ and vanishing on the boundary of $9Q\times9Q$.
Note that it is negative inside $9Q$.

Using discrete analogue of the Riesz representation formula
we can write a subharmonic function $\HH_j-\HHH_j$
as a convolution of its Laplacian
(which coincides with that of $\HH_j$) with the Green's function:
\begin{equation}
\label{eq:hjriesz}
\HH_j(x)-\HHH_j(x)=\sum_{y\in 9Q} \Delta \HH_j(y) G(x,y).
\end{equation}
Taking difference gradient, we arrive at
\begin{equation}
\label{eq:nablahjriesz}
\nabla\br{\HH_j(x)-\HHH_j(x)}=\sum_{y\in 9Q} \Delta \HH_j(y) \nabla_x G(x,y) .
\end{equation}

Using the Lemma~\ref{lem:green} we write:
\begin{align}
\sum_{x\in Q}\abs{\nabla\br{\HH_j(x)-\HHH_j(x)}}&=
\sum_{x\in Q}\abs{\sum_{y\in 9Q} \Delta \HH_j(y) \nabla_x G(x,y)}
\le\sum_{y\in 9Q} {\Delta \HH_j(y)}\sum_{x\in Q} \abs{\nabla_x G(x,y)}\nonumber\\
&\le\frac{\const}{L}\sum_{y\in 9Q} {\Delta \HH_j(y)}\sum_{x\in Q} \abs{G(x,y)}
=\frac{\const}{L}\sum_{x\in Q}\sum_{y\in 9Q} {\Delta \HH_j(y) G(x,y)}\nonumber\\
&=\frac{\const}{L}\sum_{x\in Q}\abs{{\HH_j(x)-\HHH_j(x)}}< 
\const \, L\, o_j=\frac{\const\, l\,o_j}{\mesh_j}~,\nonumber
\end{align}
proving the Lemma.
\end{proof}

Lemma~\ref{lem:fl2} implies 
(by Theorem V.12.a in \cite{\ferrandbook}
applied to the primitives of $F_j$'s )
 that the sequence $\{F_j/\sqrt{\mesh_j}\}$
is precompact in the uniform topology on any compact subset of $\Omega$.
To show its convergence to $f$ it is sufficient
to establish convergence to $f$ of a uniformly converging (say to $g$) subsequence.
Uniform convergence implies convergence of the (discrete) integral of $F_j^2/{\mesh_j}$ to that of $g^2$.
The imaginary part of the former is given by $\HH_j+\const$,
which also converges to $\IM\Phi$,
so we conclude that the two limits are equal:
$$\IM\Phi=\lim_j\HH_j=\IM\int g^2+\const.$$
Since the function involved are analytic, equality of imaginary parts
implies that they are equal up to a constant.
Differentiating and taking the square root, we conclude
that $g=\sqrt{\Phi'}$, thus proving the Theorem~\ref{thm:fermi}.

\appendix
\section{A priori estimates}\label{sec:apriori}

We use an estimate on the modulus of continuity of our function $F$,
which essentially states that the interface cannot be  space filling.
It reduces to 
rather weak (compared to what is known)
magnetization estimates,
which ultimately can be retrieved from the (old) literature.
However it seems that ``modern'' proofs are elusive and would be worthwhile.
It is also possible to build everything on the basis
of discrete analyticity, without appealing to properties of the Ising model.
But for now we present a sketch of the proof using classical results
and assuming unlike in the rest of the paper
knowledge of the
basic properties and techniques of the Ising model.

\begin{lem}\label{lem:notspacefilling}
For every positive $r$ there is a function $\delta_r(x):\,\RR_+\to\RR_+$ 
such that $\lim_{x\to0}\delta_r(x)=0$ and if an edge $e$
is $r$ away from at least one of the boundary arcs $ab$ or $ba$, then
$$\Prob{e\in\gamma}~\le~\delta_r(\mesh)~.$$
\end{lem}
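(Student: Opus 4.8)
The plan is to bound $\Prob{e\in\ga}$ by the probability of a polychromatic two-arm event around $e$, and then to estimate the latter by a multiscale argument resting on classical crossing and magnetization estimates for the critical Ising model.

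First I would exploit the defining property of the interface: $\ga$ runs between the open cluster attached to the wired arc and the dual-open cluster attached to the dual-wired arc, so on its two sides it is flanked by an open path and by a dual-open path. Consequently, if $e\in\ga$ then in a neighborhood of $e$ there are simultaneously an open arm and a dual-open arm of opposite type. Since $e$ is at distance at least $r$ from one of the two arcs, and both endpoints $a$ and $b$ lie on each arc, $e$ is at distance at least $r$ from both $a$ and $b$; hence the curve $\ga$ must leave the disc of radius $r$ about $e$ on its way from $e$ to an endpoint, and the two flanking paths accompany it across the whole annulus $A=A\br{e;\const\,\mesh,\,r}$. Thus
$$
\Prob{e\in\ga}~\le~\Prob{A\ \text{is crossed by an open and a dual-open arm}}~.
$$

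Second I would estimate this two-arm crossing probability. Split $A$ into roughly $\log_2\br{r/\mesh}$ disjoint dyadic sub-annuli. The probability that a single dyadic annulus is crossed by both an open and a dual-open arm is bounded away from $1$ uniformly in the scale: this is where the classical input enters, since a monochromatic cluster crosses a fixed-aspect-ratio annulus with probability bounded away from $1$ by the known decay of the Ising magnetization, and duality gives the complementary estimate for the dual-open arm. Using the domain Markov property of the random cluster measure together with monotonicity (FKG) to handle the conditioning across scales, I would chain these uniform single-annulus bounds over the $\sim\log_2\br{r/\mesh}$ sub-annuli to obtain
$$
\Prob{A\ \text{is crossed by an open and a dual-open arm}}~\le~\const\,\br{\mesh/r}^{\alpha}
$$
for some $\alpha=\alpha(r)>0$. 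Taking $\delta_r(\mesh):=\const\,\br{\mesh/r}^{\alpha}$, which tends to $0$ as $\mesh\to0$, would complete the proof.

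The main obstacle is this second step: the estimate must produce genuine decay in $\mesh$, not merely in $r$, so one has to control all $\sim\log\br{r/\mesh}$ scales at once and justify the near-independence of crossings at well-separated scales. This requires a Russo--Seymour--Welsh type crossing estimate for critical FK-Ising and the quasi-multiplicativity of arm probabilities; while the monochromatic one-arm (magnetization) decay is classical, assembling it into a polychromatic two-arm bound with explicit dependence on the number of scales is the delicate point, and is precisely where the Ising-specific crossing estimates are indispensable.
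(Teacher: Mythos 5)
Your reduction to a two-arm event is sound, but the second step contains a genuine gap, and the overall strategy is far heavier than what the statement requires. First, the single-scale input you invoke does not follow from the magnetization decay: the probability that a dyadic annulus $A(n,2n)$ is crossed by an open cluster is not controlled by the one-arm probability from a fixed point --- a union bound over the $\asymp n/\mesh$ possible starting points on the inner circle, each contributing a magnetization of order $(\mesh/n)^{1/8}$, exceeds $1$. Bounding an annulus-crossing probability away from $1$ at criticality is precisely a Russo--Seymour--Welsh statement (for the dual model); it is not among the classical Kaufman--Onsager--Yang results and cannot be extracted from magnetization asymptotics alone. Second, the chaining across scales is problematic for a \emph{polychromatic} event: the simultaneous occurrence of an open arm and a dual-open arm is neither increasing nor decreasing, so FKG and monotonicity in boundary conditions do not let you decouple the scales; one needs separation of arms and quasi-multiplicativity, again substantial machinery that you are assuming rather than proving.

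The paper avoids all of this by discarding one of the two arms. Since $e$ is at distance $r$ from (say) the wired arc $ba$, the event $e\in\gamma$ forces the adjacent black square $B$ to be joined by an open cluster to $ba$, hence to the boundary of the single macroscopic box $Q$ of side $r/2$ centred at $e$ --- no dyadic decomposition is needed. Monotonicity in boundary conditions (valid for FK with $q\ge1$) lets one successively replace the Dobrushin boundary conditions by fully wired conditions on $\partial Q$, and the resulting one-arm probability is exactly the spin-Ising magnetization at $B$ in $Q$ with plus boundary conditions, which Kaufman--Onsager and Yang showed tends to $0$ with the mesh for fixed $r$. (The case where $e$ is far from the dual-wired arc is symmetric by self-duality.) So the one classical input you correctly identify --- magnetization decay --- already suffices, provided you use only one arm at one scale rather than two arms at all scales.
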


\begin{proof}
Denote by $B$ and $W$ the centers of two neighboring squares (black and white),
separated by an edge $e$.
If an edge $e$ belongs to the interface, then
$B$ is connected by a cluster to the arc $ba$ and 
$W$ -- by a dual cluster to the arc $ab$ (since the interface separates the two).
Assume that the edge $e$
is at least $r$ away from the boundary arcs $ba$,
the case when it is away from $ab$ is treated similarly with clusters
replaced by dual clusters (which leads to the same result since the model is self-dual).
Denote by $Q$ the square with side length $r/2$ centered around $e$,
by our assumption it does not intersect the boundary arc $ba$.
Then by monotonicity we can write
\begin{align}
\PP&\br{e\in\gamma}~=\nonumber\\
=&\Prob{B\mathrm{~connected~by~a~cluster~to~the~arc~}ba
\mathrm{~and~}W\mathrm{~connected~by~a~dual~cluster~to~the~arc~}ab}\nonumber\\
\le&\Prob{B\mathrm{~connected~by~a~cluster~to~the~arc~}ba}\nonumber\\
\le&\Prob{B\mathrm{~connected~by~a~cluster~to~}\partial Q
\mathrm{~inside~}\Om
\mathrm{~with~}ba\mathrm{~wired~and~}ab\mathrm{~dual-wired}}\nonumber\\
\le&\Prob{B\mathrm{~connected~by~a~cluster~to~}\partial Q
\mathrm{~inside~}\Om
\mathrm{~with~}ba\cup ab\setminus Q\mathrm{~wired~and~}ab\cap Q\mathrm{~dual-wired}}\nonumber\\
=&\Prob{B\mathrm{~connected~by~a~cluster~to~}\partial Q
\mathrm{~inside~}\Om
\mathrm{~with~}\partial\Om\setminus Q\mathrm{~wired~and~}\partial\Om\cap Q\mathrm{~dual-wired}}\nonumber\\
\le&\Prob{B\mathrm{~connected~by~a~cluster~to~}\partial Q
\mathrm{~inside~}\Om
\mathrm{~with~}\Om\cap\partial Q\mathrm{~wired~and~}\partial\Om\cap Q\mathrm{~dual-wired}}\nonumber\\
\le&\Prob{B\mathrm{~connected~by~a~cluster~to~}\partial Q
\mathrm{~inside~}Q
\mathrm{~with~}\partial Q\mathrm{~wired~}}\nonumber\\
=&\mathrm{~magnetization~at~}B\mathrm{~in ~the~Ising~spin~model~in~}Q
\mathrm{~with~}``+"\mathrm{~boundary~conditions~on~}\partial Q\nonumber.
\nonumber
\end{align}
The right hand side, the
magnetization at criticality, 
was computed by Kaufman--Onsager and Yang \cite{\kaufmanonsageriv,\yang},
and for a fixed $r$ it tends to zero with the mesh $\mesh$ tending to zero,
proving the Lemma.

Note that magnetization was computed to tend to zero like $\mesh^{1/8}$, but we do
not need this stronger statement. On the other hand, after convergence to \slee~is 
established, one can obtain even better asymptotics for the magnetization,
deriving a conformally covariant constant in front of $\mesh^{1/8}$.
\end{proof}

\section{Estimates of discrete harmonic functions}\label{sec:dharm}

In this Section we collect the needed facts about discrete harmonic functions.
Let $\mesh_j\ZZ^2$ be a sequence of lattices with mesh $\mesh_j$ tending to zero.
For a domain $U$ we denote by $U_j$ the corresponding lattice domain
in  $\mesh_j\ZZ^2$.
As usual, functions on a lattice domain are thought of as defined in the whole domain,
say by piecewise constant or linear continuation.

A classical fact says that solution of the discrete Dirichlet problem 
converges uniformly away from the boundary
to the solution of the continuum one. 
The following Lemma can be found
in the seminal paper \cite{\courantetal}
by Courant, Friedrichs and Lewy.
It can also be deduced from the random walk interpretation
of harmonic functions (which is also discussed in \cite{\courantetal}).

\begin{lem}\label{lem:cfl}
Let $\Om_j$ be a sequence of lattice approximations to a smooth domain $\Om$
with the mesh $\mesh_j$ tending to zero.
Let $\brs{h_j}$ be a sequence of discrete harmonic functions on
lattice domains $\Om_j$
and $h$ be a harmonic function on $\Om$ with continuous boundary values.
If $h_j$ converge uniformly to $h$ on $\partial\Om$,
then inside $\Om$ away from the boundary
$h_j$ and its partial discrete derivatives (i.e. normalized differences) 
are equicontinuous and converge uniformly to 
their continuum counterparts, i.e.
$h$ and its partial derivatives.
\end{lem}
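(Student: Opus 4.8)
The plan is to exploit the random walk representation of discrete harmonic functions together with interior regularity estimates, and then to pin down the limit using uniqueness for the continuum Dirichlet problem. Writing $\omega_j(x,\cdot)$ for the exit distribution of simple random walk on $\mesh_j\ZZ^2$ started at $x$ and killed upon leaving $\Om_j$, every discrete harmonic $h_j$ admits the representation $h_j(x)=\sum_y \omega_j(x,y)\,h_j(y)$, the sum being taken over $\partial\Om_j$. Since the boundary data converge uniformly they are uniformly bounded, so $\{h_j\}$ is uniformly bounded on $\Om$. There are then three things to establish: (i) equicontinuity of $\{h_j\}$, and of its normalized gradients, on compact subsets of $\Om$; (ii) that any uniform subsequential limit is harmonic and attains the boundary values $h$; (iii) the upgrade to convergence of the discrete derivatives.

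For (i) I would use the standard interior gradient estimate for discrete harmonic functions: if $h_j$ is discrete harmonic in a lattice ball of radius $\rho$ about $x$, then the difference of $h_j$ over a single edge at $x$ is bounded by $\const\cdot\br{\mesh_j/\rho}\,\mathrm{osc}\,h_j$. The cleanest route is a coupling argument: run two simple random walks from neighboring vertices $x,x'$, coupled to agree after their first meeting; the probability they have not yet met before exiting the ball is $\OO\br{\mesh_j/\rho}$, and this controls $\abs{h_j(x)-h_j(x')}$ by the oscillation (alternatively one quotes a discrete Harnack inequality). Dividing by $\mesh_j$ shows the normalized differences are uniformly bounded on compact subsets, giving equicontinuity of $\{h_j\}$; since the discrete gradient of a discrete harmonic function is again (essentially) discrete harmonic, the same estimate applied once more controls second differences and yields equicontinuity of the normalized gradients.

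By Arzel\`a--Ascoli we may pass to a subsequence along which $h_j\unif\tilde h$ uniformly on compact subsets of $\Om$, and passing the discrete mean value property to the limit shows $\tilde h$ is harmonic in $\Om$. The main obstacle is the boundary: one must show that $\tilde h$ extends continuously to $\partial\Om$ with $\tilde h=h$ there. Here the smoothness of $\Om$ enters. At each boundary point one builds a barrier --- a continuum superharmonic function vanishing at that point and positive elsewhere --- together with a discrete counterpart, and compares via the maximum principle to show that $\omega_j(x,\cdot)$ concentrates near the exit point as $x\to\partial\Om$, uniformly in $j$. This forces the $\limsup$ and $\liminf$ of $h_j$ at a boundary point to coincide with the common value of $h$, so $\tilde h=h$ on $\partial\Om$. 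Uniqueness for the continuum Dirichlet problem then gives $\tilde h=h$; as every subsequence has a further subsequence converging to the same $h$, the full sequence converges uniformly away from the boundary.

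Finally, for the derivatives, having established $h_j\unif h$ on a slightly larger compact set, the equicontinuity of the normalized gradients from step (i) lets us extract a subsequential uniform limit $g$ of $\mesh_j^{-1}\nabla h_j$. Uniform convergence of the $h_j$ together with uniform convergence of their difference quotients forces $g=\nabla h$, i.e.\ the limit of the gradients is the gradient of the limit; since this limit is independent of the subsequence, $\mesh_j^{-1}\nabla h_j\unif\nabla h$ away from the boundary, which is precisely the assertion of the Lemma.
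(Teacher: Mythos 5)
Your proof is correct and takes exactly the route the paper points to: the paper gives no proof of this lemma, merely citing Courant--Friedrichs--Lewy and remarking that it ``can also be deduced from the random walk interpretation of harmonic functions,'' which is precisely your argument (exit-distribution representation, coupling for the interior gradient estimate, barriers at the boundary, and uniqueness for the continuum Dirichlet problem). The two steps you assert rather than prove --- the $\OO(\mesh_j/\rho)$ coupling bound for the single-edge difference and the discrete barrier construction on a smooth domain --- are standard and consistent with the level of detail the paper itself adopts in this appendix.
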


In our case the lattice domains approximate $\Om$
in the Carath\'eodory (rather than in the Hausdorff) sense,
furthermore the boundary values are discontinuous.
We will deduce a suitable result using the following
well-known estimate:

\begin{lem}\label{lem:beurling}
There exist an increasing positive function $\epsilon$ on $\RR_+$
with $\lim_{x\to0+}\epsilon(x)=0$, such that the following holds.
Let $H$ be a non-negative bounded discrete harmonic function in a 
simply-connected domain $\Om$
with boundary values equal to zero on $\Om\cap B(z,r)$
and at most one elsewhere.
If $\dist{z,\partial\Om}<\delta$,
then $H(z)<\epsilon(\delta/r)$.
\end{lem}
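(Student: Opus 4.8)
The plan is to pass from the discrete harmonic function $H$ to the simple random walk and thereby reduce the bound to a Beurling-type escape estimate, which is where simple connectivity of $\Om$ will be used. First I would let $S$ be the simple random walk started at $z$ and let $\tau$ be the first time it leaves $\Om$ by reaching $\CC\setminus\Om$. Since $\Om$ is a proper subdomain and the planar walk is recurrent, $\tau<\infty$ almost surely, and because $H$ is bounded and discrete harmonic the mean value property (optional stopping) gives
\[
H(z)=\Exp{H(S_\tau)}.
\]
Splitting according to the exit location and using that the boundary values vanish on $B(z,r)$ and are at most $1$ elsewhere,
\[
H(z)=\Exp{H(S_\tau)\,;\,S_\tau\notin B(z,r)}\le\Prob{S_\tau\notin B(z,r)}.
\]
Finally, exiting $\Om$ outside $B(z,r)$ forces the walk to cross $\partial B(z,r)$ while still inside $\Om$, so with $\sigma$ the hitting time of $\partial B(z,r)$ we obtain
\[
H(z)\le\Prob{\sigma\le\tau}=\Prob{\text{the walk reaches }\partial B(z,r)\text{ before hitting }\CC\setminus\Om}.
\]

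It then remains to bound this escape probability. Because $\Om$ is simply connected, $K:=\CC\setminus\Om$ is connected in the sphere $\CC\cup\{\infty\}$, and by hypothesis $\dist{z,K}<\delta$. This is precisely the setup of the discrete Beurling estimate: a planar walk started within $\delta$ of a connected set reaches distance $r$ from its starting point before hitting that set with probability at most $\epsilon(\delta/r)$, where $\epsilon$ is an increasing function with $\epsilon(x)\to0$ as $x\to0$ (one may even take $\epsilon(x)=\const\cdot x^{1/2}$). Taking this $\epsilon$ proves the Lemma; note that it depends only on the ratio $\delta/r$ and not on $\Om$, and that I would only need $\epsilon\to0$, not the sharp exponent.

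To indicate why connectivity forces decay, I would cover the range of radii between $\delta$ and $r$ by the $\sim\log_2(r/\delta)$ dyadic annuli $A_k=\brs{w:2^k\delta\le\abs{w-z}<2^{k+1}\delta}$. Since $K$ is connected and meets both $B(z,\delta)$ and the complement of $B(z,r)$, in every $A_k$ it contains a crosscut joining the two bounding circles, and each such crosscut absorbs the surviving walk with a probability bounded below, so the survival probability decays geometrically in the number of scales, yielding a power of $\delta/r$. The hard part is exactly this per-scale absorption estimate: a single radial crosscut can be skirted from a favorable entrance point, so one cannot treat the scales as independent but must control the law of the walk conditioned on survival — equivalently, invoke the Beurling projection theorem, for which the radial slit is extremal. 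As this is classical, I would quote the discrete Beurling estimate from \cite{\kestenbook,\lawlerbook} rather than reprove it.
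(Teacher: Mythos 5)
Your argument is correct and is essentially the route the paper takes: the paper states that the lemma is a weak form of the discrete Beurling estimate, to be obtained by reformulating $H(z)$ in terms of hitting probabilities of the simple random walk and then quoting Kesten's result on hitting probabilities of random walks on $\ZZ^d$. Your optional-stopping reduction and the observation that simple connectivity makes the complement a connected set crossing all scales between $\delta$ and $r$ are exactly the (routine) reformulation the paper leaves implicit, and quoting the classical estimate for the final bound is what the paper does as well.
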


This is a weaker version of discrete Beurling's estimate
$\epsilon(\delta/r)=\const\sqrt{\delta/r}$.
It can be reformulated in terms of the hitting probabilities
for the simple random walk and is found
in Kesten's \cite{\kestenwalks}.

Now we can prove the needed version of the convergence Lemma:

\begin{lem}\label{lem:cfl1}
Suppose that as the lattice mesh $\mesh_j$ goes to zero,
the discrete domains $\Om_j$ (with points $a_j$, $b_j$ on the boundary)
converge to the domain $\Om$ (with points $a$, $b$ on the boundary)
in the Carath\'eodory sense.
Let $h_j$ be a discrete harmonic function on $\Om_j$
with boundary values on the arc $a_jb_j$ and $1$
on the arc $b_ja_j$.
Then inside $\Om$ discrete functions $h_j$ converge uniformly
to their continuum counterpart $h$,
which is harmonic in $\Om$
with  boundary values $0$ on the arc $ab$ and $1$
on the arc $ba$. 
\end{lem}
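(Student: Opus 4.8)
The plan is to avoid passing to the limit in the boundary data directly (the data jumps from $0$ to $1$ at $a$ and $b$, so Lemma~\ref{lem:cfl} does not apply near those points) and instead to identify the limit abstractly by a compactness argument, pinning down its boundary values with the discrete Beurling estimate. Let $h$ denote the harmonic measure of the arc $ba$ in $\Om$, i.e.\ the unique bounded harmonic function with boundary values $0$ on $ab$ and $1$ on $ba$ (a normalization of $\IM\Phi$); as in Remark~\ref{rem:cara} this is stable under Carath\'eodory convergence.

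First I would establish precompactness of $\{h_j\}$. Since $0\le h_j\le1$ by the maximum principle, and since any compact $K\subset\Om$ lies in $\Om_j$ for $j$ large (Carath\'eodory convergence), the standard interior estimates for discrete harmonic functions (as in \cite{\courantetal}, cf.\ Lemma~\ref{lem:cfl}) make $\{h_j\}$ equicontinuous on $K$. Hence along a subsequence $h_j\unif h_\infty$ on compact subsets of $\Om$, where $h_\infty$ is harmonic (a locally uniform limit of discrete harmonic functions has vanishing Laplacian) and $0\le h_\infty\le1$. It then suffices to show that every such subsequential limit equals $h$, for then the full sequence converges.

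Next I would pin down the boundary values of $h_\infty$ with Lemma~\ref{lem:beurling}. Fix a prime end $p$ on the open arc $ab$ and $r>0$ so small that $B(p,r)\cap\partial\Om\subset ab$. Carath\'eodory convergence ensures that, for $j$ large, the zero boundary data of $h_j$ occupies all of $\partial\Om_j\cap B(p,r)$. Then for any $z\in B(p,r/2)$ the function $h_j$ vanishes on $\partial\Om_j\cap B(z,r/2)$ and is at most $1$ elsewhere, so Lemma~\ref{lem:beurling} gives $h_j(z)<\epsilon\br{2\,\dist{z,\partial\Om_j}/r}$. Letting $j\to\infty$ and then $z\to p$ shows $h_\infty\to0$ at $p$; applying the same argument to $1-h_j$ on the open arc $ba$ gives $h_\infty\to1$ there. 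Thus $h_\infty$ has the correct boundary values at every prime end except possibly the two endpoints $a$ and $b$.

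Finally, two boundary points carry no harmonic measure, so the maximum principle for bounded harmonic functions forces $h_\infty=h$: applying it to $\pm(h_\infty-h)$, which is bounded and has nonpositive boundary $\limsup$ off $\{a,b\}$, yields $h_\infty\equiv h$. As the limit is independent of the subsequence, $h_j\unif h$ inside $\Om$, proving the lemma. The main obstacle will be the boundary-value step: because the domains converge only in the Carath\'eodory (conformal) sense and need not be smooth, one must translate the convergence of the uniformizing maps into the concrete statement that the zero data of $h_j$ genuinely fills a fixed ball around each good boundary point $p$, so that the Beurling barrier applies uniformly in $j$; controlling the two exceptional prime ends $a,b$ and verifying their negligibility is the other delicate point.
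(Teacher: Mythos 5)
Your argument follows the paper's proof essentially step for step: interior equicontinuity reduces the claim to identifying an arbitrary subsequential limit $h'$, the discrete Beurling estimate of Lemma~\ref{lem:beurling} pins down its boundary values on the open arcs, and the maximum principle (with the two endpoints carrying no harmonic measure) forces $h'=h$. The paper implements the boundary step with a smooth exhausting subdomain $\Om^\delta$ and excised neighborhoods of $a$ and $b$ rather than pointwise at prime ends, but that is only bookkeeping, and the delicate point you flag --- converting Carath\'eodory convergence into the statement that the zero data genuinely occupies $\partial\Om_j$ near a fixed point of the arc $ab$ --- is exactly the issue the paper also addresses only briefly (its parenthetical about disappearing fjords).
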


\begin{proof}
Being harmonic with bounded boundary values functions $h_j$ for large $j$ 
are equicontinuous inside $\Omega$ by \cite{\courantetal} or \cite{\verblunsky} 
-- see inequality(\ref{eq:verblunsky}) below.
Thus it is enough to show that any subsequential limit, say $h'$, coincides with $h$.

Fix small $R>0$ such that
two balls $B(a,R)$ and $B(b,R)$ are disjoint,
and denote their union by $W$.
Let $r<2R$ be the distance between the arcs $ab\setminus W$ and $ba\setminus W$.
Take $\delta<r/2$ and let $\Om^\delta$ be a subdomain of $\Om$ with smooth boundary 
which is $\delta/2$-close to the boundary of $\Om$.
Let $a^\delta$ and $b^\delta$ be two points on $\partial\Om^\delta$
which are $\delta/2$ close to $a$ and $b$ correspondingly.
Carath\'eodory convergence implies that for large enough $j$
subdomain $\Om^\delta$ is contained in $\Om_j$ and
its boundary $\partial\Om^\delta$ is contained in the $\delta$-neighborhood
of the boundary of $\Om_j$.
(The opposite inclusion might fail 
if $\Om_j$ contains long fjords of fixed diameter, which
however disappear in the Carath\'eodory limit if their width tends to zero).

Denote by $a^\delta b^\delta$ and $b^\delta a^\delta$ the counterclockwise
boundary arcs of $\partial\Om^\delta$.
By the Lemma~\ref{lem:beurling} for sufficiently large $j$ the function $h_j$
is at most $\epsilon(\delta/r)$ on $a^\delta b^\delta\setminus W$,
on the other hand having non-negative boundary values it is non-negative there:
$$0\le h_j\le\epsilon(\delta/r)~\mathrm{on}~a^\delta b^\delta\setminus W,$$
and similarly
$$1-\epsilon(\delta/r)\le h_j\le1~\mathrm{on}~b^\delta a^\delta\setminus W.$$
Being a subsequential limit, $h'$ also satisfies these inequalities.
Sending $\delta$ to zero, we deduce that
$h'$ has boundary values $0$ on $ab\setminus W$ and $1$ on $ba\setminus W$.

When $R$ goes to zero, so does $r<2R$, and we see that
$h'$ has boundary values $0$ on $ab$ and $1$ on $ba$,
and being bounded coincides with $h$.
\end{proof}

Let $Q$ be a square with side $\mesh L$ on the lattice $\mesh \ZZ^2$
and denote by  $9Q$ a nine times bigger square.
We will need the following continuity estimate
from Verblunsky's \cite{\verblunsky}:
if a function $h$ is discrete harmonic in a square $2Q$,
then on the square $Q$
\begin{equation}\label{eq:verblunsky}
\sup_{Q}\abs{\nabla h}\le \frac{\const}{L}\,\sup_{\partial9Q}\abs{h}.
\end{equation}

Let $G(x,y)=G_{9Q}(x,y)$ denote the discrete Green's function for the square $9Q$,
with $\Delta_x G(x,y)=\Delta_y G(x,y)=\delta_{x-y}$ and vanishing on the boundary of $9Q\times9Q$.
By $G_\CC$ we denote the discrete Green's function in the whole plane,
normalized so that $G_\CC(y,y)=0$. 
By the equation (9.6) in the paper \cite{\mccreawhipple} of McCrea and Whipple,
it satisfies 
\begin{equation}\label{eq:glog}
G_\CC(x,y)=\frac1\pi\log{\frac{\abs{x-y}}{\mesh}}+C+\oo\br{\frac{\mesh}{|x-y|}},~~~\frac{x-y}{\mesh}\to\infty,
\end{equation}
for a specific constant $C$ (which can be written in terms of the Euler constant).

We will need the following integral estimate of the gradient of $G$ in terms of $G$ itself:

\begin{lem}\label{lem:green}
There is a constant $\const$ independent of $L$ 
such that for every $y\in 9Q$ one has
\begin{equation}\label{eq:green}
\sum_{x\in Q}\abs{\nabla G(x,y)}<\frac{\const}{L}\sum_{x\in Q}\abs{G(x,y)}.
\end{equation}
\end{lem}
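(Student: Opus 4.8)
The plan is to reduce the statement to a pointwise comparison between $\abs{\nabla G(\cdot,y)}$ and $\abs{G(\cdot,y)}$ and then sum. Fix $y\in 9Q$ and write $g:=G(\cdot,y)$, which is non-positive on $9Q$ and discrete harmonic on $9Q\setminus\{y\}$. For $x\in Q$ set $\rho(x):=\min\br{\abs{x-y},\,\dist{x,\partial 9Q}}$; since $Q$ and $9Q$ are concentric, $\dist{x,\partial 9Q}\ge 4\mesh L$ for every $x\in Q$, whence $\rho(x)\ge\min\br{\abs{x-y},4\mesh L}$. On the ball $B\br{x,\tfrac34\rho(x)}$ the function $g$ is harmonic and of one sign, so the interior gradient estimate (a rescaled version of (\ref{eq:verblunsky}), applied on a square of side $\sim\rho(x)$ about $x$) together with the discrete Harnack inequality for the non-negative harmonic function $-g$ (e.g.\ via the random-walk representation, cf.\ \cite{\kestenwalks}) yields the pointwise bound
\[
\abs{\nabla g(x)}\ \le\ \frac{\const\,\mesh}{\rho(x)}\,\sup_{B(x,\rho(x)/2)}\abs{g}\ \le\ \frac{\const\,\mesh}{\rho(x)}\,\abs{g(x)},
\]
with constants independent of $L$, $y$ and $\mesh$.

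First I would dispose of the ``bulk''. For every $x\in Q$ with $\abs{x-y}\ge\tfrac1{10}\mesh L$ one has $\rho(x)\ge\tfrac1{10}\mesh L$, so the pointwise bound gives $\abs{\nabla g(x)}\le\frac{\const}{L}\abs{g(x)}$, and summing over all such $x$ already produces a term bounded by $\frac{\const}{L}\sum_{Q}\abs{G(\cdot,y)}$. This step is entirely soft and, in particular, is valid for every position of $y$, including $y$ close to $\partial 9Q$ where $g$ is uniformly small: the estimate is insensitive to the overall size of $g$.

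It remains to treat the points $x\in Q$ with $\abs{x-y}<\tfrac1{10}\mesh L$. Such points occur only when $y$ lies within $\tfrac1{10}\mesh L$ of $Q$, and since $Q$ is the concentric central ninth of $9Q$ this forces $\dist{y,\partial 9Q}\ge 3\mesh L$. In this regime the asymptotics (\ref{eq:glog}) of the whole-plane Green's function $G_\CC$, together with the fact that the harmonic correction $G_\CC(\cdot,y)-G(\cdot,y)$ (harmonic in $9Q$, with boundary data $\tfrac1\pi\log(\abs{\cdot-y}/\mesh)+O(1)=\tfrac1\pi\log L+O(1)$) lies between the extreme values of that data, give the two-sided control $\abs{g(x)}=\tfrac1\pi\log\frac{\mesh L}{\abs{x-y}}+O(1)$ for $x$ near $y$. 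Feeding the upper bound into $\abs{\nabla g(x)}\le\frac{\const\mesh}{\abs{x-y}}\abs{g(x)}$ and summing over $\{\abs{x-y}<\tfrac1{10}\mesh L\}$ by dyadic annuli (radius $\sim 2^m\mesh$, carrying $\sim 2^{2m}$ points) bounds this part by $\const\,L$. On the other hand, since $\dist{y,\partial 9Q}\gtrsim\mesh L$, a ball of radius comparable to $\mesh L$ about $y$ meets $Q$ in an order-$L^2$ set on which $\abs{g}\ge c>0$, so $\sum_Q\abs{G(\cdot,y)}\ge c\,L^2$. Combining, this part is again $\le\frac{\const}{L}\sum_Q\abs{G(\cdot,y)}$, which completes the proof.

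The main obstacle is precisely this last, singular, region: near $x=y$ both $\abs{\nabla g}$ and $\abs{g}$ blow up, and a crude per-annulus comparison against the matching annulus on the right-hand side loses a factor, so one genuinely needs the logarithmic profile of $g$ (hence the asymptotics (\ref{eq:glog}) and the control of the harmonic correction) rather than Harnack alone, and one must verify that all constants are uniform in $y\in 9Q$ and in $L$. The soft part of the argument is what makes this tractable: it arranges that the delicate analysis is ever invoked only when $y$ sits well inside $9Q$ (at distance $\gtrsim\mesh L$ from the boundary), where the whole-plane asymptotics are clean, while the awkward possibility of $y$ being simultaneously close to $Q$ and close to $\partial 9Q$ is excluded by the geometry of the concentric squares.
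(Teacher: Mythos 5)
Your proof is correct, and it rests on the same three quantitative ingredients as the paper's argument --- the discrete Harnack inequality, the interior gradient estimate (\ref{eq:verblunsky}), and the logarithmic asymptotics (\ref{eq:glog}) of $G_\CC$ together with control of the harmonic correction $G_\CC^*$ --- but it organizes them around a genuinely different decomposition. The paper splits according to the position of the \emph{pole}: for $y\in9Q\setminus2Q$ it applies Harnack and (\ref{eq:verblunsky}) globally on $Q$, while for $y\in2Q$ it writes $G=G_\CC-G_\CC^*$, proves $\sum_Q\abs{G}\ge\const L^2$, bounds $\sum_Q\abs{\nabla G_\CC^*}\le\const L$ by (\ref{eq:verblunsky}), and bounds $\sum_Q\abs{\nabla G_\CC}\le\const L$ by a monotonicity/telescoping argument along lattice lines through $y$. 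You instead split according to the position of the \emph{summation variable} $x$ relative to $y$, so that a single pointwise comparison $\abs{\nabla G(x,y)}\le\const\,\mesh\,\rho(x)^{-1}\abs{G(x,y)}$ absorbs both of the paper's ``far'' regimes at once; your dyadic-annulus summation then replaces the paper's telescoping bound on $\sum\abs{\nabla G_\CC}$, and your lower bound $\sum_Q\abs{G}\ge\const L^2$ near the pole coincides with the paper's. The net effect is an argument that is more uniform in $y$, at the price of invoking Harnack pointwise rather than once per case. One small point to tighten: at the $O(1)$ lattice points with $\abs{x-y}$ of order $\mesh$, the derivation of your pointwise bound via Harnack and the rescaled gradient estimate does not literally apply, since $G(\cdot,y)$ fails to be discrete harmonic at $y$ and there is no room for a ball of several lattice spacings; there the trivial estimate $\abs{\nabla G(x,y)}\le\abs{G(x,y)}+\abs{G(x',y)}=O(\log L)$ (using the two-sided logarithmic control you already established, valid since in this regime $y$ is at distance at least $3\mesh L$ from $\partial 9Q$) gives the same numerical bound and keeps the innermost annuli's contribution well below $O(L)$, so the conclusion is unaffected.
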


\begin{proof}
By adjusting the constant we can assume that $L$ is large enough.

Suppose first that $y\in2Q$.
Denote by $G_\CC^*(\cdot,y)$ the discrete harmonic function on $9Q$
having the same boundary values as $G_\CC(\cdot,y)$.
We note that
$$G(\cdot,y)=G_\CC(\cdot,y)-G_\CC^*(\cdot,y).$$
By (\ref{eq:glog}) on $\partial9Q$ we have
$$G_\CC^*(\cdot,y)=G_\CC(\cdot,y)>\frac1\pi\log\br{\frac{9-2}2L}+C+\oo\br{\frac1L},$$
and so by the maximum principle the same estimate holds 
for $G_\CC^*(\cdot,y)$ inside $9Q$.
On the other hand,  (\ref{eq:glog}) implies that
for $x\in Q$
$$G_\CC(x,y)<\frac1\pi\log\br{2\sqrt2 L}+C+\oo\br{\frac{\mesh}{\abs{x-y}}}.$$
Combining those inequalities we infer that
for $x\in Q$
$$G(x,y)<\frac1\pi\log\br{2\sqrt2 L}-\frac1\pi\log\br{3\frac12L}+\oo\br{\frac{\mesh}{\abs{x-y}}}
=-\frac1\pi\log{\frac3{2\sqrt2}}+\oo\br{\frac{\mesh}{\abs{x-y}}},$$
and summing over $Q$ (recall that $G$ is negative) we arrive at
\begin{equation}\label{eq:g}
\sum_{x\in Q}\abs{G(x,y)}\ge\const L^2.
\end{equation}

It follows from (\ref{eq:glog}) that $G_\CC^*(\cdot,y)$ is
equal on $\partial9Q$ to a constant function $\frac1\pi\log L+C$ up to an error term of 
$\frac1\pi\log\br{5\frac12\sqrt2}+\oo\br{\frac1L}$.
Therefore by (\ref{eq:verblunsky}) one has
\begin{equation}\label{eq:nablagstar}
\sum_{x\in Q}\abs{\nabla G_\CC^*(x,y)}\le\sum\frac1L \frac1\pi\log\br{5\frac12\sqrt2}+\oo\br{L}=\const L.
\end{equation}

Let $\ell$ be a lattice line through $y$, and $\ell'$ an orthogonal line intersecting
$\ell$ at $x$ and $\partial Q$ at $x'$ and $x''$.
By symmetry the whole plane Green's function $G_\CC(\cdot,y)$ is monotone
on the intervals $[x',x]$ and $[x,x'']$.
So using (\ref{eq:glog}) we estimate
the sum of absolute values of differences of $G$ along this line by
$$
G_\CC(x',y)+G_\CC(x'',y)-2G_\CC(x,y)\le2\log L -2\log\frac{\abs{x-y}}{\mesh}+\const.
$$
Summing this up for all lattice lines $\ell'$ in both directions, we arrive at
\begin{equation}\label{eq:nablagcc}
\sum_{x\in Q}\abs{\nabla G_\CC(x,y)}\le 8 \sum^L_{j=1}\br{\log L-\log j+\const}\le\const L.
\end{equation}

Combining(\ref{eq:g}), (\ref{eq:nablagstar}) and (\ref{eq:nablagcc}) 
we prove the Lemma in the case $y\in 2Q$:
$$\sum\abs{\nabla G}\le\sum\abs{\nabla G_\CC^*}+\sum\abs{\nabla G_\CC}
\le \const L \le \frac{\const}L \sum\abs{G}.$$

It remains to deal with the case $y\in9Q\setminus2Q$.
In this case $G(\cdot,y)$ is discrete harmonic and negative in $2Q$,
so its values on $Q$ are comparable by Harnack's principle
to its value at the center, say $A$.
Using (\ref{eq:verblunsky}) again, we write
$$
\sum\abs{\nabla G}\lesssim\sum\frac{A}{L}\asymp\frac1L \sum\abs{G},
$$
thus proving the Lemma.
\end{proof}

\section{Unpleasant computations for Lemma~\ref{lem:harm}}\label{sec:subharm}

There are several ways to prove equations (\ref{eq:superharm},\ref{eq:subharm})
and  we present not the shortest calculation,
but perhaps the most straight-forward one.

We will prove (\ref{eq:superharm}), the proof
of (\ref{eq:subharm}) is similar.
Let $u$ be the center of some white square.
Denote by $NW$, $NE$, $SE$, $SW$ its corner vertices,
starting from the upper left and going clockwise.
Recall that by Remark~\ref{rem:usualhol}
$$
F(NW)-F(SE)~=~i\,\br{F(NE)-F(SW)}~,
$$
so to prove (\ref{eq:subharm}) it is sufficient to show that
$$
\Delta\HH(W)~=~-\abs{F(NE)-F(SW)}^2~.
$$

To simplify calculations denote $\si:=\exp(-i\pi/8)$.
Denote by  $N$, $E$, $S$, $W$ the centers of bordering edges,
starting from the top and going clockwise.
Assume that the line $\ell(N)$ passes through a unit vector $\alpha$.
With the chosen orientation $\alpha=1$,
but we will leave $\alpha$ a variable 
to be able to compare results for different vertices.
Then the lines $\ell(W)$, $\ell(S)$, $\ell(E)$
pass through the vectors $\alpha\si^2$, $\alpha\si^4$, $\alpha\si^6$
correspondingly.
See Figure~\ref{fig:square}.

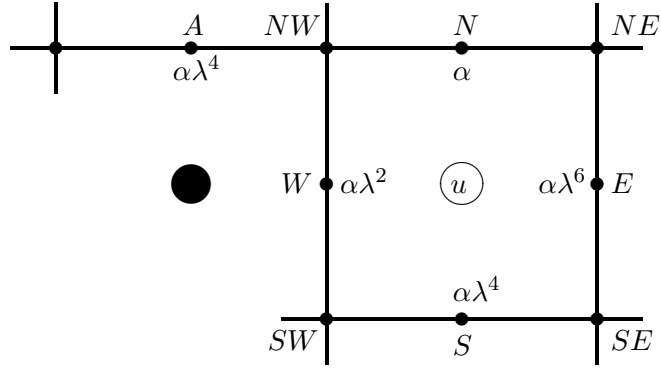
\begin{figure}
{\def\bs{\circle*{3}}
\unitlength=0.6mm
\begin{picture}(150,90)(-65,-15)
{\linethickness{1.pt}
\put(0,-10){\line(0,1){80}}\put(60,-10){\line(0,1){80}}
\put(-10,0){\line(1,0){80}}\put(-70,60){\line(1,0){140}}
\put(-60,50){\line(0,1){20}}}
\put(0,60)\bs\put(0,0)\bs\put(60,0)\bs\put(60,60)\bs\put(-60,60)\bs
\put(-13,-07){$SW$}\put(-14,63){$NW$}\put(63,63){$NE$}\put(63,-07){$SE$}
\put(-10,28){$W$}\put(63,28){$E$}\put(28,63){$N$}\put(28,-08){$S$}
\put(-32,63){$A$}
\put(3,28){$\alpha\si^2$}\put(47,28){$\alpha\si^6$}
\put(28,53){$\alpha$}\put(28,4){$\alpha\si^4$}
\put(-34,53){$\alpha\si^4$}
\put(0,30)\bs\put(30,0)\bs\put(60,30)\bs\put(30,60)\bs\put(-30,60)\bs
\put(-30,30){\circle*{10}}
\put(30,30){\circle{10}}
\put(27.5,28){$u$}
\end{picture}}
\caption{Vertices and edges around a square.
Lines corresponding to the edges
pass through the written vectors.}\label{fig:square}
\end{figure}

First evaluate increment $\partial_{NW}\HH$ of $\HH$ across the vertex $NW$.
Denote by $A$ the center of the edge going left from the vertex $NW$.
Recalling (\ref{eq:proj}), by definition of $\HH$ we write
\begin{align}
\partial_{NW}\HH&=\abs{F(W)}^2-\abs{F(A)}^2\nonumber\\
&=\abs{\Proj{F(NW),\alpha\si^2}}^2-\abs{\Proj{F(NW),\alpha\si^4}}^2\nonumber\\
&=\frac14\abs{F(NW)+\alpha^2\si^4\bar F(NW)}^2-\frac14\abs{F(NW)+\alpha^2\si^8\bar F(NW)}^2\nonumber\\
&=\frac14\br{F(NW)+\alpha^2\si^4\bar F(NW)}\br{\bar F(NW)+\bar\alpha^2\bar\si^4 F(NW)}\nonumber\\
&~~~-\frac14
\br{F(NW)+\alpha^2\si^8\bar F(NW)}\br{\bar F(NW)+\bar\alpha^2\bar\si^8 F(NW)}\nonumber\\
&=\frac14\br{2F(NW)\bar F(NW)+\alpha^2\si^4\bar F^2(NW)+\bar\alpha^2\bar\si^4 F^2(NW)}\nonumber\\
&~~~-\frac14
\br{2F(NW)\bar F(NW)+\alpha^2\si^8 \bar F^2(NW)+\bar\alpha^2\bar\si^8 F^2(NW)}\nonumber\\
&=\frac14
\br{\br{\si^2-\si^6}\alpha^2\si^2 \bar F^2(NW)
+\br{\bar\si^2-\bar\si^6}\bar\alpha^2\bar\si^2 F^2(NW)}\nonumber\\
&=\frac{\sqrt2}4
\br{\alpha^2\si^2 \bar F^2(NW)
+\bar\alpha^2\bar\si^2 F^2(NW)}\nonumber
\end{align}
Writing similarly increments across $SW$, $SE$, $NE$
(with $\alpha$ substituted by $\alpha\si^2$, $\alpha\si^4$, $\alpha\si^6$
correspondingly), we arrive at the following four equations:
\begin{align}
\partial_{NW}\HH&=\frac{\sqrt2}4
\br{\alpha^2\si^2 \bar F^2(NW)
+\bar\alpha^2\bar\si^2 F^2(NW)}\label{eq:nw}\\
\partial_{SW}\HH&=\frac{\sqrt2}4
\br{\alpha^2\si^6 \bar F^2(SW)
+\bar\alpha^2\bar\si^6 F^2(SW)}\label{eq:sw}\\
\partial_{SE}\HH&=\frac{\sqrt2}4
\br{\alpha^2\si^{10} \bar F^2(SE)
+\bar\alpha^2\bar\si^{10} F^2(SE)}\label{eq:se}\\
\partial_{NE}\HH&=\frac{\sqrt2}4
\br{\alpha^2\si^{14} \bar F^2(NE)
+\bar\alpha^2\bar\si^{14} F^2(NE)}\label{eq:ne}
\end{align}
Let us remark that from these equations it becomes
clear that $\HH$ is an appropriate 
discrete version of the primitive $\IM\int F^2 dz$.
Indeed, if $z$ is one of the corners and $v$ is the 
center of the square across that corner, 
denoting $\eta:=(v-u)/\abs{v-u}$ we see that
\begin{align}
(H(v)-H(u))&=
\frac{\sqrt2}4\br{i \bar \eta \bar F^2(z)-i \eta F^2(z)}
\nonumber\\&=
\frac{\sqrt2}2\,\IM\br{\eta F^2(z)}
\nonumber\\&=
\frac{1}{2\mesh}\,\IM\br{ F^2(z)\,(v-u)},\nonumber
\end{align}
and therefore
\begin{equation}\label{eq:hintf}
2\,\mesh\,(H(v)-H(u))=\IM \br{F(z)^2\,(v-u)}.
\end{equation}

Summing the equations (\ref{eq:nw},\ref{eq:sw},\ref{eq:se},\ref{eq:ne}), 
we can write the Laplacian $\Delta\HH(u)$
in terms of the values of $F$ at four neighboring vertices.
But we want to reduce this further to the values of $F$ at two vertices.
Such reduction is possible, since
by discrete analyticity projections of $F(NW)$ on lines
$\ell(N)$ and $\ell(W)$ coincide with those
of $F(NE)$ and $F(SW)$ correspondingly.
Using (\ref{eq:proj}) we can write that as
\begin{align}
F(NW)+\alpha^2\bar F(NW)&=F(NE)+\alpha^2\bar F(NE)~,\nonumber\\
F(NW)+\alpha^2\si^4\bar F(NW)&=F(SW)+\alpha^2\si^4\bar F(SW)~.\nonumber
\end{align}
Subtracting the equations multiplied  by $\si^2$ and $\bar\si^2$
correspondingly we arrive at
$$
(\si^2-\bar\si^2)F(NW)=\si^2F(NE)+\alpha^2\si^2\bar F(NE)-\bar\si^2F(SW)-\alpha^2\si^2\bar F(SW)~,
$$
where $\si^2-\bar\si^2$ simplifies to $-i\sqrt2$:
\begin{equation}
F(NW)=\frac{i}{\sqrt2}\br{\si^2F(NE)+\alpha^2\si^2\bar F(NE)-\bar\si^2F(SW)-\alpha^2\si^2\bar F(SW)}~.\label{eq:renw}
\end{equation}
Writing similarly for $F(SE)$ (with $\alpha\si^4$
substituted for $\alpha$ and $NE$ and $SW$ interchanged) we conclude that
\begin{equation}
F(SE)=\frac{i}{\sqrt2}\br{\si^2F(SW)+\alpha^2\si^{10}\bar F(SW)-\bar\si^2F(NE)-\alpha^2\si^{10}\bar F(NE)}~,
\label{eq:rese}\end{equation}

Now we can sum equations (\ref{eq:nw},\ref{eq:se},\ref{eq:sw},\ref{eq:ne}),
substituting (\ref{eq:renw},\ref{eq:rese}) for values of $F(NW)$ and $F(SE)$:
\begin{align}
\Delta\HH(w)&=\partial_{NW}\HH+\partial_{SW}\HH+\partial_{SE}\HH+\partial_{NE}\HH\nonumber\\
&= 
\frac{\sqrt2}4\Bigg(
\alpha^2\si^2 \br{\,\overline{
\frac{i}{\sqrt2}\br{\si^2F(NE)+\alpha^2\si^2\bar F(NE)-\bar\si^2F(SW)-\alpha^2\si^2\bar F(SW)}}\,}^2
\nonumber\\
&\qquad+\bar\alpha^2\bar\si^2 
\br{\frac{i}{\sqrt2}\br{\si^2F(NE)+\alpha^2\si^2\bar F(NE)-\bar\si^2F(SW)-\alpha^2\si^2\bar F(SW)}}^2
\nonumber\\
&\qquad+
\alpha^2\si^{10} 
\br{\,\overline{\frac{i}{\sqrt2}\br{\si^2F(SW)+\alpha^2\si^{10}\bar F(SW)-\bar\si^2F(NE)-\alpha^2\si^{10}\bar F(NE)}}\,}^2
\nonumber\\
&\qquad+\bar\alpha^2\bar\si^{10} 
\br{\frac{i}{\sqrt2}\br{\si^2F(SW)+\alpha^2\si^{10}\bar F(SW)-\bar\si^2F(NE)-\alpha^2\si^{10}\bar F(NE)}}^2
\nonumber\\
&\qquad+\alpha^2\si^6 \bar F^2(SW)
+\bar\alpha^2\bar\si^6 F^2(SW)
+\alpha^2\si^{14} \bar F^2(NE)
+\bar\alpha^2\bar\si^{14} F^2(NE)\Bigg)
\nonumber
\end{align}
When we plug in $\alpha=1$ and recall that $\si=\exp(-i\pi/8)$,
in particular $\si^8=-1$,
there are many cancelations in the right hand side,
which eventually simplifies:
\begin{align}
\dots&=\frac{\sqrt2}4\bigg(
\si^2 \br{{
\,-\frac{i}{\sqrt2}\br{\bar\si^2\bar F(NE)+\bar \si^2 F(NE)-\si^2\bar F(SW)-\bar \si^2 F(SW)}}\,}^2
\nonumber\\
&\qquad+\bar\si^2 
\br{\frac{i}{\sqrt2}\br{\si^2F(NE)+\si^2\bar F(NE)-\bar\si^2F(SW)-\si^2\bar F(SW)}}^2
\nonumber\\
&\qquad-
\si^{2} 
\br{\,{-\frac{i}{\sqrt2}\br{\bar \si^2\bar F(SW)-\bar \si^{2}F(SW)-\si^2\bar F(NE)+\bar \si^{2} F(NE)}}\,}^2
\nonumber\\
&\qquad-\bar\si^{2} 
\br{\frac{i}{\sqrt2}\br{\si^2F(SW)-\si^{2}\bar F(SW)-\bar\si^2F(NE)+\si^{2}\bar F(NE)}}^2
\nonumber\\
&\qquad+\si^6 \bar F^2(SW)
+\bar\si^6 F^2(SW)
-\si^{6} \bar F^2(NE)
-\bar\si^{6} F^2(NE)\bigg)
\nonumber\\
&=\frac{\sqrt2}4\bigg(
-\frac12\Bbr{\bar\si\bar F(NE)+\bar \si F(NE)-\si^3\bar F(SW)-\bar \si F(SW)}^2
\nonumber\\
&\qquad-\frac12 
\Bbr{\si F(NE)+\si\bar F(NE)-\bar\si^3F(SW)-\si\bar F(SW)}^2
\nonumber\\
&\qquad+\frac12
\Bbr{\bar \si\bar F(SW)-\bar \si F(SW)-\si^3\bar F(NE)+\bar \si F(NE)}^2
\nonumber\\
&\qquad+\frac12 
\Bbr{\si F(SW)-\si\bar F(SW)-\bar\si^3 F(NE)+\si \bar F(NE)}^2
\nonumber\\
&\qquad+\si^6 \bar F^2(SW)
+\bar\si^6 F^2(SW)
-\si^{6} \bar F^2(NE)
-\bar\si^{6} F^2(NE)\bigg)
\nonumber\\
&=\frac{\sqrt2}8\bigg(
F^2(NE)\br{-\bar\si^2-\si^2+\bar\si^2+\bar\si^6-2\bar\si^6}+
\bar F^2(NE)\br{-\bar\si^2-\si^2+\si^6+\si^2-2\si^6}\nonumber\\
&\qquad+F^2(SW)\br{-\bar\si^2-\bar\si^6+\bar\si^2+\si^2+2\bar\si^6}+
\bar F^2(SW)\br{-\si^6-\si^2+\bar\si^2+\si^2+2\si^6}\nonumber\\
&\qquad+2F(NE)F(SW)\br{\bar\si^2+\bar\si^2-\bar\si^2-\bar\si^2}
+2\bar F(NE)\bar F(SW)\br{\si^2+\si^2-\si^2-\si^2}\nonumber\\
&\qquad+2F(NE)\bar F(NE)\br{-\bar\si^2-\si^2-\si^2-\bar\si^2}
+2F(SW)\bar F(SW)\br{-\si^2-\bar\si^2-\bar\si^2-\si^2}\nonumber\\
&\qquad+2F(NE)\bar F(SW)\br{\si^2+\si^2+\bar\si^2+\bar\si^2}
+2F(SW)\bar F(NE)\br{\bar\si^2+\bar\si^2+\si^2+\si^2}\bigg)\nonumber\\
&=-\frac{\sqrt2}84\sqrt2\br{F(NE)-F(SW)}\br{\bar F(NE)-\bar F(SW)}
=-\abs{F(NE)-F(SW)}^2~.
\nonumber
\end{align}
This finishes the proof of Lemma.

\bibliographystyle{plain}

\begin{thebibliography}{10}

\bibitem{MR998375}
Rodney~J. Baxter.
\newblock {\em Exactly solved models in statistical mechanics}.
\newblock Academic Press Inc. [Harcourt Brace Jovanovich Publishers], London,
  1989.
\newblock Reprint of the 1982 original.

\bibitem{54.0486.01}
R.~Courant, K.~Friedrichs, and H.~Lewy.
\newblock {{\"{U}}ber die partiellen {D}ifferenzengleichungen der
  mathematischen {P}hysik.}
\newblock {\em Math. Ann.}, 100:32--74, 1928.

\bibitem{grimmett-bookrc}
Geoffrey Grimmett.
\newblock {\em The Random-Cluster Model}, volume 333 of {\em Grundlehren der
  Mathematischen Wissenschaften [Fundamental Principles of Mathematical
  Sciences]}.
\newblock Springer-Verlag, Berlin, 2006.

\bibitem{kaufmanonsageriv}
Bruria Kaufman and Lars Onsager.
\newblock {C}rystal statistics. {IV}. {L}ong-range order in a binary crystal.
\newblock unpublished, 1950.

\bibitem{Kenyon-conformal}
Richard Kenyon.
\newblock Conformal invariance of domino tiling.
\newblock {\em Ann. Probab.}, 28(2):759--795, 2000.

\bibitem{0626.60067}
Harry Kesten.
\newblock {Hitting probabilities of random walks on ${\mathbb Z}\sp d$.}
\newblock {\em Stochastic Processes Appl.}, 25:165--184, 1987.

\bibitem{MR2044671}
Gregory~F. Lawler, Oded Schramm, and Wendelin Werner.
\newblock Conformal invariance of planar loop-erased random walks and uniform
  spanning trees.
\newblock {\em Ann. Probab.}, 32(1B):939--995, 2004.

\bibitem{MR0069895}
Jacqueline Lelong-Ferrand.
\newblock {\em Repr\'esentation conforme et transformations \`a int\'egrale de
  {D}irichlet born\'ee}.
\newblock Gauthier-Villars, Paris, 1955.

\bibitem{mccoywu-book}
B.~M. Mc{C}oy and T.~T. Wu.
\newblock {\em The two-dimensional {I}sing model}.
\newblock Harvard University Press, Cambridge, Massachusetts, 1973.

\bibitem{MR0002733}
W.~H. McCrea and F.~J.~W. Whipple.
\newblock Random paths in two and three dimensions.
\newblock {\em Proc. Roy. Soc. Edinburgh}, 60:281--298, 1940.

\bibitem{MR2280251}
V.~Riva and J.~Cardy.
\newblock Holomorphic parafermions in the {P}otts model and stochastic
  {L}oewner evolution.
\newblock {\em J. Stat. Mech. Theory Exp.}, (12):P12001, 19 pp. (electronic),
  2006.

\bibitem{smirnov-perc}
Stanislav Smirnov.
\newblock Critical percolation in the plane.
\newblock preprint, 2001.

\bibitem{0985.60090}
Stanislav Smirnov.
\newblock {Critical percolation in the plane: Conformal invariance, Cardy's
  formula, scaling limits.}
\newblock {\em C. R. Math. Acad. Sci. Paris}, 333(3):239--244, 2001.

\bibitem{smirnov-icm}
Stanislav Smirnov.
\newblock {Towards conformal invariance of 2D lattice models.}
\newblock {Sanz-Sol\'e, Marta (ed.) et al., Proceedings of the international
  congress of mathematicians (ICM), Madrid, Spain, August 22--30, 2006. Volume
  II: Invited lectures, 1421-1451. Z\"urich: European Mathematical Society
  (EMS) }, 2006.

\bibitem{smirnov-fk2}
Stanislav Smirnov.
\newblock Conformal invariance in random cluster models. {II}. {S}caling limit
  of the interface.
\newblock preprint, 2007.

\bibitem{smirnov-fk3}
Stanislav Smirnov.
\newblock Conformal invariance in random cluster models. {III}. {F}ull scaling
  limit.
\newblock in preparation, 2007.

\bibitem{0034.36303}
Samuel Verblunsky.
\newblock {Sur les fonctions preharmoniques.}
\newblock {\em Bull. Sci. Math., II. Ser.}, 73:148--152, 1949.

\bibitem{MR0051740}
C.~N. Yang.
\newblock The spontaneous magnetization of a two-dimensional {I}sing model.
\newblock {\em Physical Rev. (2)}, 85:808--816, 1952.

\end{thebibliography}

\end{document}